\theoremstyle{plain}
\newtheorem{theorem}{Theorem}[section]
\newtheorem{lemma}[theorem]{Lemma}
\theoremstyle{definition}
\newtheorem{definition}[theorem]{Definition}
\theoremstyle{remark}
\def\ie{{\em i.e.,}\xspace}
\def\eg{{\em e.g.,}\xspace}
\def\cf{{\em cf.}\xspace}
\def\wrt{{\em w.r.t.}\xspace}
\DeclareMathOperator*{\argmax}{\arg\!\max}
\DeclareMathOperator*{\argmin}{\arg\!\min}
	\definecolor{sthlmLightBlue}{RGB}{214,237,252} 
	\definecolor{sthlmBlue}{RGB}{0,110,191} 
	\definecolor{sthlmLightGreen}{RGB}{213,247,244} 
	\definecolor{sthlmGreen}{RGB}{0,134,127} 
	\definecolor{sthlmLightGrey}{RGB}{213,217,225} 
	\definecolor{sthlmGrey}{RGB}{245,243,238} 
	\definecolor{sthlmDarkGrey}{RGB}{51,51,51} 
	\definecolor{sthlmLightOrange}{RGB}{255,215,210} 
	\definecolor{sthlmOrange}{RGB}{221,74,44} 
	\definecolor{sthlmLightPurple}{RGB}{241,230,252} 
	\definecolor{sthlmPurple}{RGB}{93,35,125} 
	\definecolor{sthlmLightRed}{RGB}{254,222,237} 
	\definecolor{sthlmRed}{RGB}{196,0,100} 
	\definecolor{sthlmYellow}{RGB}{252,191,10} 
  \newcommand{\myrowcolour}{\rowcolor[gray]{0.925}}
  \newcommand{\highest}[1]{\textcolor{sthlmRed}{\textbf{#1}}}%
\definecolor{pink}{rgb}{0.858, 0.188, 0.478}
\newcommand{\persComment}[3]{
  \ifmmode
  \text{\textcolor{#3}{[#2] #1}}
  \else
  \textcolor{#3}{[#2] \em #1}
  \fi
}
\newcommand{\new}[1]{
  #1
}
\newcommand{\old}[1]{
}
\icmltitlerunning{Solving Hierarchical Information-Sharing Dec-POMDPs}
\begin{document}

\twocolumn[
\icmltitle{Solving Hierarchical Information-Sharing Dec-POMDPs: \\ An Extensive-Form Game Approach}



\icmlsetsymbol{equal}{*}

\begin{icmlauthorlist}
\icmlauthor{Johan Peralez}{citi}
\icmlauthor{Aurélien Delage}{citi}
\icmlauthor{Olivier Buffet}{inria nancy}
\icmlauthor{Jilles S. Dibangoye}{groningen}
\end{icmlauthorlist}

\icmlaffiliation{citi}{Université de Lyon, INSA Lyon and Inria, CITI, F-69000 Lyon}
\icmlaffiliation{inria nancy}{Université de Lorraine, CNRS, INRIA, LORIA, F-54000 Nancy}
\icmlaffiliation{groningen}{
Bernoulli Institute, University of Groningen,
Nijenborgh 4, NL-9747 AG Groningen, Netherlands}

\icmlcorrespondingauthor{Jilles S. Dibangoye}{j.s.dibangoye@rug.nl}

\icmlkeywords{Machine Learning, ICML}

\vskip 0.3in
]



\printAffiliationsAndNotice 

\begin{abstract}
%
A recent theory shows that a multi-player decentralized partially observable Markov decision process can be transformed into an equivalent single-player game, enabling the application of \citeauthor{bellman}'s principle of optimality to solve the single-player game by breaking it down into single-stage subgames. However, this approach entangles the decision variables of all players at each single-stage subgame, resulting in backups with a double-exponential complexity. This paper demonstrates how to disentangle these decision variables while maintaining optimality under hierarchical information sharing, a prominent management style in our society. To achieve this, we apply the principle of optimality to solve any single-stage subgame by breaking it down further into smaller subgames, enabling us to make single-player decisions at a time. Our approach reveals that extensive-form games always exist with solutions to a single-stage subgame, significantly reducing time complexity. Our experimental results show that the algorithms leveraging these findings can scale up to much larger multi-player games without compromising optimality.
\end{abstract}

The multi-player decentralized partially observable Markov decision process (Dec-POMDP) is a general game-theoretic setting for decision-making by a team of collaborative players \citep{amato2013decentralized}. In this multi-player game, players must coordinate while they can neither see the actual state of the world nor explicitly share what they see or do with each other due to communication costs, latency, or noise. This so-called silent coordination dilemma provides a partial explanation of the worst-case complexity results---\ie infinite-horizon cases are undecidable, finite-horizon ones are NEXP-hard, and finding $\epsilon$-approximations remains hard  \citep{639686,rabinovich2003complexity}. Methods for Dec-POMDPs are split between local and global, each with strengths and weaknesses.

Local methods trade global optima, or $\epsilon$-approximations for weaker solution concepts, \eg local optima, Nash equilibria, or any arbitrary feasible solution. While they share core ideas with global methods, their primary focus is on solving relaxations of the original multi-player game, \eg independent planners reason in isolation, policy gradient targets first-order solutions of non-convex functions \citep{Tan:1997:MRL:284860.284934,peshkin2001learning,BonoDMP018}. Of particular attention, local methods using deep neural networks can apply effectively to virtually any non-critical application, \eg online services, logistics, or board games \citep{NIPS2017_68a97503,FoersterFANW18,RashidSWFFW18}. 

On the other hand, in many critical and high-stakes applications, \eg search and rescue, security, and healthcare, global methods can find solutions with the required theoretical guarantees, but scalability remains a significant issue. These algorithms recast the original multi-player game into an equivalent single-player one, which overcomes the silent coordination dilemma and allows the principle of optimality to apply. Intuitively, this principle decomposes the single-player game into single-stage subgames and solves them recursively. Yet, doing so, in return, virtually entangles decision variables of all players at each single-stage subgame, resulting in double-exponential complexity. Because they update decision variables of all players in sync at every single-stage subgame, even a single update can be prohibitively expensive \citep{szer2005optimal,NIPS2013_c9e1074f,nayyar2013decentralized,oliehoek2013sufficient}. To somewhat mitigate this burden, branch-and-bound search algorithms and mixed-integer linear programs were introduced, but the limitation remains \citep{OliehoekSDA10,DibangoyeMC09,DibangoyeABC13,Dibangoye2016}. In many cases, however, real-world environments contain significant structure that can be exploited \citep{amato2013decentralized}.

Indeed, several forms of structure have been investigated in the past---\eg dynamics independence \citep{BeckerZLG04,DibangoyeAD12}, weak-separability \citep{NairVTY05,Jillesaamas14}, and delayed information-sharing \citep{nayyar2010optimal}. Algorithms that use such structures can optimally solve structured multi-player games much faster than generic ones. This paper exploits HIS structure, a dominant management style in our society for corporations, governments, criminal enterprises, armies, and religions. This management style involves each player being aware of what its subordinate knows, and this knowledge is passed down the chain of command.  In other words, player $n$ at the top of the hierarchy knows all that player $n-1$ knows; player $n-2$ knows all that player $n-3$ knows, and so forth. Moreover, HIS is equivalent to one-sidedness when only two players are involved, which was previously recognized as a tractable structure for two-person partially observable stochastic games \citep{horak2017heuristic,HorBos-aaai19,hadfield2016cooperative,pmlr-v80-malik18a,pmlr-v119-xie20a}. Still, little is known about how HIS affects existing theory and algorithms. 

The main contribution of this paper is the proof that under the HIS assumption, perfect-information extensive-form games always exist with solutions to single-stage subgames, resulting in a significant reduction in time complexity.  When expressed as extensive-form games, one can optimize all decision variables in isolation while preserving optimality, resulting in an exponential drop in time complexity, hence generalizing to multiple players a similar property to that available under one-sidedness \citep{pmlr-v119-xie20a}. To show this result, we apply the principle of optimality to solve any single-stage subgame by breaking it down further into smaller subgames, enabling us to make one-player decisions at a time. In the resulting perfect-information extensive-form game, we exhibit concise representations of states and actions along with \citeauthor{bellman}'s optimality equations to solve the game. Finally, we present a point-based value-iteration algorithm for solving the original multi-player game leveraging HIS properties. Experiments show that algorithms exploiting these findings scale up to much larger multi-player games without compromising optimality.

\section{Background}
\label{sec:background}
This section presents state-of-the-art multi- and single-player formulations for Dec-POMDPs under HIS.
\vspace{-0.5cm}
\paragraph{Notations.} For integers $t_1\leq t_2$, $\kappa_{t_1: t_2}$ is a shorthand for $(\kappa_{t_1},\kappa_{t_1+1},\ldots,\kappa_{t_2})$. Let $\kappa_{t_1: t_2}$ be a complete vector, shorthands $\kappa_{t_1:}$ and $\kappa_{: t_2}$ denote suffix and prefix, respectively. \new{For two variables $a$ and $b$, we denote by $\delta_a^b$ the Kronecker delta, which is $1$ if $a$ equals $b$, and $0$ otherwise.}

\subsection{Multi-Player Formulation}

An $n$-player Dec-POMDP is given by tuple $M \doteq \langle n, X, U, Z, p,r, s_0,\gamma,\ell\rangle$,
where
 $X$ is a finite set of hidden states;
 $U^i$ is the finite actions set for player $i$, where $U = U^1\times \cdots\times U^n$ specifies the set of joint actions $u=(u^1,\ldots,u^n)$;
 $Z^i$ is the finite observation set for player $i$, where $Z = Z^1\times \cdots\times Z^n$ specifies the set of joint observations $z=(z^1,\ldots,z^n)$;
function $p\colon X\times U \to \triangle(X\times Z)$ describes a transition function with conditional probability distribution $p(y,z|x,u)$ defining the probability of transitioning from state $x$ to $y$ after taking joint action $u$ and seeing $z$;
 function $r\colon X\times U \to \mathbb{R}$ is a reward model with $r(x,u)$ being the immediate reward received after taking  joint action $u$ from state $x$; 
 $s_0$ is the initial state distribution, $\gamma$ is the discount factor, and $\ell$ is the number of stages.

In the remainder, we consider $M$ under the HIS assumption.
\new{That is, every player $0<i \leq n$ has instantaneous and cost-free
access to its subordinate's action $u^{i-1}_{\tau-1}$ and observation
$z^{i-1}_\tau$ at every stage $\tau$.
Consequently, there exists a function $\zeta^i$ that maps $z_\tau^i$ to
$\zeta^{i}(z_\tau^{i})=(u_{\tau-1}^{i-1},z_\tau^{i-1})$.}
Player $1$ is at the bottom of the hierarchy, \ie the player whose actions and observations are public to all other players, and player $n$ is at the top of the hierarchy, \ie the player that sees all actions and observations.
These characteristics are embodied in many real-world applications, including autonomous vehicle platooning, assembly line optimization, or railway traffic control.
Consider an autonomous vehicle platooning that relies on a leading vehicle followed by a group of autonomous vehicles; see Figure \ref{fig:example:v2x}.
Autonomous vehicles involved in platooning can exchange information between vehicles using Vehicle-to-Everything (V2X) communications in an HIS fashion \new{\citep{wang2015cooperative}}.
That is the total data transit from each autonomous vehicle $i$ to its following autonomous vehicle $i+1$.
The objective of platoon control is to determine the control input of the following autonomous vehicles so that all the vehicles move at the same speed while maintaining the desired distances between each pair of preceding and following vehicles.
Platooning constitutes an efficient technique for increasing road capacity, reducing fuel consumption, and enhancing driving safety and comfort.

\subsection{Limitations of Multi-Player Formulations} 

\begin{figure*}
\centering
\begin{tikzpicture}[->,>={Stealth[round]}, very thick, every edge quotes/.style = {auto, font=\footnotesize, sloped}]

\foreach \x/\y in {1/~n~,5/i\!\!+\!\!2,7/i\!\!+\!\!1,9/i,13/~1~} { 
  \node[token,minimum size=13pt,anchor=south,scale=1] at (\x,0.2) {$\y$} ;
  \node[] (wt\x) at (\x,-0.15) {\includegraphics[height=15mm]{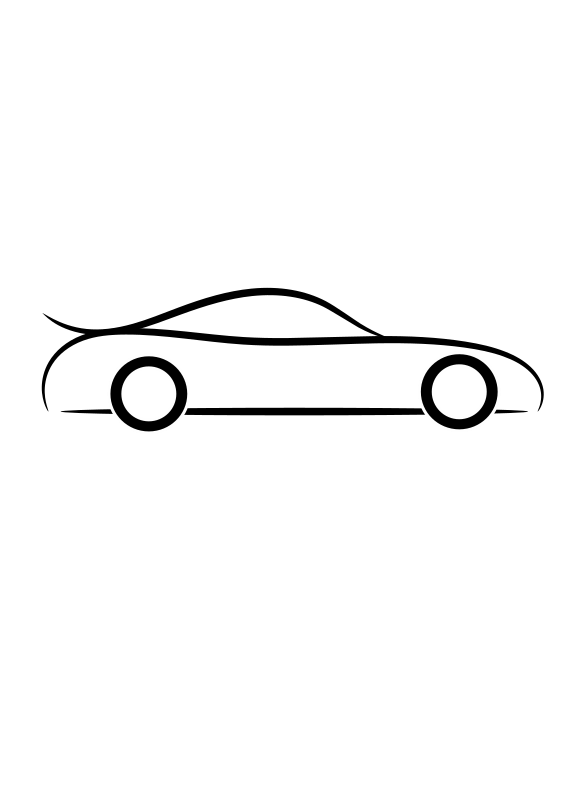}};
};

  \node[] (wtstart) at (3,-0.1) {$\circ\circ\circ$};
  \node[] (wtend) at (11,-0.1) {$\circ\circ\circ$};
  \path[->] (wt5) edge[sthlmGreen,bend right,dashed] (wt1);
  \path[->] (wt7) edge[sthlmGreen] (wt5);
  \path[->] (wt9) edge[sthlmGreen] (wt7);
  \path[->] (wt13) edge[sthlmGreen,bend right, dashed] (wt9);

\draw[] (0.5,-.25) -- (14,-.25);

\end{tikzpicture}
\vspace{-0.75cm}
\caption{V2X information transmitted to vehicles in the platooning.}
\label{fig:example:v2x}
\end{figure*}

While HIS is a dominant management style in our society, little is known about how HIS affects existing theories and algorithms. In general, solving $M$ aims at finding optimal joint policy $a_{0:} \doteq (a_0,\ldots,a_{\ell-1}) $, \ie an $n$-tuple of sequences of private decision rules $a^i_{0:} \doteq (a^i_0,\ldots,a^i_{\ell-1})$, one per player. For each player $i$, private decision rule $a^i_\tau\colon o^i_\tau \mapsto u_\tau^i$ depends on $\tau$-step histories $o^i_\tau \doteq (u^i_{0:\tau-1},z^i_{1:\tau})$, with $0$-step private history being $o^i_0 \doteq \emptyset$. A joint policy is optimal if it maximizes the expected cumulative reward starting at initial state distribution $s_0$ onward and given by
$\upsilon_0^{a_{0:}}(s_0) \doteq \mathbb{E}_{(x_0,o_0)\sim \Pr\{\cdot|s_0, a_{0:} \}}\{ \alpha_0^{a_{0:}}(x_0,o_0)  \}$ where $\alpha_\tau^{a_{\tau:}}(x_\tau,o_\tau) \doteq \mathbb{E}_{(x_{\tau:\ell-1},u_{\tau:\ell-1})\sim \Pr\{\cdot|x_\tau,o_\tau, a_{\tau:} \}}\{ \textstyle{\sum_{t=\tau}^{\ell-1}}~\gamma^{\tau-t}\cdot r(x_t,u_t) \}$ for any game stage $\tau$.
Unfortunately, optimally solving $M$ in its multi-player formulation is non-trivial because of the silent coordination dilemma \citep{rabinovich2003complexity}. Indeed, no statistics on what the players see and do are sufficient to solve $M$ optimally. Private histories $o^i_\tau$ are not geared to perform policy evaluation, let alone policy ordering. In addition, joint histories $o_\tau \doteq (o^1_\tau,\ldots,o^n_\tau)$ cannot ensure policy disentanglement, \ie an individual policy per player.  To better understand this, notice that multi-player coordination is based on common ground, \ie knowledge, beliefs, and assumptions shared among players about the environment at each stage, making it possible to perform policy ordering and disentanglement.  Paradoxically, $M$ aims to coordinate agents without common ground, thus explaining the silent coordination dilemma. The motivation for a single-player reformulation is twofold: first, to provide the central planner with the common ground at the offline planning phase, which achieves policy ordering and disentanglement, then to ease the transfer of theories and algorithms from single- to multi-player formulations.

\subsection{Single-Player Reformulation}

The single-player reformulation describes $M$ from the perspective of an offline central planner \citep{szer2005optimal,nayyar2013decentralized,oliehoek2013sufficient,DibangoyeABC13,Dibangoye2016}. This planner reasons for all players in sync, prescribing a joint decision rule and receiving rewards and public observations.  Game $M$ lies in some underlying state and players have experienced a joint history at each plan-time stage. Unfortunately, the central planner can see neither the state nor the joint history. Yet, it can still prescribe to players what joint decision rule to follow based only upon the joint policy it has prescribed to players so far.  Upon executing the prescribed joint decision rule, the central planner receives the expected immediate reward and the next public observation, \ie the information of player $1$. This process follows at the next plan-time stage, but the game has another underlying state, and players are experiencing another joint history.  This process repeats until the number of stages is exhausted.  The summary of the history of prescribed joint decision rules and received public observations, \ie occupancy state, describes a Markov decision process.  Occupancy states proved to be common ground for coordinating players under the silent coordination dilemma, \ie occupancy states are sufficient statistics for optimal decision-making in Dec-POMDPs \citep{DibangoyeABC13,Dibangoye2016}.

Markov decision process  $M' \doteq \langle S, A, \pmb{T}, \pmb{R},s_0,\gamma,\ell\rangle$ \wrt $M$ consists of
 the occupancy-state space $S$, where occupancy states are conditional probability distribution over hidden states and joint histories;
the action space  $A$ prescribing joint decision rules;
the transition probability $\pmb{T}\colon S \times A\to\triangle(S)$, where 
$\pmb{T}(s_\tau,a_\tau,s_{\tau+1})\doteq \sum_{o,z} \delta_{\rho(s_\tau,a_\tau,z^1)}^{s_{\tau+1}} \sum_{x,y} {s}_\tau(x,o)  \cdot p(y,z|x,a_\tau(o))$,
 where the next occupancy state $s_{\tau+1}\doteq \rho(s_\tau,a_\tau,z^1_{\tau+1})$ follows from taking joint decision rule $a_\tau$ in occupancy state $s_\tau$ and then receiving public observation $z^1_{\tau+1}$, \ie for any arbitrary hidden state $y$ and joint history $(o,u,z)$, we have
${s}_{\tau+1}(y,(o,u,z)) \propto~ \sum_{x} {s}_\tau(x,o) \cdot \delta_{z_{\tau+1}^1}^{z^1}\cdot \delta_{a_\tau(o)}^{u}\cdot p(y,z|x,u)$;
and finally, $\pmb{R}\colon {S}  \times {A} \to \mathbb{R}$ is the expected immediate reward function, \ie
$\pmb{R}({s}_\tau,{a}_\tau) \doteq  \sum_{x,o} {s}_\tau(x,o) \cdot  r(x,{a}_\tau(o)) $.
Recasting the original multi-player game into an equivalent single-player one allows the principle of optimality to solve the single-player game by breaking it down into single-stage subgames and solving them recursively. Consequently, optimally solving $M$ aims at finding solutions $V_\tau^*(s_\tau)$ of single-stage subgame for every occupancy state $s_\tau$, \ie $V_\tau^*(s_\tau) = \max_{a_{\tau}} Q^*_\tau(s_\tau,a_\tau)$, where   $\textstyle Q^*_\tau(s_\tau,a_\tau) \doteq \pmb{R}(s_\tau,a_\tau) + \gamma\sum_{s_{\tau+1}}\pmb{T}(s_\tau,a_\tau,s_{\tau+1}) \cdot  V^*_{\tau+1}(s_{\tau+1})$ with boundary condition $V^*_\ell(\cdot) \doteq 0$. Each occupancy state $s_\tau$ has its corresponding single-stage subgame $G_{s_\tau}  \doteq \langle n, A, Q^*_\tau(s_\tau, \cdot) \rangle $, whose solution is $V^*_\tau(s_\tau)$.  Optimally solving single-stage subgame $G_{s_\tau}$ is significantly more efficient by leveraging the piecewise-linearity and convexity property of action-value functions $Q^*_\tau$. 

\begin{lemma}
\label{lem:action:value:pwlc}
For every game stage $\tau$, the optimal value function  $Q^*_\tau\colon S \times A \to \mathbb{R}$ is piecewise-linear and convex over occupancy states and joint decision rules. Alternatively, there exists a finite collection ${\mathcal{Q}}_\tau \subseteq \{ \beta^{a_{\tau+1:}}_\tau|  a_{\tau+1:}\in A_{\tau+1:}\}$  of action-value functions $\beta^{a_{\tau+1:}}_\tau$ under joint policy $a_{\tau+1:}$, such that: for  occupancy state $s_\tau$ and joint decision rule $a_\tau$,
$${Q}^*_\tau({s}_\tau, {a}_\tau) = \textstyle \max_{\beta_\tau\in {\mathcal{Q}}_\tau}~\mathbb{E}_{(x,o,u)\sim \Pr\{\cdot|{s}_\tau, {a}_\tau \}}\{ \beta_\tau(x,o,u)  \}$$
$$\beta^{a_{\tau+1:}}_\tau(x,o,u) =  r(x,u) +\textstyle\gamma\mathbb{E}_{(y,z)\sim p(\cdot|x,u)}\{  \alpha^{a_{\tau+1:}}_{t+1}(y,(o,z)) \}$$ with boundary condition $\alpha^{\cdot}_{\ell}(\cdot) =\beta^{\cdot}_{\ell}(\cdot) \doteq 0$.
\end{lemma}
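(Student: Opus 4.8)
The plan is to proceed by backward induction on the stage $\tau$, establishing simultaneously that the optimal value admits a finite max-of-linear form $V^*_\tau(s_\tau) = \max_{a_{\tau:}} \mathbb{E}_{(x,o)\sim s_\tau}\{\alpha^{a_{\tau:}}_\tau(x,o)\}$ over occupancy states, and that $Q^*_\tau$ enjoys the claimed $\beta$-representation. The base case $\tau=\ell$ is immediate from the boundary condition $V^*_\ell\equiv 0$, taking $\mathcal{Q}_\ell=\{0\}$. Before the induction I would record the bridging identity $\alpha^{a_{\tau:}}_\tau(x,o)=\beta^{a_{\tau+1:}}_\tau(x,o,a_\tau(o))$, obtained by unrolling the definition of $\alpha^{a_{\tau:}}_\tau$ by one step and matching it against the definition of $\beta^{a_{\tau+1:}}_\tau$; this ties the value-to-go vectors to the action-value vectors.

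For the inductive step, assume the representation at $\tau+1$. I would expand the backup $Q^*_\tau(s_\tau,a_\tau)=\pmb{R}(s_\tau,a_\tau)+\gamma\sum_{z^1}\Pr\{z^1\mid s_\tau,a_\tau\}\,V^*_{\tau+1}(\rho(s_\tau,a_\tau,z^1))$, substitute the inductive form of $V^*_{\tau+1}$, and fold each public-observation probability into its successor to turn every summand into $\max_{a_{\tau+1:}}\langle\tilde s_{\tau+1}(z^1),\alpha^{a_{\tau+1:}}_{\tau+1}\rangle$, where $\tilde s_{\tau+1}(z^1)$ is the unnormalized successor conditioned on $z^1$. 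Since $\pmb{R}(s_\tau,a_\tau)$ is independent of the continuation policy it can be carried inside the maximization; substituting the definition of $\tilde s_{\tau+1}(z^1)$ and applying the bridging identity then collapses the bracket into $\sum_{x,o}s_\tau(x,o)\,\beta^{a_{\tau+1:}}_\tau(x,o,a_\tau(o))$, which equals $\mathbb{E}_{(x,o,u)\sim\Pr\{\cdot\mid s_\tau,a_\tau\}}\{\beta^{a_{\tau+1:}}_\tau(x,o,u)\}$ because the rule $a_\tau$ places all mass on $u=a_\tau(o)$.

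The step I expect to be the crux is interchanging the maximization over continuation policies with the sum over public observations, i.e. justifying $\sum_{z^1}\max_{a_{\tau+1:}}\langle\tilde s_{\tau+1}(z^1),\alpha^{a_{\tau+1:}}_{\tau+1}\rangle=\max_{a_{\tau+1:}}\sum_{z^1}\langle\tilde s_{\tau+1}(z^1),\alpha^{a_{\tau+1:}}_{\tau+1}\rangle$. The $\geq$ direction is trivial; the $\leq$ direction is where the structure enters: every history carried by $\tilde s_{\tau+1}(z^1)$ records the public observation $z^1_{\tau+1}=z^1$, so the supports for distinct $z^1$ are disjoint, and a continuation policy may be chosen independently on each branch and stitched into a single global policy $a_{\tau+1:}$ without conflict, letting me pull one $\max_{a_{\tau+1:}}$ outside the sum.

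Finally I would set $\mathcal{Q}_\tau\doteq\{\beta^{a_{\tau+1:}}_\tau\mid a_{\tau+1:}\in A_{\tau+1:}\}$, finite because $A$ and the horizon are finite, so there are finitely many continuation policies. The resulting expression $Q^*_\tau(s_\tau,a_\tau)=\max_{\beta_\tau\in\mathcal{Q}_\tau}\mathbb{E}\{\beta_\tau(x,o,u)\}$ is a maximum of finitely many functions, each linear in the joint distribution $\Pr\{\cdot\mid s_\tau,a_\tau\}$ and hence linear in $s_\tau$ for fixed $a_\tau$ and in $a_\tau$ for fixed $\beta_\tau$; this yields piecewise-linearity and convexity over occupancy states and joint decision rules, and closes the induction by defining $\alpha^{a_{\tau:}}_\tau(x,o)=\beta^{a_{\tau+1:}}_\tau(x,o,a_\tau(o))$ to recover the $V^*_\tau$ representation from $V^*_\tau(s_\tau)=\max_{a_\tau}Q^*_\tau(s_\tau,a_\tau)$.
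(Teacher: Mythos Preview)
The paper states this lemma without proof; it is treated as background inherited from the occupancy-state MDP literature cited in the paper. Your backward-induction argument is correct and is the standard route to this result. The bridging identity $\alpha^{a_{\tau:}}_\tau(x,o)=\beta^{a_{\tau+1:}}_\tau(x,o,a_\tau(o))$ is exactly the one-step unrolling of the expected return, and your handling of the crux---exchanging $\max_{a_{\tau+1:}}$ with $\sum_{z^1}$---is sound in this setting: under HIS every player's private history at stage $\tau{+}1$ determines the public observation $z^1_{\tau+1}$ (via the maps $\zeta^i$ applied down the hierarchy), so the per-$z^1$ branches partition each player's private-history space, and the per-branch maximizers can be stitched into a single continuation policy without conflict. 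This is the one place where the stochastic-transition formulation used here (branching on the public observation $z^1$) requires an argument beyond the purely deterministic-transition version, and you supply it correctly.
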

Lemma \ref{lem:action:value:pwlc} allows us to optimally solve a single-stage subgame $G_{s_\tau}$ by taking the best among solutions of single-stage subgames $G_{s_\tau}^{\beta_\tau}  \doteq \langle n, A, Q_{\beta_\tau}(s_\tau, \cdot) \rangle$ induced by action-value function $\beta_\tau\in \mathcal{Q}_\tau$ under a fixed joint policy, where $Q_{\beta_\tau}(s_\tau, \cdot) \colon a_\tau \mapsto \mathbb{E}_{(x,o,u)\sim \Pr\{\cdot|s_\tau,a_\tau\}}\{\beta_\tau(x,o,u)\}$. In particular, the linearity of $Q_{\beta_\tau}(s_\tau, \cdot)$ over joint decision rules will play a crucial role in disentangling decision variables.

\subsection{Limitations of Single-Player Reformulations}
The single-player reformulation applies under HIS, but the curse of dimensionality restricts its scalability in the face of games with many players.  To better understand this, notice that the complexity of optimally solving a single-player reformulation depends on two operators: the point-based backup operator, which optimally solves single-stage subgame $G_{s_\tau}^{\beta_\tau}$, and the estimation operator, which updates all decision variables involved in the common ground, \ie occupancy states. In either case, the single-player reformulation is not geared to exploit HIS. State-of-the-art approaches to solving  $G_{s_\tau}^{\beta_\tau}$  perform either brute-force or implicit enumeration and evaluation of double-exponentially many joint decision rules \citep{OliehoekSDA10,DibangoyeMC09,DibangoyeABC13,Dibangoye2016}. This provides an intuitive explanation for the negative complexity results: optimally solving $G_{s_\tau}^{\beta_\tau}$ is NP-hard, and finding $\epsilon$-approximations remains hard \citep{Tsitsiklis84}. The estimation operator also suffers from the curse of dimensionality.  Indeed, the number of decision variables of all players in the common ground under the silent coordination dilemma grows exponentially with time and team size. In this paper, we investigate the following question.

\tikzstyle{mybox} = [draw=black, very thick, rectangle, rounded corners, inner ysep=5pt, inner xsep=5pt]
\vspace{10pt}
\begin{tikzpicture}
\node [mybox] (box){
\begin{minipage}{.96\linewidth}
\quad\emph{How can we improve the representations of common ground and Bellman optimality equations to scale-up point-based backup and estimation operators to optimally solving $G_{s_\tau}^{\beta_\tau}$and eventually $M'$ (resp. $M$) under HIS?}
\end{minipage}
};
\end{tikzpicture}

\section{Hierarchical Information Sharing}

This section explores the ramifications of HIS assumption in achieving an optimal solution for a single-stage subgame, specifically, $G_{s_\tau}^{\beta_\tau}$. 

\subsection{From Single-Stage to Extensive-Form Games}
%
While the principle of optimality allows us to break down the single-player reformulation $M'$ into smaller subgames $G_{s_\tau}^{\beta_\tau}$ per stage, an alternative approach is to segment single-stage subgames $G_{s_\tau}^{\beta_\tau}$ per player further. That allows the centralized planner to act sequentially for each player, starting from player $1$ up to player $n$. In addition, instead of choosing a decision rule for each player based on the current occupancy state and decision rules selected thus far, the planner can independently branch over each history that HIS makes available to the current player, without compromising optimality. A formal description of this process follows. 

 Starting from player $1$ at the bottom of the hierarchy, \cf Figure \ref{fig:ck:game:tree}, the planner chooses action $u^1_\tau$ according to its total available information $\varsigma^1_\tau= (s_\tau, o_\tau^1)$. It then moves to player $2$, the next player in the reversed order of the hierarchy, but now it randomly lands on total available information $\varsigma^2_\tau= (\varsigma_\tau^1, u^1_\tau, o_\tau^2)$ and chooses action $u^2_\tau$. The process continues until the planner reaches player $n$ at the top of the hierarchy, where it randomly lands on total available information $\varsigma^n_\tau= (\varsigma_\tau^{n-1}, u^{n-1}_\tau, o_\tau^n)$ and chooses action $u_\tau^n$ and receives expected rewards $R(\varsigma_\tau^n, u_\tau^n) \doteq \mathbb{E}_{x\sim \Pr\{\cdot|\varsigma_\tau^n,u_\tau^n\}}\{\beta_\tau(x,o,u)\}$ upon taking action $u_\tau^n$ in information state $\varsigma_\tau^n$. Upon acting sequentially for $i$ players, the total information available to the planner denoted $\varsigma_\tau^{i+1} \doteq (\varsigma_\tau^i, u^i_\tau, o_\tau^{i+1})$, is the current occupancy state $s_\tau$ of the single-stage subgame $G_{s_\tau}^{\beta_\tau}$ along with the sequence of actions that the planner selected and private histories that the planner received according to probability $T(\varsigma_\tau^{i+1}|\varsigma_\tau^i) \doteq  \Pr\{o_\tau^{i+1}|s_\tau, o_\tau^1,\ldots,o_\tau^i\} \cdot \delta_{\varsigma_\tau^i, u^i_\tau, o_\tau^{i+1}}^{\varsigma_\tau^{i+1}}$.

\begin{figure}[!htbp]
\centering
 \tikzstyle{every state}=[inner color= white,outer color= white,draw= black,text=black, drop shadow]
		\begin{tikzpicture}[ 
			scale=.75,
			ornode/.style={draw=sthlmRed, regular polygon, regular polygon sides=3, fill=sthlmRed!20}, 
			andnode/.style={draw=sthlmGreen, circle, fill=sthlmGreen!20}] 			

			\node[state, ornode] (q_0) at (0,0) {$\varsigma^1$};
			 
				\node[state, andnode, scale=.85] (q_1) at (-3, -1.5) {$u^1$}; 							
					\node[state, ornode, text=white, scale=.65] (q_11) at (-4.5, -3) {$\varsigma_\tau$};
					\node[scale=.6] at (-4.5, -3) {$\varsigma_{\square}^{2}$};
						\node (q_111) at (-5.25,-4.5) {};
						\node (q_112) at (-3.75,-4.5) {};
						\node (q_113) at (-4.2,-4) {};
						\node (q_114) at (-4.8,-4) {};
					\node[state, ornode, text=white, scale=.65] (q_12) at (-1.5, -3) {$\varsigma_\tau$}; 
					\node[scale=.6] at (-1.5, -3) {$\varsigma_{\textcolor{sthlmRed}{\square}}^{2}$}; 
						\node (q_121) at (-2.25,-4.5) {};
						\node (q_122) at (-0.75,-4.5) {};
						\node (q_123) at (-1.2,-4) {};
						\node (q_124) at (-1.8,-4) {};
					\node (q_13) at (-3.3, -2.5) {}; 
					\node (q_14) at (-2.7, -2.5) {}; 
					\node at (-3,-5) {$\cdots$};
					\node at (-3,-3) {$\cdots$};
					
				\node[state, andnode, scale=.85] (q_2) at (3, -1.5) {$u^1$}; 
					\node[state, ornode, text=white, scale=.65] (q_21) at (4.5, -3) {$\varsigma_\tau$}; 
					\node[scale=.6] at (4.5, -3) {$\varsigma_{\textcolor{sthlmRed}{\square}}^{2}$}; 
						\node (q_211) at (5.25,-4.5) {};
						\node (q_212) at (3.75,-4.5) {};
						\node (q_213) at (4.2,-4) {};
						\node (q_214) at (4.8,-4) {};
					\node[state, ornode, text=white, scale=.65] (q_22) at (1.5, -3) {$\varsigma_\tau$}; 
					\node[scale=.6]at (1.5, -3) {$\varsigma_{\square}^{2}$}; 
						\node (q_221) at (2.25,-4.5) {};
						\node (q_222) at (0.75,-4.5) {};
						\node (q_223) at (1.2,-4) {};
						\node (q_224) at (1.8,-4) {};
					\node (q_23) at (3.3, -2.5) {}; 
					\node (q_24) at (2.7, -2.5) {}; 
					\node at (3,-5) {$\cdots$};
					\node at (3,-3) {$\cdots$};

				\node[] (q_3) at (-.3, -1.5) {}; 
				\node[] (q_4) at (.3, -1.5) {}; 
				\node at (0,-1.5) {$\cdots$};
				\node at (0,-3) {$\cdots$};
				\node at (0,-5) {$\cdots$};

			\path[-, draw=black] (q_0) edge node[fill=white, scale=.7] {$u^1$} (q_1) 
								 edge node[fill=white, scale=.7] {$u^1$} (q_2) 
								 edge (q_3) 
								 edge (q_4) 
								 (q_1) edge node[fill=white, scale=.7] {$\square$} (q_11) %
								 	  edge node[fill=white, scale=.7] {$\textcolor{sthlmRed}{\square}$} (q_12) %
									  edge (q_13)
									  edge (q_14)
								 (q_2) edge node[fill=white, scale=.7] {$\textcolor{sthlmRed}{\square}$} (q_21) %
								 	  edge node[fill=white, scale=.7] {$\square$} (q_22) %
									  edge (q_23)
									  edge (q_24)
									  (q_11) edge (q_111)
									     edge (q_112)
									     edge (q_113)
									     edge (q_114)
									  (q_12) edge (q_121)
									     edge (q_122)
									     edge (q_123)
									     edge (q_124)
									  (q_21) edge (q_211)
									     edge (q_212)
									     edge (q_213)
									     edge (q_214)
									  (q_22) edge (q_221)
									     edge (q_222)
									     edge (q_223)
									     edge (q_224); 
		\end{tikzpicture} 	
		\caption{The search space for a single-stage subgame from a centralized planner acting sequentially one player at a time, illustrated as an AND/OR tree. OR nodes (triangle) represent alternative ways to solve $\bar{G}_{s_\tau}^{\beta_\tau}$. AND nodes (circle) represent subproblem alternatives to be solved. \textbf{Best viewed in color.}
		}
		\label{fig:ck:game:tree}
\end{figure}
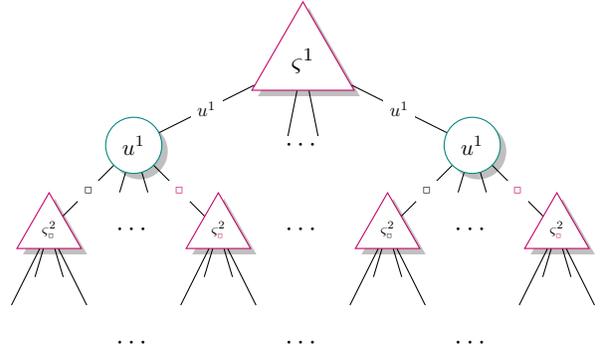

The total available information of the sequential-move central planner when solving a single-stage subgame describes a common-payoff perfect-information extensive-form game 
\citep{shoham2008multiagent}.
\begin{definition}
The common-payoff perfect-information extensive-form game\footnote{\new{
This definition differs from the formal definition of an extensive form game (EFG). To recover the standard EFG formalism, note that: i) agents act in sequence, ii) rewards are zero everywhere except at the leaves of the game tree, iii) stochastic transitions correspond to the presence of a chance player between two agents.
}}
\wrt $G_{s_\tau}^{\beta_\tau}$ is
a tuple $\bar{G}_{s_\tau}^{\beta_\tau} \doteq \langle n, \Sigma, \Psi, T, R \rangle$ where: $n$ is the number of players; $\Sigma$ is the set of nodes that occupancy state $s_\tau$ induces; $\Psi\colon \Sigma \to 2^{\cup_{i=1}^n U^i}$ is a function that specifies the allowed actions from each node $\varsigma\in \Sigma$; transition function $T\colon \Sigma \times (\cup_{i=1}^n U^i) \times \Sigma \to [0,1]$ specifies the probability of a successor node; reward function $R\colon \Sigma\times (\cup_{i=1}^n U^i) \to \mathbb{R}$ specifies the common payoff received upon taking an action in a node.
\end{definition}

 \subsection{Optimally Solving $G_{s_\tau}^{\beta_\tau}$ As $\bar{G}_{s_\tau}^{\beta_\tau}$ } \label{sec:optimally_solving_G}
 
Optimally solving a common-payoff perfect-information extensive-form game aims at finding the action-value functions $\beta^{1:n,*}_\tau$ mapping nodes and actions to optimal values.  Unlike the original single-stage subgame $G_{s_\tau}^{\beta_\tau}$, the perfect information extensive form game  $\bar{G}_{s_\tau}^{\beta_\tau}$ makes the HIS structure explicit. Every time the planner acts on behalf of a player, that player is perfectly informed about all the histories that have previously occurred---\ie all histories of its subordinates. Hence, the total information nodes include the actions the planner selected for the subordinates of the current player, along with the histories of its subordinates. Nonetheless, both games yield the same solution.
 
 \begin{theorem}
 \label{thm:bellman:equations}
 Any optimal solution for $\bar{G}_{s_\tau}^{\beta_\tau}$ is also an optimal solution for  $G_{s_\tau}^{\beta_\tau}$. Besides, the optimal action-value functions $\beta^{1:n,*}_\tau$ of $\bar{G}_{s_\tau}^{\beta_\tau}$ is the solution of the \citeauthor{bellman}'s optimality equations: at  any $i$,  $\varsigma_\tau^i$, and  $u_\tau^i$,
\begin{align*}
\beta^{i,*}_\tau(\varsigma_\tau^i, u^i_\tau) &=  \mathbb{E}_{\varsigma_\tau^{i+1}\sim T(\cdot|\varsigma_\tau^i, u^i_\tau) }\{  \max_{u^{i+1}_\tau}\beta_\tau^{i+1,*}(\varsigma_\tau^{i+1}, u_\tau^{i+1})\},
\end{align*}
with boundary condition  $\beta_\tau^{n,*}\colon (\varsigma_\tau^n,u^n_\tau) \mapsto R( \varsigma_\tau^n,u^n_\tau)$. Also, greedy decision rule $a^{i,*}_\tau$ for any player $i$ at $o^i_\tau$ is: 
\begin{align*}
a^{i,*}_\tau(o^i_\tau) &\textstyle\in \argmax_{u^i_\tau} \beta^{i,*}_\tau(\varsigma^i_\tau, u^i_\tau), 
\end{align*}
where $\varsigma^i_\tau \doteq   \langle s_\tau, o^{1:i}_\tau,a^{1:i-1,*}_\tau(o^{1:i-1}_\tau) \rangle$.
 \end{theorem}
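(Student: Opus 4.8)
The plan is to work directly with the linear form of the single-stage objective, $Q_{\beta_\tau}(s_\tau, a_\tau) = \sum_{x,o} s_\tau(x,o)\,\beta_\tau(x,o,a_\tau(o))$, and to re-express the maximization $\max_{a_\tau} Q_{\beta_\tau}(s_\tau,a_\tau)$ as the expected return of the Markov chain $\varsigma^1_\tau \to \varsigma^2_\tau \to \cdots \to \varsigma^n_\tau$ induced by the transition $T$. First I would verify the identity $Q_{\beta_\tau}(s_\tau,a_\tau)=\mathbb{E}\{R(\varsigma^n_\tau, a^n_\tau(o^n_\tau))\}$, where the expectation is taken along the chain started from $\varsigma^1_\tau=(s_\tau,o^1_\tau)$ with $o^1_\tau$ drawn from the marginal of $s_\tau$. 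This follows from the chain rule $\Pr\{o^1_\tau|s_\tau\}\prod_{i=1}^{n-1}\Pr\{o^{i+1}_\tau|s_\tau,o^{1:i}_\tau\}=\Pr\{o_\tau|s_\tau\}$ together with the fact that $\Pr\{x|\varsigma^n_\tau,u^n_\tau\}=\Pr\{x|s_\tau,o_\tau\}$, so that the leaf reward $R$ reassembles exactly the summand $s_\tau(x,o)\,\beta_\tau(x,o,u)$ of $Q_{\beta_\tau}$.

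With the objective written as a chain return, the second step is a backward induction from player $n$ down to player $1$ establishing that the subtree value rooted at each node $\varsigma^i_\tau$, optimized over the tail decision rules $a^{i}_\tau,\ldots,a^n_\tau$, equals $V^i(\varsigma^i_\tau)\doteq \max_{u^i_\tau}\beta^{i,*}_\tau(\varsigma^i_\tau,u^i_\tau)$, with $\beta^{i,*}_\tau$ defined by the stated Bellman recursion. The base case $i=n$ is immediate from the boundary condition $\beta^{n,*}_\tau=R$. For the inductive step I would argue that, conditioned on $\varsigma^i_\tau$ and a choice $u^i_\tau$, nature moves to $\varsigma^{i+1}_\tau\sim T(\cdot|\varsigma^i_\tau,u^i_\tau)$ and the remaining optimization splits independently across the children, giving $V^i(\varsigma^i_\tau)=\max_{u^i_\tau}\mathbb{E}_{\varsigma^{i+1}_\tau\sim T(\cdot|\varsigma^i_\tau,u^i_\tau)}\{V^{i+1}(\varsigma^{i+1}_\tau)\}=\max_{u^i_\tau}\beta^{i,*}_\tau(\varsigma^i_\tau,u^i_\tau)$, which is exactly the claimed equation. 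Evaluating at the root then yields $\max_{a_\tau}Q_{\beta_\tau}(s_\tau,a_\tau)=\mathbb{E}_{o^1_\tau}\{V^1(\varsigma^1_\tau)\}$, the value of $\bar{G}_{s_\tau}^{\beta_\tau}$; moreover the greedy rule $a^{i,*}_\tau(o^i_\tau)\in\argmax_{u^i_\tau}\beta^{i,*}_\tau(\varsigma^i_\tau,u^i_\tau)$ attains $V^i$ at every node by construction, so the joint rule $(a^{1,*}_\tau,\ldots,a^{n,*}_\tau)$ is optimal for $G_{s_\tau}^{\beta_\tau}$, which proves both assertions at once.

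The heart of the argument---and the step I expect to be the main obstacle---is justifying the interchange of maximum and expectation, \ie that the tail optimization genuinely decouples across sibling nodes with no hidden coupling constraint. This is precisely where the HIS assumption must be used. I would lean on the nesting it induces: since $o^{i}_\tau$ determines $o^{1:i-1}_\tau$ through the maps $\zeta^{j}$, the information state $\varsigma^i_\tau$ is a deterministic function of $o^i_\tau$ once the lower greedy rules $a^{1:i-1,*}_\tau$ are fixed, and conversely $\varsigma^i_\tau$ recovers $o^i_\tau$. Consequently distinct children $\varsigma^{i+1}_\tau$ correspond to distinct inputs $o^{i+1}_\tau$ of the decision rule $a^{i+1}_\tau$, so a free choice of $u^{i+1}_\tau$ at one child imposes no constraint at any sibling---exactly the refinement property that turns the seemingly coupled maximization over joint decision rules into an uncoupled, tree-structured dynamic program. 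I would make this explicit by checking that the map $o^i_\tau \mapsto \varsigma^i_\tau$ is well defined and injective on the support, so that optimizing over functions of $o^i_\tau$ is equivalent to optimizing the perfect-information action $u^i_\tau$ at each node $\varsigma^i_\tau$ separately; absent HIS a single lower history could be consistent with several incompatible superior actions, the children would overlap, and the double-exponential entanglement flagged earlier would reappear.
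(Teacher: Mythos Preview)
Your proposal is correct and follows essentially the same approach as the paper: both arguments rewrite $Q_{\beta_\tau}(s_\tau,a_\tau)$ as an expectation over the chain $\varsigma^1_\tau\to\cdots\to\varsigma^n_\tau$, run a backward induction from player $n$ to player $1$, and invoke HIS at the crucial step to show that $o^i_\tau$ determines the node $\varsigma^i_\tau$ (once the lower greedy rules are fixed), so that the maximization over decision rules $a^i_\tau$ decouples into independent per-node action choices. The paper packages the first half as a separate ``sequential-move'' decomposition via auxiliary functions $Q^i_{\beta_\tau}(s_\tau,a^{1:i}_\tau)\doteq\max_{a^{i+1:n}_\tau}Q_{\beta_\tau}(s_\tau,a_\tau)$ before passing to nodes, whereas you go straight to the tree return, but the mathematical content and the role of HIS are identical.
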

 \begin{proof}
 The proof proceeds in two steps. First, it shows that the original game $G_{s_\tau}^{\beta_\tau}$ can alternatively be solved via a sequential-move central planner, which breaks $G_{s_\tau}^{\beta_\tau}$ down into smaller subgames $\langle G_{s_\tau,\emptyset}^{\beta_\tau}, G_{s_\tau, a^1_\tau}^{\beta_\tau}, \ldots, G_{s_\tau, a^{1:n-1}_\tau}^{\beta_\tau}\rangle$, one subgame per player. To this end, recall the goal of optimally solving $G_{s_\tau}^{\beta_\tau}$, \ie finding a joint decision rule which yields the highest performance index,
 $V_{\beta_\tau}(s_\tau) \doteq \max_{a_\tau}~ Q_{\beta_\tau}(s_\tau, a_\tau)$.
 The expansion of joint decision rule $a_\tau$ as a $n$-tuple of private decision rules $(a^1_\tau,a^2_\tau, \ldots, a^n_\tau)$ allows to rewite the objectif of $G_{s_\tau}^{\beta_\tau}$ as follows,  $V_{\beta_\tau}(s_\tau) =\max_{a^1_\tau}\max_{a^2_\tau} \ldots\max_{a^n_\tau}~ Q_{\beta_\tau}(s_\tau, a_\tau)$. Let $Q^i_{\beta_\tau}(s_\tau,\cdot) \colon a^{1:i}_\tau\mapsto \max_{a^{i+1:n}_\tau}~ Q_{\beta_\tau}(s_\tau, a_\tau)$ be a sequential action-value function.
Then, it follows that 
\begin{align*}
V_{\beta_\tau}(s_\tau) &= \max_{a^1_\tau}\max_{a^2_\tau} \ldots\max_{a^n_\tau}~ Q_{\beta_\tau}(s_\tau, a_\tau),\\
				  &=	\max_{a^1_\tau} ~\left[\max_{a^2_\tau} \ldots\max_{a^n_\tau}~ Q_{\beta_\tau}(s_\tau, a_\tau)\right],\\
				  &=\max_{a^1_\tau}~Q^1_{\beta_\tau}(s_\tau,a^1_\tau).
\end{align*}
Interestingly, for every player $i\in \{1,2,\ldots,n-1\}$, the action-value functions $Q^i_{\beta_\tau}(s_\tau,a^{1:i}_\tau)$ satisfy the following recursion
\begin{align*}
Q^i_{\beta_\tau}(s_\tau,a^{1:i}_\tau) &= \max_{a^{i+1}_\tau}\max_{a^{i+2}_\tau} \ldots\max_{a^n_\tau}~ Q_{\beta_\tau}(s_\tau, a_\tau),\\
				  &=	\max_{a^{i+1}_\tau} ~\left[\max_{a^{i+2}_\tau} \ldots\max_{a^n_\tau}~ Q_{\beta_\tau}(s_\tau, a_\tau)\right],\\
				  &=\max_{a^{i+1}_\tau}~Q^{i+1}_{\beta_\tau}(s_\tau,a^{1:i+1}_\tau),
\end{align*}
with boundary condition $Q^n_{\beta_\tau}(s_\tau,a_\tau) \doteq Q_{\beta_\tau}(s_\tau,a_\tau)$. For any arbitrary player $i\in \{2,3,\ldots, n\}$, define game $G^{\beta_\tau}_{s_\tau, a^{1:i-1}_\tau} \doteq \langle i, A^i, Q^i_{\beta_\tau}(s_\tau,a^{1:i-1}_\tau, \cdot)\rangle$ to be the subgame upon the sequential-move central planner selected decision rules $a^{1:i-1}_\tau$ starting in game $G_{s_\tau}^{\beta_\tau}$, with boundary condition $G^{\beta_\tau}_{s_\tau, \emptyset} \doteq \langle 1, A^1, Q^1_{\beta_\tau}(s_\tau,\cdot)\rangle$. Consequently, optimally solving the original game $G_{s_\tau}^{\beta_\tau}$ can be performed by optimally solving smaller subgames $\langle G_{s_\tau,\emptyset}^{\beta_\tau}, G_{s_\tau, a^1_\tau}^{\beta_\tau}, \ldots, G_{s_\tau, a^{1:n-1}_\tau}^{\beta_\tau}\rangle$, one subgame per player, recursvively. 

Next, we shall prove that the best decision rule in any arbitrary sequential-move subgame  $G^{\beta_\tau}_{s_\tau, a^{1:i-1}_\tau}$ depends on the current occupancy state $s_\tau$ along with previously selected decision rules $a^{1:i-1}_\tau$, only through the corresponding nodes $\varsigma_\tau^i \doteq (s_\tau, u^{1:i-1}_\tau, o^{1:i}_\tau)$ of the perfect information extensive form game $\bar{G}_{s_\tau}^{\beta_\tau}$. In other words, instead of selecting actions for all private histories of player $i$ in sync, one can choose the best action for each private history independently without compromising optimality. The proof of this statement proceeds by induction from player $n$ to player $1$. At player $n$, the greedy decision rule $\hat{a}^n_\tau$ satisfies the following:  
\begin{align*}
\hat{a}^n_\tau &\in\textstyle \argmax_{a^n_\tau}~Q^n_{\beta_\tau}(s_\tau, a_\tau),\\
		       &\in\textstyle \argmax_{a^n_\tau}~Q_{\beta_\tau}(s_\tau, a_\tau),\\
		       &\in\textstyle \argmax_{a^n_\tau}~\mathbb{E}_{(x,o,u)\sim \Pr\{\cdot| s_\tau, a_\tau\} }\{ \beta_\tau(x,o,u) \}.
\end{align*}  
Expanding over private histories of player $n$, we have that 
 \begin{align*}
\hat{a}^n_\tau(o^n_\tau) &\in\textstyle \argmax_{u^n_\tau}~\mathbb{E}_{(x,o,u)\sim \Pr\{\cdot| s_\tau, o^n_\tau, a_\tau\} }\{ \beta_\tau(x,o,u) \}.
\end{align*}  
Leveraging information available to player $n$ as provided by the HIS assumption, we know that the knowledge of private history $o_\tau^n$ implies the knowledge of histories of all other players  $o^{1:n-1}_\tau$, hence the joint history $o_\tau$, \ie
 \begin{align*}
\hat{a}^n_\tau(o^n_\tau) &\in\textstyle \argmax_{u^n_\tau}~\mathbb{E}_{x\sim \Pr\{\cdot| s_\tau, o_\tau, a_\tau\} }\{ \beta_\tau(x,o,u) \}.
\end{align*}  
 In addition, the knowledge of $o^{1:n-1}_\tau$ together with the decision rules $a^{1:n-1}_\tau$ the sequential-move central planner selected previously, makes it possible to access  node $\varsigma_\tau^n \doteq \langle s_\tau, o^{1:n}_\tau,a^{1:n-1}_\tau(o^{1:n-1}_\tau) \rangle$ such that:
 \begin{align*}
\hat{a}^n_\tau(o^n_\tau) &\in\textstyle \argmax_{u^n_\tau}~\beta_\tau^n(\varsigma_\tau^n,u^n_\tau),
\end{align*}
where $\beta_\tau^n\colon (\varsigma_\tau^n,u^n_\tau) \mapsto \mathbb{E}_{x\sim \Pr\{\cdot| \varsigma_\tau^n,u^n_\tau\} }\{ \beta_\tau(x,o,u) \}$, which proves the statement holds at player $n$.  Define  function $\alpha_\tau^n\colon \varsigma_\tau^n \mapsto \max_{u^n_\tau}\beta_\tau^n(\varsigma_\tau^n, u_\tau^n)$ at player $n$.  Notice that the value of the sequential-move subgame $G^{\beta_\tau}_{s_\tau, a^{1:n-1}_\tau}$ can be rewritten as follows:
\begin{align*}
Q^{n-1}_{\beta_\tau}(s_\tau, a_\tau^{1:n-1}) &= \max_{a^n_\tau} Q^n_{\beta_\tau}(s_\tau, a_\tau)\\
								   &=  \mathbb{E}_{\varsigma_\tau^n\sim \Pr\{\cdot| s_\tau,a_\tau^{1:n-1}\} }\{ \max_{u^n_\tau} \beta_\tau^n(\varsigma_\tau^n,u^n_\tau) \}\\
								   &=  \mathbb{E}_{\varsigma_\tau^n\sim \Pr\{\cdot| s_\tau,a_\tau^{1:n-1}\} }\{  \alpha_\tau^n(\varsigma_\tau^n) \}.
\end{align*}

Suppose the statement holds for any player $i>1$, with greedy decision rule $\hat{a}^i_\tau(o^i_\tau) \in \argmax_{u^i_\tau}~\beta_\tau^i(\varsigma_\tau^i, u^i_\tau)$. Define  function $\alpha_\tau^i\colon \varsigma_\tau^i \mapsto \max_{u^i_\tau}\beta_\tau^i(\varsigma_\tau^i, u_\tau^i)$ at player $i$. Also, the value of the sequential-move subgame $G^{\beta_\tau}_{s_\tau, a^{1:i-1}_\tau}$ can be rewritten by expanding over the sequential-move nodes $\varsigma_\tau^i \doteq  \langle s_\tau, o^{1:i}_\tau,a^{1:i-1}_\tau(o^{1:i-1}_\tau) \rangle$, \ie
\begin{align*}
Q^{i-1}_{\beta_\tau}(s_\tau, a_\tau^{1:i-1}) 
								   &=  \mathbb{E}_{\varsigma_\tau^i\sim \Pr\{\cdot| s_\tau,a_\tau^{1:i-1}\} }\{  \alpha_\tau^i(\varsigma_\tau^i) \}.
\end{align*}

 We are now ready to prove the statement also holds at player $i-1$. From the sequential-move central planner's viewpoint, decision rule $\hat{a}_\tau^{i-1}$ satisfies the following expression:
\begin{align*}
\hat{a}^{i-1}_\tau &\textstyle\in \argmax_{a^{i-1}_\tau}~Q^{i-1}_{\beta_\tau}(s_\tau, a^{1:i-1}_\tau),\\
		       &\textstyle\in \argmax_{a^{i-1}_\tau} \mathbb{E}_{\varsigma_\tau^i\sim \Pr\{\cdot| s_\tau,a_\tau^{1:i-1}\} }\{  \alpha_\tau^i(\varsigma_\tau^i) \}.
\end{align*}  
%
%
Similarly to player $n$, the knowledge of $o^{1:i-1}_\tau$ together with the decision rules $a^{1:i-2}_\tau$ the sequential-move central planner selected previously, makes it possible to access node $\varsigma_\tau^{i-1}\doteq \langle s_\tau, o^{1:i-1}_\tau,a^{1:i-2}_\tau(o^{1:i-2}_\tau) \rangle$ such that: 
\begin{align*}
\hat{a}^{i-1}_\tau(o^{i-1}_\tau) &\textstyle\in \argmax_{u^{i-1}_\tau}~ \beta_\tau^{i-1}(\varsigma_\tau^{i-1}, u_\tau^{i-1}),
\end{align*}  
where $\beta_\tau^{i-1}\colon (\varsigma_\tau^{i-1}, u_\tau^{i-1}) \mapsto \mathbb{E}_{\varsigma_\tau^i\sim \Pr\{\cdot| \varsigma_\tau^{i-1}, u_\tau^{i-1}\} }\{ \alpha_\tau^i(\varsigma_\tau^i) \}$, which proves the statement holds at player $i-1$. Define  function $\alpha_\tau^{i-1}\colon \varsigma_\tau^{i-1} \mapsto \max_{u^{i-1}_\tau}\beta_\tau^{i-1}(\varsigma_\tau^{i-1}, u_\tau^{i-1})$ at player $i-1$.  Consequently, the value of the sequential-move subgame $G^{\beta_\tau}_{s_\tau, \emptyset}$ can be rewritten by expanding over the sequential-move nodes $\varsigma_\tau^1\doteq  \langle s_\tau, o^1_\tau \rangle$, \ie
\begin{align*}
V_{\beta_\tau}(s_\tau) &=  \mathbb{E}_{\varsigma_\tau^1\sim \Pr\{\cdot| s_\tau\} }\{  \alpha_\tau^1(\varsigma_\tau^1) \}.
\end{align*}
The value of a cooperative game being unique, we know the optimal solution for $\bar{G}_{s_\tau}^{\beta_\tau}$ is also an optimal solution for $G_{s_\tau}^{\beta_\tau}$. In demonstrating this statement, we also exhibited \citeauthor{bellman}'s optimality equations, providing the solution of the perfect-information extensive-form game $\bar{G}_{s_\tau}^{\beta_\tau}$, \ie at any player $i$, node $\varsigma_\tau^i$, and action $u_\tau^i$,
\begin{align*}
\beta^{i,*}_\tau(\varsigma_\tau^i, u^i_\tau) &=  \mathbb{E}_{\varsigma_\tau^{i+1}\sim T(\cdot|\varsigma_\tau^i, u^i_\tau) }\{  \max_{u^{i+1}_\tau}\beta_\tau^{i+1,*}(\varsigma_\tau^{i+1}, u_\tau^{i+1})\},
\end{align*}
with boundary condition  $\beta_\tau^{n,*}\colon (\varsigma_\tau^n,u^n_\tau) \mapsto R( \varsigma_\tau^n,u^n_\tau)$.
Which ends the proof.
 \end{proof}
Theorem \ref{thm:bellman:equations} introduces \citeauthor{bellman}'s optimality equations that enable us to find a greedy joint decision at single-stage subgame $G_{s_\tau}^{\beta_\tau}$ by solving the corresponding extensive-form game $\bar{G}_{s_\tau}^{\beta_\tau}$. 
It proceeds in two phases. From player $n$ at the top of the hierarchy to player $1$ at the bottom, a backward pass computes optimal action-values $\beta^{i,*}_\tau(\varsigma_\tau^i, u_\tau^i)$ for each player $i$, each node 
$\varsigma^i_\tau$, and each action $u_\tau^i$. Then, from player $1$ at the bottom of the hierarchy to player $n$ at the top, a forward pass selects a greedy decision rule independently for each player $i$, and each node $\varsigma_\tau^i$. 
This backward induction algorithm requires a linear time complexity with the number of players, nodes, and actions $\mathbf{O}(n |\Sigma| |U^*|)$ instead of double exponential $ \mathbf{O}( |O^*|^{|U^*|^n})$ where $O^* \doteq \argmax_{O^i} |O^i|$ with $O^i$ being the set of reachable histories of player $i$ in $s_\tau$ and $U^* \doteq \argmax_{U^i} |U^i|$. A careful reader would notice that the linearity of $\beta_\tau$ over occupancy states and joint decision rules is key in demonstrating Theorem \ref{thm:bellman:equations}. 

\subsection{Nested-Occupancy States}

Upon inspection of perfect-information extensive-form game $\bar{G}_{s_\tau}^{\beta_\tau}$, one can see that despite the polynomial-time complexity of the point-based backup, $\bar{G}_{s_\tau}^{\beta_\tau}$ may contain a significant number of nodes. That is because nodes in $\bar{G}_{s_\tau}^{\beta_\tau}$ provide total information available to the planner at any player $i$---\ie $\varsigma_\tau^i \doteq \langle s_\tau, o^i_\tau, u^{:i-1}_\tau \rangle$, which may result in redundant and unnecessary computations. To address this challenge, we propose the introduction of a statistic referred to as \emph{nested-occupancy state} that we shall maintain in place of the total information available to the planner.

At player $i$, a nested-occupancy state $s^i_\tau\doteq (b^i_\tau, o^i_\tau, u^{:i-1}_\tau)$ consists of a private history $o^i_\tau$ of player $i$, the actions of its subordinates $u^{:i-1}_\tau$, and a nested-belief state $b^i_\tau$. Besides, the nested-belief state $b^i_\tau$ at player $i$ is a posterior distribution over histories $o^{i+1}_\tau$ and nested-belief states $b^{i+1}_\tau$ of the immediate superior player $i+1$. This distribution is conditional on the total data available to the planner at player $i$, i.e., $b^i_\tau(o^{i+1}_\tau,b^{i+1}_\tau) \doteq \Pr\{ o^{i+1}_\tau, b^{i+1}_\tau| \varsigma^i_\tau\}$, for any histories $o^{i+1}_\tau$ and nested-belief states $b^{i+1}_\tau$; with boundary condition $b^n_\tau(x_\tau)\doteq \Pr\{x_\tau| \varsigma^n_\tau\}$, for any hidden state $x_\tau$.  Interestingly, the nested-occupancy state has many important properties. First, it is a sufficient statistic for optimally solving $\bar{G}_{s_\tau}^{\beta_\tau}$.

\begin{theorem}[Proof in Appendix \ref{appendix:thm:sufficiency}]
\label{thm:sufficiency}
At player $i$, the nested-occupancy state $s^i_\tau$ is a sufficient statistic of  the total data $\varsigma_\tau^i$ available to the planner for optimally solving the perfect-information extensive-form game $\bar{G}_{s_\tau}^{\beta_\tau}$.
\end{theorem}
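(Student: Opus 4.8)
The plan is to prove the claim by \emph{backward induction} on the player index, from $i=n$ down to $i=1$, showing that the optimal action-value function of Theorem~\ref{thm:bellman:equations} factors through the nested-occupancy state: there exists $\hat\beta^{i,*}_\tau$ with $\beta^{i,*}_\tau(\varsigma^i_\tau,u^i_\tau)=\hat\beta^{i,*}_\tau(s^i_\tau,u^i_\tau)$, where $s^i_\tau$ is the nested-occupancy statistic deterministically extracted from $\varsigma^i_\tau$. Since the value $\alpha^{i,*}_\tau=\max_{u^i_\tau}\beta^{i,*}_\tau$ and the greedy decision rules are read off from $\beta^{i,*}_\tau$, this factorization is exactly the assertion that $s^i_\tau$ is a sufficient statistic of $\varsigma^i_\tau$ for solving $\bar{G}^{\beta_\tau}_{s_\tau}$.

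For the base case $i=n$, I would start from the boundary condition $\beta^{n,*}_\tau(\varsigma^n_\tau,u^n_\tau)=R(\varsigma^n_\tau,u^n_\tau)=\mathbb{E}_{x\sim\Pr\{\cdot|\varsigma^n_\tau,u^n_\tau\}}\{\beta_\tau(x,o,u)\}$ and make three observations. First, the current action cannot alter the posterior over the current hidden state, so $\Pr\{x|\varsigma^n_\tau,u^n_\tau\}=\Pr\{x|\varsigma^n_\tau\}=b^n_\tau(x)$. Second, under HIS the map $\zeta^n$, iterated down the chain of command, lets player $n$'s history $o^n_\tau$ recover the entire joint history $o_\tau=o^{1:n}_\tau$, so $o_\tau$ is a deterministic function of $o^n_\tau$. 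Third, the joint action $u_\tau=(u^{1:n-1}_\tau,u^n_\tau)$ is recovered from the stored subordinate actions and the current action. Hence $\beta^{n,*}_\tau=\sum_x b^n_\tau(x)\,\beta_\tau(x,o_\tau,u_\tau)$, with $o_\tau$ and $u_\tau$ both determined by $\varsigma^n_\tau$ and $u^n_\tau$, depends on $\varsigma^n_\tau$ only through $s^n_\tau=(b^n_\tau,o^n_\tau,u^{1:n-1}_\tau)$, establishing the base case.

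For the inductive step, assuming the claim at $i+1$, I would substitute the inductive hypothesis into the Bellman recursion of Theorem~\ref{thm:bellman:equations}, obtaining $\beta^{i,*}_\tau(\varsigma^i_\tau,u^i_\tau)=\mathbb{E}_{\varsigma^{i+1}_\tau\sim T(\cdot|\varsigma^i_\tau,u^i_\tau)}\{\alpha^{i+1,*}_\tau(s^{i+1}_\tau)\}$, where $\alpha^{i+1,*}_\tau$ is a function of $s^{i+1}_\tau$ by hypothesis. By the definition of $T$, the only random quantity is the superior's history $o^{i+1}_\tau\sim\Pr\{\cdot|\varsigma^i_\tau\}$, whose law is the marginal of $b^i_\tau$; hence the mixing weights are functions of $b^i_\tau$. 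It then remains to verify that the realized successor statistic $s^{i+1}_\tau=(b^{i+1}_\tau,o^{i+1}_\tau,u^{1:i}_\tau)$ is itself determined by $(s^i_\tau,u^i_\tau,o^{i+1}_\tau)$: the action block $u^{1:i}_\tau$ concatenates the stored $u^{1:i-1}_\tau$ with $u^i_\tau$, and the successor nested-belief $b^{i+1}_\tau$ is read off directly from $b^i_\tau$ at the realized $o^{i+1}_\tau$. Collecting these facts writes $\beta^{i,*}_\tau$ in terms of $b^i_\tau$, $u^{1:i-1}_\tau$ and $u^i_\tau$ only, i.e.\ as a function of $(s^i_\tau,u^i_\tau)$, closing the induction and simultaneously exhibiting Bellman optimality equations over nested-occupancy states.

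The main obstacle is the last point of the inductive step: that the nested-belief update is \emph{closed}, so that $b^{i+1}_\tau$ is recoverable from the pair $(b^i_\tau,o^{i+1}_\tau)$ alone, without re-consulting the occupancy state $s_\tau$ or the selected actions. I would justify this in two moves. First, the current-stage actions $u^{1:i}_\tau$ influence only the transition into stage $\tau+1$, not the contemporaneous posterior over the superiors' stage-$\tau$ histories and beliefs, so $b^{i+1}_\tau$ is a deterministic function of $(s_\tau,o^{i+1}_\tau)$ and, in particular, does not depend on $u^i_\tau$. Second, because $b^i_\tau$ is by definition the joint law of $(o^{i+1}_\tau,b^{i+1}_\tau)$ given $\varsigma^i_\tau$, while $b^{i+1}_\tau$ is a deterministic function of $o^{i+1}_\tau$ at fixed $\varsigma^i_\tau$, each $o^{i+1}_\tau$ in the support is paired with a unique $b^{i+1}_\tau$; conditioning $b^i_\tau$ on the realized $o^{i+1}_\tau$ and reading off this unique companion belief therefore yields $b^{i+1}_\tau$ deterministically. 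This is precisely what makes the recursive, self-referential definition of the nested-belief well posed and renders $s^i_\tau$ a sufficient statistic rather than merely a quantity computable from $\varsigma^i_\tau$.
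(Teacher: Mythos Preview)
Your proposal is correct and follows essentially the same route as the paper's proof: both establish by backward induction on the player index that $\beta^{i,*}_\tau(\varsigma^i_\tau,u^i_\tau)$ factors through $s^i_\tau$, using that the terminal reward at player $n$ depends only on $(b^n_\tau,o^n_\tau,u^{:n-1}_\tau,u^n_\tau)$ and that the transition law to $s^{i+1}_\tau$ depends only on $(b^i_\tau,u^{:i}_\tau)$. The only organizational difference is that the paper first isolates the transition and reward sufficiency as separate preliminary claims (defining $\tilde{T}$ and $\tilde{R}$) before running the induction, whereas you weave these into the inductive step itself; your explicit discussion of why $b^{i+1}_\tau$ is deterministically recoverable from $(b^i_\tau,o^{i+1}_\tau)$ is precisely the content of the paper's computation $\Pr\{s^{i+1}_\tau\mid\varsigma^i_\tau,u^i_\tau\}=b^i_\tau(b^{i+1}_\tau,o^{i+1}_\tau)\cdot\mathbb{1}\{u^{:i,\circ}_\tau=u^{:i,\bullet}_\tau\}$.
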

Theorem \ref{thm:sufficiency} suggests using a nested-occupancy state as an alternative to the total data available to the planner without compromising optimality. This statistic facilitates the aggregation of histories of a player that convey the same information about the game, thus effectively reducing the dimensionality of the game. Prior to delving further, it is necessary to introduce three equivalence relations. First, two nested-occupancy states at player $i$, represented as $s_\tau^{i,\textcolor{gray}{\bullet}}\doteq (b^{i,\textcolor{gray}{\bullet}}_\tau, o^{i,\textcolor{gray}{\bullet}}_\tau, u^{:i-1,\textcolor{gray}{\bullet}}_\tau)$ and $s_\tau^{i,\circ}\doteq (b^{i,\circ}_\tau, o^{i,\circ}_\tau, u^{:i-1,\circ}_\tau)$, are considered $\mathscr{B}_1$-equivalent if they differ only through their histories, \ie whenever $(b^{i,\circ}_\tau,u^{:i-1,\circ}_\tau)=(b^{i,\textcolor{gray}{\bullet}}_\tau,u^{:i-1,\textcolor{gray}{\bullet}}_\tau)$ then $s_\tau^{i,\textcolor{gray}{\bullet}}\sim_{\mathscr{B}_1} s_\tau^{i,\circ}$. Similarly, they are considered $\mathscr{B}_2$-equivalent if they share the same nested-belief state and histories of subordinates, \ie whenever $ (b^{i,\circ}_\tau,o^{i-1,\circ}_\tau)=(b^{i,\textcolor{gray}{\bullet}}_\tau,o^{i-1,\textcolor{gray}{\bullet}}_\tau)$ then $s_\tau^{i,\textcolor{gray}{\bullet}}\sim_{\mathscr{B}_2} s_\tau^{i,\circ}$. Last, two private histories are considered $\mathscr{P}$-equivalent if their optimal actions or, more generally, policies are interchangeable.
\begin{theorem}[Proof in Appendix \ref{appendix:thm:compression}]
\label{thm:compression}
Let $\bar{G}_{s_\tau}^{\beta_\tau}$ be a perfect-information extensive-form game. Let $s_\tau^{i,\textcolor{gray}{\bullet}}$ and $s_\tau^{i,\textcolor{gray}{\circ}}$ be two nested-occupancy states induced by occupancy state $s_\tau$ at player $i$. The following properties hold.
\begin{compactenum}
\item If $s_\tau^{i,\textcolor{gray}{\bullet}}\sim_{\mathscr{B}_1} s_\tau^{i,\circ}$ then $\beta^{i,*}_\tau(s_\tau^{i,\textcolor{gray}{\bullet}},u^i_\tau) = \beta^{i,*}_\tau(s_\tau^{i,\textcolor{gray}{\circ}},u^i_\tau)$.
\item If $s_\tau^{i,\textcolor{gray}{\bullet}}\sim_{\mathscr{B}_2} s_\tau^{i,\circ}$ then $o_\tau^{i,\textcolor{gray}{\bullet}} \sim_{\mathscr{P}} o_\tau^{i,\textcolor{gray}{\circ}}$.
\end{compactenum}
\end{theorem}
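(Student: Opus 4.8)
The plan is to derive both properties from the nested-occupancy Bellman recursion, \ie the equations of Theorem~\ref{thm:bellman:equations} rewritten with the nested-occupancy state $s^i_\tau$ in place of the total data $\varsigma^i_\tau$, which is legitimate precisely because Theorem~\ref{thm:sufficiency} guarantees that $\beta^{i,*}_\tau$ is well defined as a function of $s^i_\tau$. Property~1 will follow by showing that the one-step transition kernel, expressed in nested coordinates, does not depend on the history $o^i_\tau$; property~2 will then follow by reducing $\mathscr{B}_2$-equivalence to $\mathscr{B}_1$-equivalence and invoking property~1.

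For property~1, fix $i<n$ and start from $\beta^{i,*}_\tau(s^i_\tau,u^i_\tau) = \mathbb{E}_{s^{i+1}_\tau\sim T(\cdot\mid s^i_\tau,u^i_\tau)}\{\max_{u^{i+1}_\tau}\beta^{i+1,*}_\tau(s^{i+1}_\tau,u^{i+1}_\tau)\}$. The successor state is $s^{i+1}_\tau=(b^{i+1}_\tau,o^{i+1}_\tau,(u^{:i-1}_\tau,u^i_\tau))$, so I would establish two facts. First, the subordinate-action field of $s^{i+1}_\tau$ is the deterministic concatenation $(u^{:i-1}_\tau,u^i_\tau)$, which reads off $s^i_\tau$ only through $u^{:i-1}_\tau$. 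Second, and crucially, the conditional law of the remaining coordinates $(o^{i+1}_\tau,b^{i+1}_\tau)$ given $\varsigma^i_\tau$ is by definition exactly $b^i_\tau(o^{i+1}_\tau,b^{i+1}_\tau)$, hence independent of $o^i_\tau$. Therefore $T(\cdot\mid s^i_\tau,u^i_\tau)$ depends on $s^i_\tau$ only through $(b^i_\tau,u^{:i-1}_\tau)$, and since the integrand depends only on $s^{i+1}_\tau$, the value $\beta^{i,*}_\tau(s^i_\tau,u^i_\tau)$ is constant over the $\mathscr{B}_1$-class, which is property~1.

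For property~2, suppose $s^{i,\textcolor{gray}{\bullet}}_\tau\sim_{\mathscr{B}_2}s^{i,\circ}_\tau$, so the two states share $b^i_\tau$ and the subordinate history $o^{i-1}_\tau$. By the HIS assumption, each $\zeta^j$ peels off one level, so $o^{i-1}_\tau$ determines $o^{1:i-1}_\tau$ and the two states agree on $o^{1:i-1}_\tau$. Because both nested-occupancy states are induced by the \emph{same} occupancy state $s_\tau$, the greedy subordinate decision rules $a^{1:i-1,*}_\tau$ of Theorem~\ref{thm:bellman:equations} are the \emph{same} functions for both; evaluating them on the common history $o^{1:i-1}_\tau$ (under a fixed tie-break) yields identical subordinate actions $u^{:i-1}_\tau$. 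Thus the two states also share $u^{:i-1}_\tau$, hence are $\mathscr{B}_1$-equivalent, and property~1 gives $\beta^{i,*}_\tau(s^{i,\textcolor{gray}{\bullet}}_\tau,\cdot)=\beta^{i,*}_\tau(s^{i,\circ}_\tau,\cdot)$. The greedy sets $\argmax_{u^i_\tau}\beta^{i,*}_\tau(\cdot,u^i_\tau)$ therefore coincide, \ie $o^{i,\textcolor{gray}{\bullet}}_\tau\sim_{\mathscr{P}}o^{i,\circ}_\tau$.

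I expect the main obstacle to be the second fact in the property-1 step: rigorously justifying that the nested-belief update is a well-defined deterministic map $(b^i_\tau,u^i_\tau,o^{i+1}_\tau)\mapsto b^{i+1}_\tau$ and that the joint law of $(o^{i+1}_\tau,b^{i+1}_\tau)$ under $\varsigma^i_\tau$ is precisely $b^i_\tau$ rather than merely determined by it; this is exactly the content that Theorem~\ref{thm:sufficiency} supplies and where the bookkeeping is delicate. A second point worth flagging is the boundary at $i=n$: since $\beta_\tau(x,o,u)$ of Lemma~\ref{lem:action:value:pwlc} genuinely depends on the history $o$, the reward $\beta^{n,*}_\tau(s^n_\tau,u^n_\tau)=\mathbb{E}_{x\sim b^n_\tau}\{\beta_\tau(x,o,u)\}$ still depends on $o^n_\tau$, so the compression of property~1 is substantive only for $i<n$ (the top player's history is irreducible) and the recursion must be anchored accordingly.
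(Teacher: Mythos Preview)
Your inductive/recursive argument for property~1 at $i<n$ and your reduction of $\mathscr{B}_2$- to $\mathscr{B}_1$-equivalence for property~2 are exactly the paper's arguments. Where you diverge is precisely the point you flag as an obstacle, and there the proposal has a genuine gap rather than merely a bookkeeping worry.

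\textbf{The missing anchor at player $n$.} You correctly observe that $\beta^{n,*}_\tau(s^n_\tau,u^n_\tau)=\mathbb{E}_{x\sim b^n_\tau}\{\beta_\tau(x,o,u)\}$ still depends on $o^n_\tau$ through the future joint policy $a_{\tau+1:}$ encoded in $\beta_\tau$, and you conclude that property~1 is ``substantive only for $i<n$.'' But the theorem asserts it for all~$i$, and you provide no argument at $i=n$. The paper does not sidestep this: it builds three auxiliary lemmas (Lemmas~\ref{lem:sufficiency:n}--\ref{lem:action:value:fct:n}) showing that under HIS player~$n$ admits an optimal \emph{belief}-dependent policy and that, with such a policy in the continuation, the action-value at player~$n$ can be rewritten as a function $\bar\beta^n_\tau(b^n_\tau,o^{:n-1}_\tau,u_\tau)$ which no longer depends on the private part of $o^n_\tau$ beyond the subordinates' histories. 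That reformulation is what makes the base case go through; your proposal lacks this machinery entirely, so the case $i=n$ is left open.

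\textbf{Property~2 stops one step short.} Your reduction gives coincidence of the $\argmax$ sets at stage~$\tau$, \ie current-action interchangeability. The paper's definition of $\mathscr{P}$-equivalence asks for interchangeability of optimal \emph{policies}. The paper therefore pushes the argument forward in time, invoking Lemmas~\ref{lem:observation} and~\ref{lem:markov} to show that the one-step expansions $\langle o^{i,\circ}_\tau,u^i_\tau,z^i_{\tau+1}\rangle$ and $\langle o^{i,\textcolor{gray}{\bullet}}_\tau,u^i_\tau,z^i_{\tau+1}\rangle$ of two $\mathscr{B}_2$-equivalent histories remain $\mathscr{B}_2$-equivalent, so that the action-agreement propagates to every future stage. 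Your proposal omits this propagation step.
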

Theorem \ref{thm:compression} establishes that two $\mathscr{B}_1$-equivalent nested-occupancy states have the same optimal actions, resulting in significant computational savings. Moreover, it showcases that two $\mathscr{B}_2$-equivalent nested-occupancy states have their corresponding histories following the same policy. This insight allows for compact occupancy states, wherein only one history per equivalent class is retained, leading to faster estimations. Similarly, \citet{Dibangoye2016} employed compact occupancy states while utilizing the complete distribution over hidden states and histories of all teammates to determine when two histories are equivalent. Our equivalence relations, however, are based on nested belief states that are more concise than occupancy states, resulting in a more aggressive compression. At player $n$, for instance, the planner groups together histories that share the belief state, and this process continues down the hierarchy.

\section{Near-Optimally Solving $M'$ Under HIS} \label{sec:solving_subgame}

This section
adapts the point-based value-iteration (PBVI) algorithm \citep{pineau2003point} to compute $\epsilon$-optimal joint policy for  $M'$ (resp. $M$) under HIS starting at initial state distribution $s_0$ for planning horizon $\ell$. We chose the PBVI algorithm because it leverages the linear functions $\beta_\tau$ involved in the optimal value function. Besides, it is guaranteed to find near-optimal solutions asymptotically. Notice that algorithms that do not leverage the linear functions $\beta_\tau$, \eg feature-based heuristic search value iteration \citep{DibangoyeABC13,Dibangoye2016}, cannot benefit from our findings.

PBVI, \cf Algorithm \ref{pbvi:his:dec:pomdp} in Appendix \ref{sec:pbvi:algorithm}, has two main parts for solving $M'$ (resp. $M$) under HIS. First, it bounds the size of the value function at each stage $\tau$ of the game by representing the value only at a finite, reachable occupancy subset $\tilde{S}_\tau$.  Next, it optimizes the value function represented as a collection $\mathcal{V}_\tau$ at each stage $\tau$ using point-based backup, \ie at any stage $\tau$, 
$\mathcal{V}_\tau = \{ \mathtt{backup}(s_\tau, \mathcal{V}_{\tau+1})\colon s_\tau\in \tilde{S}_\tau \}$,
where backups are executed in no particular order, \ie
$\mathtt{backup}(s_\tau, \mathcal{V}_{\tau+1}) = \argmax_{\alpha_\tau^{a_{\tau:}}\colon a_\tau\in A, \alpha_\tau^{a_{\tau+1:}}\in \mathcal{V}_\tau} ~Q_{\beta_\tau^{a_{\tau+1:}}} (s_\tau, a_\tau)$.
Each iteration traverses occupancy-state subsets bottom up.  This iterative process repeats until convergence or until a budget, \eg CPU time, memory, or number of iterations, has been exhausted. The algorithm adds supplemental points into occupancy subsets to improve the value functions further. It selects candidate points using a portfolio of exploration strategies, including random explorations and greedy \wrt underlying (PO)MDP value functions. For every stage $\tau$, the algorithm adds only candidate points beyond a certain distance from the occupancy subset $\tilde{S}_\tau$ to create a new occupancy-state set $\tilde{S}_{\tau+1}$.

For any arbitrary occupancy-state subsets $\tilde{S}_{0:}$,  PBVI  produces a value $\upsilon_0(s_0)$. 
The error between $\upsilon_0(s_0)$ and $\upsilon^*_{0}(s_0)$ is bounded. The bound depends on how $\tilde{S}_{0:}$ samples the entire occupancy-state space; with denser sampling, the estimate $\upsilon_0(s_0)$ converges to $\upsilon^*_0(s_0)$. The remainder of this section states and proves our error bound. \new{It is also shown that the PBVI algorithm under HIS allows an exponential decrease in time complexity over standard versions of PBVI.}

Define the density $\delta_{ \tilde{S}_{0:}}$ to be the maximum distance from any \old{legal} \new{reachable} occupancy state to subsets $\tilde{S}_{0:}$. More precisely, $\delta_{ \tilde{S}_{0:}} \doteq \max_{\tau\in \llbracket 0:\ell-1\rrbracket}\max_{s\in S_\tau}\min_{s'\in \tilde{S}_\tau} \|s-s'\|_1$. Define a positive scalar $c$ such that $\|r(\cdot,\cdot)\|_\infty \leq c$.
\begin{theorem}[Proof in Appendix \ref{appendix:thm:error:bound}]
\label{thm:error:bound}
For any occupancy subsets $\tilde{S}_{0:}$, the error of the PBVI algorithm is bounded by $$\upsilon^*_0(s_0) - \upsilon_0(s_0) \leq  2c \delta_{ \tilde{S}_{0:}} \frac{1+\ell \gamma^{\ell+1}-(\ell+1) \gamma^\ell }{(1-\gamma)^2}.$$ 
\end{theorem}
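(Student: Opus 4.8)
The plan is to reduce the claim to a standard finite-horizon point-based error analysis over the occupancy-state MDP $M'$, exploiting the fact that the HIS machinery introduces \emph{no} approximation of its own. Indeed, Theorem~\ref{thm:bellman:equations} guarantees that the extensive-form/backward-induction procedure computes exactly the same maximizing $\beta$-vector as the direct backup $\mathtt{backup}(s,\mathcal{V}_{\tau+1})$; hence the only source of error is the restriction of each backup to the finite reachable subset $\tilde{S}_\tau$ rather than to all of $S_\tau$. I would write $V_\tau$ for the value function PBVI produces at stage $\tau$ (so that $\upsilon_0=V_0$ and $\upsilon^*_0=V^*_0$ at $s_0$) and track the stagewise error $\eta_\tau \doteq \max_{s\in S_\tau}\bigl(V^*_\tau(s)-V_\tau(s)\bigr)$, aiming to bound $\eta_0$. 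The starting ingredient is Lemma~\ref{lem:action:value:pwlc}: each $V_\tau$ and $V^*_\tau$ is a maximum of linear functions with coefficient vectors $\beta^{a_{\tau:}}_\tau$ whose entries are discounted rewards-to-go; since $\|r\|_\infty\le c$, every coordinate lies in an interval of width at most $2c\sum_{t=\tau}^{\ell-1}\gamma^{t-\tau}=2c\,\tfrac{1-\gamma^{\ell-\tau}}{1-\gamma}$.

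Next I would establish the per-stage sampling error. Because every backed-up $\beta$-vector is the exact value of some fixed joint policy, taking the maximum over a \emph{subset} of such vectors never overestimates, so $V_\tau \le \mathcal{H} V_{\tau+1}$ pointwise with equality on $\tilde{S}_\tau$, where $\mathcal{H}$ denotes the exact Bellman backup $(\mathcal{H} V)(s)\doteq\max_a \pmb{R}(s,a)+\gamma\sum_{s'}\pmb{T}(s,a,s')V(s')$. Fix a reachable $s\in S_\tau$ and let $s'\in\tilde{S}_\tau$ be a nearest sampled point, so $\|s-s'\|_1\le \delta_{\tilde{S}_{0:}}$. Let $\alpha^\star$ be the $\beta$-vector selected by the full backup at $s$ and $\tilde{\alpha}$ the one PBVI stored from backing up $s'$. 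Since PBVI's backup at $s'$ selects the best candidate vector \emph{for} $s'$, we have $\langle\tilde{\alpha},s'\rangle\ge\langle\alpha^\star,s'\rangle$, and the usual add-and-subtract manipulation gives
\begin{equation*}
(\mathcal{H} V_{\tau+1})(s)-V_\tau(s)\le \langle\alpha^\star,s\rangle-\langle\tilde{\alpha},s\rangle\le \langle\alpha^\star-\tilde{\alpha},\,s-s'\rangle\le \|\alpha^\star-\tilde{\alpha}\|_\infty\,\|s-s'\|_1 .
\end{equation*}
By the coordinate-range bound this is at most $\epsilon_\tau\doteq 2c\,\tfrac{1-\gamma^{\ell-\tau}}{1-\gamma}\,\delta_{\tilde{S}_{0:}}$.

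Then I would close the recursion. Writing $V^*_\tau=\mathcal{H} V^*_{\tau+1}$ and decomposing
\begin{equation*}
V^*_\tau(s)-V_\tau(s)=\bigl[(\mathcal{H} V^*_{\tau+1})(s)-(\mathcal{H} V_{\tau+1})(s)\bigr]+\bigl[(\mathcal{H} V_{\tau+1})(s)-V_\tau(s)\bigr],
\end{equation*}
the first bracket is bounded by $\gamma\,\eta_{\tau+1}$ because $V^*_{\tau+1}\ge V_{\tau+1}$ and $\mathcal{H}$ is a monotone $\gamma$-contraction in the sup-norm restricted to reachable occupancy states (successors of reachable states at stage $\tau$ are themselves reachable at $\tau+1$), while the second is bounded by $\epsilon_\tau$ from the previous step. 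Hence $\eta_\tau\le \gamma\,\eta_{\tau+1}+\epsilon_\tau$ with $\eta_\ell=0$, and unrolling yields
\begin{equation*}
\eta_0\le \sum_{\tau=0}^{\ell-1}\gamma^\tau\epsilon_\tau=\frac{2c\,\delta_{\tilde{S}_{0:}}}{1-\gamma}\sum_{\tau=0}^{\ell-1}\bigl(\gamma^\tau-\gamma^\ell\bigr)=\frac{2c\,\delta_{\tilde{S}_{0:}}}{(1-\gamma)^2}\bigl(1-(\ell+1)\gamma^\ell+\ell\gamma^{\ell+1}\bigr),
\end{equation*}
which is exactly the claimed bound once we note that $\upsilon^*_0(s_0)-\upsilon_0(s_0)\le\eta_0$ (with equality, since $s_0$ is the only stage-$0$ occupancy state).

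The routine parts are the geometric-sum simplification and the contraction estimate. The hard part will be the per-stage sampling bound: one must (i) pin down the correct coordinate-range constant $2c\,\tfrac{1-\gamma^{\ell-\tau}}{1-\gamma}$ for the $\beta$-vectors and apply the $\ell_1$--$\ell_\infty$ Hölder pairing cleanly over the occupancy simplex, and (ii) justify that the vector $\tilde{\alpha}$ PBVI stores at $s'$ dominates \emph{every} candidate vector at $s'$, including $\alpha^\star$, so that the cross term vanishes. Point (ii) is precisely where Theorem~\ref{thm:bellman:equations} is essential: it certifies that the HIS backward-induction backup is an exact maximizer over joint decision rules, so no accuracy is lost relative to the standard backup and the familiar Pineau-style argument transfers verbatim. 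This is also what makes the accompanying claim precise, namely that HIS buys an exponential speed-up in the backup at no cost in the error bound.
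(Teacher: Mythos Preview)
Your proof is correct and reaches the stated bound, but the decomposition differs from the paper's. The paper does not set up a recursion on a worst-case error $\eta_\tau$; instead it telescopes along the \emph{trajectory} $s^*_{0:}$ generated by an optimal joint policy $a^*_{0:}$, obtaining
\[
\upsilon^*_0(s_0)-\upsilon_0(s_0)=\sum_{\tau=0}^{\ell-1}\gamma^\tau\bigl(\upsilon^{a^*_\tau}_\tau(s^*_\tau)-\upsilon_\tau(s^*_\tau)\bigr),
\]
where $\upsilon^{a_\tau}_\tau(s)\doteq\pmb{R}(s,a_\tau)+\gamma\,\upsilon_{\tau+1}(\pmb{T}(s,a_\tau))$ (after reducing, without loss of generality, to deterministic transitions by absorbing the public observation). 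It then inserts the nearest sampled point $s_\tau\in\tilde{S}_\tau$ to each $s^*_\tau$, uses that the PBVI backup at $s_\tau$ dominates the particular choice $a^*_\tau$, and bounds each summand via the same H\"older/coordinate-range step you invoke. Your route instead bounds $(\mathcal{H} V_{\tau+1}-V_\tau)$ uniformly over reachable states and closes with the $\gamma$-contraction of $\mathcal{H}$ to get $\eta_\tau\le\gamma\eta_{\tau+1}+\epsilon_\tau$. Both arguments rest on exactly the same ingredients---linearity of the backed-up $\alpha$-vectors, the nearest-sample trick, the $\ell_1$--$\ell_\infty$ pairing, and the range bound $2c(1-\gamma^{\ell-\tau})/(1-\gamma)$---and produce the identical geometric sum. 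Your version is the classical Pineau-style modular analysis and makes explicit (correctly) that Theorem~\ref{thm:bellman:equations} is what licenses treating the HIS backup as an exact $\arg\max$ over joint decision rules; the paper's telescoping avoids the explicit contraction step by following a single optimal trajectory and does not reference the HIS machinery in the bound at all. Neither route buys a tighter constant.
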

It is worth noticing that whenever $\ell $ goes to infinity, our bound meets that from \citet{pineau2003point} for infinite-horizon partially observable Markov decision processes.

\new{
\begin{theorem} \label{th:complexity}
Let $|\tilde{S}^*| = \max_{t\in {0,1,\ldots,\ell-1}} |\tilde{S}_t|$ be the maximum size of the selected spaces of occupancy states.
The complexity of the PBVI algorithm under HIS is about $\pmb{O}\left(n \ell |\tilde{S}^*|^2 |Z^*|^{n\ell} |U^*|^{1+n(\ell+1)}\right)$.
\end{theorem}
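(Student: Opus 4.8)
The plan is to read off the running time of the PBVI algorithm factor by factor from its algorithmic structure, and then to collapse the combinatorics using the hierarchical-information-sharing (HIS) structure. PBVI sweeps the $\ell$ stages; at each stage $\tau$ it forms $\mathcal{V}_\tau = \{\mathtt{backup}(s_\tau,\mathcal{V}_{\tau+1}) : s_\tau\in\tilde{S}_\tau\}$, performing one backup per occupancy state, hence at most $|\tilde{S}^*|$ backups per stage. This already isolates the factor $\ell\,|\tilde{S}^*|$, so the claim reduces to bounding the cost of a single call $\mathtt{backup}(s_\tau,\mathcal{V}_{\tau+1})$.

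First I would bound the per-backup work. By its definition, $\mathtt{backup}(s_\tau,\mathcal{V}_{\tau+1})$ ranges over the $\alpha$-components of $\mathcal{V}_{\tau+1}$, each inducing one linear $\beta_\tau^{a_{\tau+1:}}$, and for each it solves the extensive-form game $\bar{G}_{s_\tau}^{\beta_\tau}$. Since $\mathcal{V}_{\tau+1}$ carries exactly one component per point of $\tilde{S}_{\tau+1}$, we have $|\mathcal{V}_{\tau+1}|\le|\tilde{S}^*|$, which contributes the second factor $|\tilde{S}^*|$. By Theorem \ref{thm:bellman:equations}, one such game is solved by backward induction in time $\mathbf{O}(n\,|\Sigma|\,|U^*|)$. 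Multiplying the three factors gives the provisional bound $\mathbf{O}(n\,\ell\,|\tilde{S}^*|^2\,|\Sigma|\,|U^*|)$, so everything now rests on estimating $|\Sigma|$.

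The crux is to bound the number of nodes $|\Sigma|$ of $\bar{G}_{s_\tau}^{\beta_\tau}$. Each node is $\varsigma_\tau^i=(s_\tau,o^i_\tau,u^{:i-1}_\tau)$, so $|\Sigma|\le\sum_{i=1}^n |O^i|\,|U^*|^{i-1}$, where $|O^i|$ is the number of reachable private histories of player $i$. Here HIS does the heavy lifting: since player $i$'s observation encodes the action and observation of every subordinate, $o^i_\tau$ is identified with the \emph{joint} history of the sub-team $\{1,\dots,i\}$, whence $|O^i|$ is at most the total number of joint histories $(|U^*|\,|Z^*|)^{n\ell}=|U^*|^{n\ell}|Z^*|^{n\ell}$ (worst case $i=n$, $\tau\le\ell$). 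The annotation $u^{:i-1}_\tau$ ranges over at most $|U^*|^{n}$ current-stage action tuples, so each summand is at most $|U^*|^{n(\ell+1)}|Z^*|^{n\ell}$, and since the sum over $i$ is dominated by its largest term, $|\Sigma|\le|U^*|^{n(\ell+1)}|Z^*|^{n\ell}$. Substituting into $\mathbf{O}(n\,\ell\,|\tilde{S}^*|^2\,|\Sigma|\,|U^*|)$ yields exactly $\pmb{O}(n\,\ell\,|\tilde{S}^*|^2\,|Z^*|^{n\ell}\,|U^*|^{1+n(\ell+1)})$.

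I expect the node count $|\Sigma|$ to be the only genuine obstacle; the rest is bookkeeping. The delicate point is justifying that HIS lets the per-player history counts collapse into a single count of joint histories, rather than a product of $n$ independent per-player history sets, which is precisely what turns a double-exponential enumeration into the stated single-exponential order. I would also be careful that (i) the geometric sums over stages $\tau\le\ell$ and over players $i\le n$ are absorbed into the leading term without changing the order, and (ii) the bound is stated as ``about'' because $|O^i|$ is upper-bounded by all joint histories, ignoring that a fixed occupancy state $s_\tau$ reaches only a subset; the nested-occupancy compression of Theorem \ref{thm:compression} only tightens this further and so does not affect the worst-case order.
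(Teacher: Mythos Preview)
Your proof is correct and follows essentially the same approach as the paper's: count $\ell|\tilde{S}^*|^2$ extensive-form subgames (one per stage, per occupancy state, per $\beta_\tau$ drawn from $\mathcal{V}_{\tau+1}$), solve each by backward induction in $\mathbf{O}(n|\Sigma||U^*|)$, and bound $|\Sigma|\le(|Z^*||U^*|)^{n\ell}|U^*|^n$ using that under HIS player $i$'s private history determines the joint history of players $1,\dots,i$. Your write-up is in fact more explicit than the paper's on why the joint-history bound applies and on absorbing the sum over players into the leading term.
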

\begin{proof}
    As stated in \mbox{\Cref{sec:optimally_solving_G}}, the complexity of solving a single-stage subgame $G_{s_\tau}^{\beta_\tau}$ is about $\pmb{O}(n |\Sigma||U^*|)$, where $|\Sigma|$ is the size of the extensive-form game. Since the set of reachable histories for each player is bounded in size by $(|Z^*||U^*|)^{n\ell}$, we get $|\Sigma| \leq (|Z^*||U^*|)^{n\ell} |U^*|^n$. \\
    At stage $\tau$ a subgame is solved for each $s_\tau \in \tilde S_{\tau}$ and each $\beta_\tau$ (obtained from $\tilde S_{\tau + 1})$.
    Thus, at most $\ell |\tilde S^*|^2$ subgames are solved on the whole horizon. \\
    As a consequence, the total complexity is about 
    $\mathbf{O}(\ell |\tilde{S}^*|^2 (|Z^*|  |U^*|)^{n\ell}  |U^*|^n n|U^*|) = \mathbf{O}(n \ell |\tilde{S}^*|^2 |Z^*|^{n\ell}  |U^*|^{1+n(\ell+1)})$.
\end{proof}
As a comparison, the number of joint decision rules at step-time $\tau$ is bounded by $|U^*|^{(|Z^*||U^*|)^{n \ell}}$.
Thus, using similar reasoning as in the proof of \mbox{\Cref{th:complexity}},
the complexity of the vanilla PBVI algorithm on the single-agent reformulation is about $\pmb{O}\left(\ell |\tilde{S}^*|^2 |U^*|^{(|Z^*||U^*|)^{n \ell}}\right)$
%
}


\section{Experiments} \label{sec:experiments}

This section presents the outcomes of our experiments, which were carried out to juxtapose our findings with the leading-edge theory employed in global methods, encompassing the utilization of the PBVI algorithm as a standard algorithmic scheme. Our analysis involves three variants of the PBVI algorithm, namely PBVI$^{enum}$, PBVI$^{milp}$, and hPBVI, each employing distinct methods of performing point-based backups. PBVI$^{enum}$ relies on brute-force enumeration of joint decision rules. At the same time, PBVI$^{milp}$ utilizes mixed-integer linear programs (MILPs) for implicit enumeration following the state-of-art approach for general Dec-POMDPs \citep{Dibangoye2016}. \new{In contrast, hPBVI leverages the subgame solving methods described above.} We used ILOG CPLEX Optimization Studio to solve the MILPs. Finally, hPBVI incorporates our findings to facilitate point-based backups under hierarchical information sharing.  Global methods are not designed to scale up with players. To present a comprehensive view, we have also compared our results against local policy- and value-based methods, \ie \old{asynchronous} \new{advantage} actor-critic (A2C) \cite{Konda1999ActorCriticA} and independent $Q$-learning (IQL) \cite{Tan:1997:MRL:284860.284934}, respectively. The experiments were executed on an Ubuntu machine with 32GB of available RAM and a 2.5GHz processor, utilizing only one core, with a time limit of 30 minutes.

We have comprehensively assessed various algorithms using several two-player benchmarks sourced from academic literature, available at \url{masplan.org}. These benchmarks encompass mabc, recycling, grid3x3, boxpushing, mars, and tiger. To enable a comparison of multiple players, we have also introduced the multi-player variants of these benchmarks. Please refer to Appendix \ref{sec:multi:player:benchmarks}  for a detailed definition of these multi-player benchmarks. 

\newcommand{\oot}{\multicolumn{2}{c}{\sc oot}}

\begin{table}
\label{table:results}
\resizebox{.5\textwidth}{!}{ 
\begin{tabular}{@{}l rr rr rr rr rr } 

\toprule%
 \centering%
 & \multicolumn{2}{c}{{{\bfseries hPBVI}}}
 & \multicolumn{2}{c}{{{\bfseries PBVI$^{milp}$}}}
 & \multicolumn{2}{c}{{{\bfseries PBVI$^{enum}$}}}
 & \multicolumn{2}{c}{{{\bfseries A2C}}}
 & \multicolumn{2}{c}{{{\bfseries IQL}}}
 \\

\cmidrule[0.4pt](r{0.125em}){1-1}%
\cmidrule[0.4pt](lr{0.125em}){2-3}%
\cmidrule[0.4pt](lr{0.125em}){4-5}%
\cmidrule[0.4pt](lr{0.125em}){6-7}%
\cmidrule[0.4pt](lr{0.125em}){8-9}%
\cmidrule[0.4pt](lr{0.125em}){10-11}%

tiger(2)& \highest{0.18} & \highest{112.50} & 1.63 & 91.81 & \oot & -- & 95.73 & -- & 80.15  \\
\myrowcolour tiger(3) & \highest{1.05} & \highest{262.50} & 141.72 & 218.81 & \oot & -- & 167.16 & -- & 255.99  \\
tiger(4) & \highest{6.28} & \highest{393.75} &  \oot & \oot & -- & 207.70 & -- & 218.47  \\
\myrowcolour tiger(6) & \highest{912.63} & \highest{483.78} & \oot & \oot  & -- & 200.96 & -- & -129.51  \\

\myrowcolour recycling(2) & \highest{0.02} & \highest{93.73} & {0.78} & \highest{93.73} & {0.04} & \highest{93.73} & -- & 93.34 & -- & 93.02  \\
recycling(3)  & \highest{0.05} & \highest{252.83} & 19.28 & \highest{252.83} & 143.59 & 247.80 & -- & 142.00 & -- & 129.57  \\
\myrowcolour recycling(4)  & \highest{0.19} & \highest{310.07} &  835.96 & 283.05 & \oot  & -- & 181.25 & -- & 153.03  \\
recycling(6)  & \highest{1.91} & \highest{459.78} &  \oot & \oot & -- & 186.11
& -- & 197.93  \\
\myrowcolour recycling(8)  & \highest{138.28} & \highest{600.00} &  \oot & \oot  & -- & 126.19 & -- & 244.02 \\

mabc(2) & 0.05 & \highest{27.42} & 0.15 & 27.40 & \highest{0.04} & \highest{27.42} & -- & 27.18 & -- & 27.2  \\

\myrowcolour%
mabc(3) & \highest{0.03} & \highest{23.24} & 1.21 & \highest{23.24} & 0.65 & \highest{23.24} & -- & 23.27 & -- & 23.24  \\

mabc(4) & \highest{0.07} & \highest{24.94}
& 223.26 & \highest{24.94} &  \oot & -- & 24.36 & -- & \highest{24.94}   \\

\myrowcolour%
mabc(7) & \highest{1.66} & \highest{27.25} & \oot & \oot & -- & 16.72 & -- & 26.82  \\

mabc(10) & \highest{139.82} & \highest{27.75} & \oot & \oot & -- & 12.25 & -- & 24.84  \\

\myrowcolour%
grid3x3(2) & \highest{0.61} & \highest{24.44} & 1329.33 & 24.33 & \oot & -- & 22.93 & --
& 24.35  \\

grid3x3(3) & \highest{65.43} & \highest{28.16} & \oot & \oot & -- & 27.92 & --
& \highest{28.16}  \\

\myrowcolour%
mars(2) & \highest{0.28} & \highest{84.33} &
248.61 & 76.15 &  \oot & -- & 43.20 & -- & 52.86   \\

boxpushing(2) & \highest{0.66} & \highest{675.46} &
24.58 & 576.30 & \oot  & -- & 180.11 & -- & 614.6  \\

\bottomrule

\end{tabular}
}
\vspace{-.25cm}
\caption{Snapshot of empirical results, \cf Appendix \ref{sec:experimental:results}. For each game($n$) and algorithm, we report time (in seconds) per backup and the best value for horizon $\ell=30$. {\sc oot} means time limit of 30 minutes has been exceeded and '--' is not applicable.  }

\end{table}

Our study aimed to assess the reduction in complexity achieved by point-based backups and its effect on solving larger multi-player games. Our findings show that hPBVI performs point-based backups significantly faster than other methods, which enables it to scale up to larger teams, as illustrated in Table \ref{table:results}. Specifically, hPBVI was able to perform point-based backups for up to 10 players in about 139.82 seconds in $\mathtt{mabc(10)}$ at $\ell=30$, while PBVI$^{enum}$ ran out of time for 4 players, and PBVI$^{milp}$ for 5 players. Additionally, hPBVI converges faster than PBVI$^{enum}$ and PBVI$^{milp}$ in 2- to 3-player domains. For example, hPBVI can converge in under 1 second in $\mathtt{grid3x3(2)}$ at $\ell=30$, while PBVI$^{milp}$ takes about 1329.33 seconds, not to mention PBVI$^{enum}$. Our results in Table \ref{table:results} demonstrate that hPBVI can scale up to larger teams of players where neither PBVI$^{milp}$ nor PBVI$^{enum}$ can. 
\new{Figure \ref{fig:backupTimeN:recycling_main} illustrates the capacity of hPBVI to address larger problems when compared to standard PBVI algorithms (a more extensive comparison of computational times is proposed in Appendix \ref{sec:experimental:results}).}

Local methods A2C  and IQL do scale up to larger teams as expected.  Surprisingly, they perform very well on certain domains with weakly coupled players, as shown in $\mathtt{mabc(4)}$ and $\mathtt{grid3x3(3)}$, \cf Table \ref{table:results}. However, hPBVI always performs better  A2C and IQL on all benchmarks except $\mathtt{mabc(4)}$ and $\mathtt{grid3x3(3)}$, which exhibit local behaviors that are global optimal solutions. Moreover, it  converges faster than A2C and IQL on all tested benchmarks,
\old{Figures \ref{fig:anytimeCurves_tiger},\ref{fig:anytimeCurves_recycling},\ref{fig:anytimeCurves_mabc},\ref{fig:anytimeCurves_grid3x3} report anytime 
performances in Appendix \ref{sec:experimental:results}.}
\new{\Cref{fig:anytimeCurves_recycling_main} illustrates anytime performances for the recycling problem 
(\Cref{fig:anytimeCurves_tiger,fig:anytimeCurves_recycling,fig:anytimeCurves_mabc,fig:anytimeCurves_grid3x3} in Appendix \ref{sec:experimental:results} provide more detailed results for each benchmark).}
Although this observation goes beyond our original goal, it provides encouraging insights when comparing local against global methods over teams of medium sizes. Nonetheless, we caution readers against drawing general conclusions from this observation, as different local methods may yield different local optima and convergence rates. 

\begin{figure}[ht]	
	\label{fig:backupTimeN:recycling_main}
    \includegraphics[width=1\columnwidth]{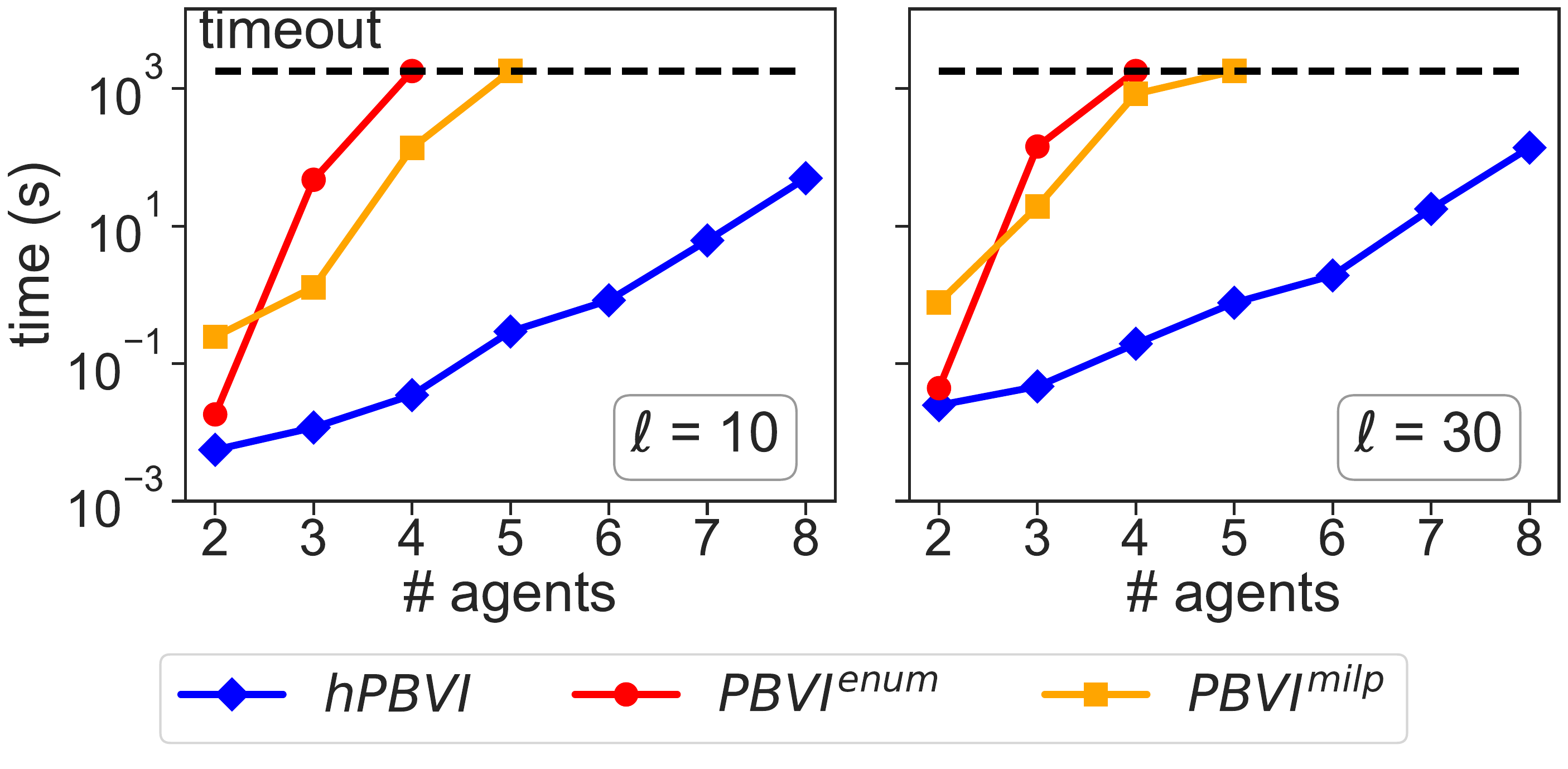}
    \vspace{-0.5cm}
    \caption{Average backup time for the recycling problem with different numbers of agents.}
\end{figure}

\begin{figure}[ht]
	\includegraphics[width=1\columnwidth]{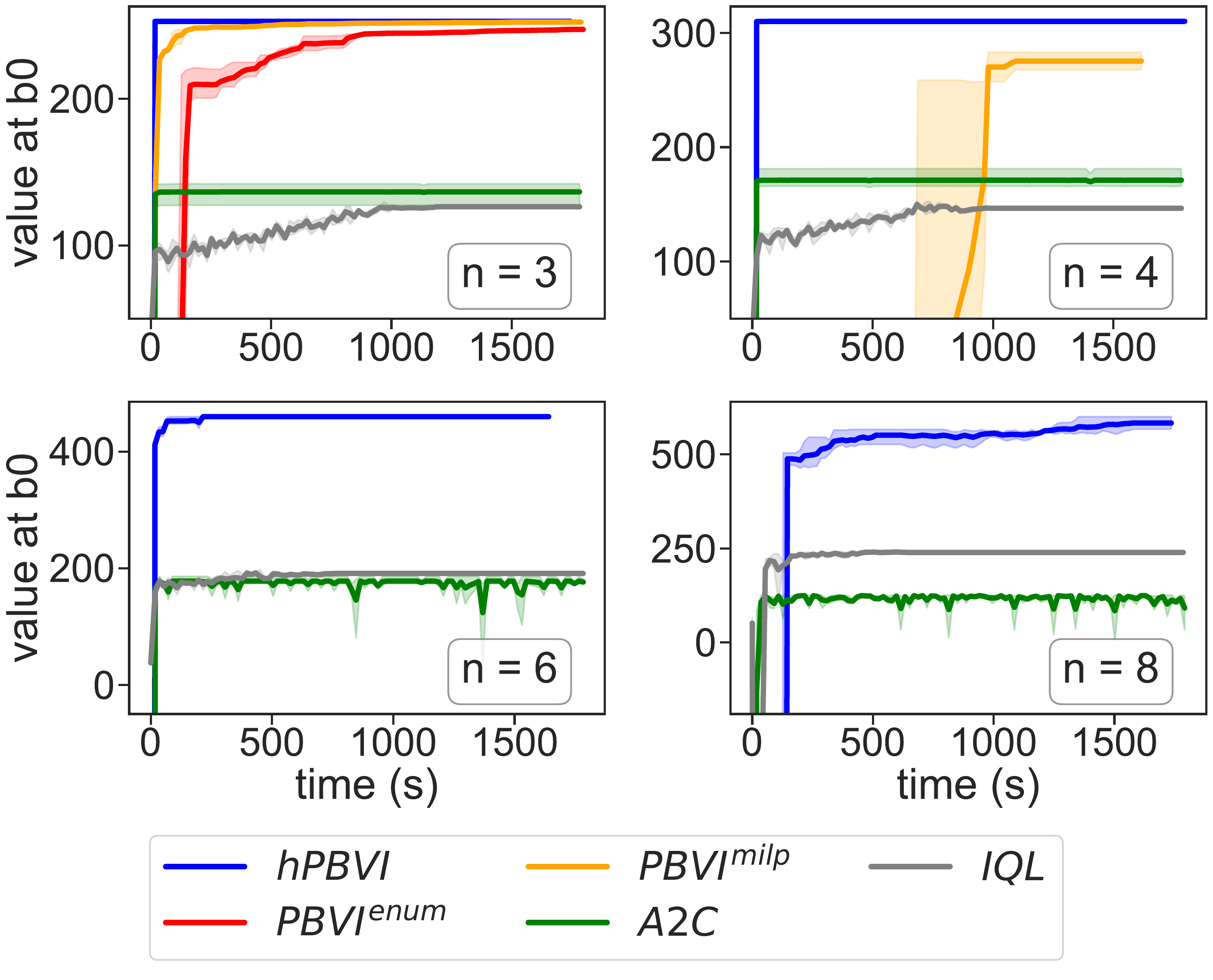}
    \vspace{-0.5cm}
	\caption{Anytime values for the recycling problem with teams of size $n\in\{3,4,6,8\}$ and planning  horizon $\ell = 30$.}
	\label{fig:anytimeCurves_recycling_main} 
\end{figure}

\section{Discussion}

This paper presents a point-based value iteration algorithm for near-optimally solving Dec-POMDPs. It exploits a hierarchical information-sharing structure, a dominant management style in our society for corporations, governments, criminal enterprises, armies, and religions. Under this assumption, it shows that point-based backup operations can be solved as perfect-information extensive-form games without compromising optimality. Doing so results in an exponential complexity drop, allowing global methods to scale up to larger teams of players. A thorough empirical analysis reveals that algorithms utilizing our findings can scale up to larger teams of players. In contrast, the state-of-the-art global approaches quickly ran out of resources. Another important empirical finding is that our approach scales to all medium-sized tested domains while providing equal or better performances than a state-of-the-art local method. 

Traditionally, global methods have been considered ineffective in games that involve medium to large-sized teams of players. For instance, state-of-the-art Dec-POMDP solvers such as FB-HSVI were only designed for two players \citep{OliehoekSDA10,DibangoyeMC09,DibangoyeABC13,Dibangoye2016}. However, we have presented a paper that puts forth several propositions for developing global methods that possess the scalability of local methods while maintaining global guarantees. In applications where the stakes are high and critical, such as search and rescue, security, and healthcare, scalable global methods with more reliable solutions than those from local methods are essential. 

Similarly to \citet{kovavrik2022rethinking}, our paper demonstrates that simultaneous-move games can be solved sequentially and centrally while allowing each player to act optimally in a decentralized manner.  This sequential and centralized training for decentralized execution (SCTDE) approach enables us to leverage private information available to players in a simple manner.  Additionally, the SCTDE approach enables us to reason for each player individually in a way that is similar to extensive-form games. This results in a significant reduction in complexity, especially when faced with public observations. This insight also allows us to transfer theories and algorithms from extensive-form games to simultaneous-move games. While our study demonstrates how to optimally solve single-stage games as extensive-form games, the principle we discussed also applies to planning and learning to act in multi-stage general-sum games. 

Our study focuses on analyzing the line hierarchical structure. Although previous studies, such as \citet{pmlr-v119-xie20a}, have successfully applied the SCTDE approach in two-player common-payoff games, our paper extends the research to multiple players. Additionally, our research can be further expanded to consider other structures, such as a tree structure, where players at the same level are independent. In the past, several forms of structure have been investigated such as dynamics independence \citep{BeckerZLG04}, weak-separability \citep{NairVTY05}, and delayed information-sharing \citep{nayyar2010optimal}. However, it is not clear how the hierarchical assumption affects these structures and the corresponding planning and learning theories. 

\section*{Acknowledgements}
This work was supported by ANR project Planning and Learning to Act in Systems of Multiple Agents under Grant ANR-19-CE23-0018, and ANR project Data and Prior, Machine Learning and Control under Grant ANR-19-CE23-0006, and ANR project Multi-Agent Trust Decision Process for the Internet of Things under Grant ANR-21-CE23-0016, all funded by French Agency ANR.

\section*{Impact Statement}

This paper presents work whose goal is to advance the field of Machine Learning. There are many potential societal consequences of our work, none which we feel must be specifically highlighted here.

 \bibliography{icml2024}
\bibliographystyle{icml2024}

\newpage
\appendix
\onecolumn

\section{The PBVI Algorithm}
\label{sec:pbvi:algorithm}

This section presents a pseudocode for the point-based value iteration algorithm to solve decentralized, partially observable Markov decision processes with hierarchical information sharing near-optimally.


 \begin{algorithm}[!ht]

${\mathtt{function}~ \mathtt{PBVI}()}$
\begin{algorithmic}
 	\STATE Initialize $\tilde{S}_{0:}$ and $\mathcal{V}_{0:}$.	
 	
	\WHILE{$\mathcal{V}_{0:}$ has not converged} 	
		\STATE $\mathtt{improve}(\mathcal{V}_{0:},\tilde{S}_{0:})$.
		\STATE $\tilde{S}_{0:} \gets \mathtt{expand}(\tilde{S}_{0:})$.
	 \ENDWHILE{}	
\end{algorithmic}
${\mathtt{function}~ \mathtt{improve}(\mathcal{V}_{0:},\tilde{S}_{0:})}$
\begin{algorithmic}
	\FOR{$\tau=\ell-1$ to $0$}
		\FOR{$s_\tau \in \tilde{S}_\tau$}
			\STATE $\mathcal{V}_\tau \gets \mathcal{V}_\tau\cup\{\mathtt{backup}(s_\tau, \mathcal{V}_{\tau+1})\}$. 
		\ENDFOR
	\ENDFOR
\end{algorithmic}
\caption{PBVI for $M'$ under HIS. }
\label{pbvi:his:dec:pomdp}
\end{algorithm}

\section{Proof of Theorem \ref{thm:sufficiency}}
\label{appendix:thm:sufficiency}

This section shows that the nested-occupancy state is a sufficient statistic for the central planner to optimally solve perfect-information extensive-form game $\bar{G}_{s_\tau}^{\beta_\tau}$ for any player $i$.

\begin{proof}
The nested-occupancy state is a sufficiency statistic of the total data available to the planner at any player $i$ for optimally solving $\bar{G}_{s_\tau}^{\beta_\tau}$, if it is sufficient to predict (1) the next nested-occupancy state and (2) the immediate reward.
Let the total data available to the planner at player $i$ be $\varsigma_\tau^i \doteq (s_\tau, o^i_\tau, u^{:i-1}_\tau)$.
Let $u^i_\tau$ be the action chosen at player $i$ after experiencing total data $\varsigma_\tau^i$.
The nested-occupancy state $s^i_\tau \doteq (b^i_\tau, o^i_\tau, u^{:i-1}_\tau)$ summarizing $\varsigma_\tau^i$ is sufficient to predict the next nested-occupancy state $s^{i+1}_\tau$,  if and only if the following holds  $\Pr\{s^{i+1}_\tau | \varsigma^i_\tau,u^i_\tau\} = \Pr\{s^{i+1}_\tau | s^i_\tau,u^i_\tau\}$.
To prove this property, we start with the definition of a nested-belief state $b^i_\tau$ associated with nested-occupancy state $s^i_\tau$, \ie
\begin{align*}
\Pr\{s^{i+1}_\tau | \varsigma^i_\tau, u_\tau^i\} &\doteq \Pr\{b^{i+1}_\tau, o^{i+1}_\tau, u^{:i,\bullet}_\tau | s_\tau, o^i_\tau, u^{:i,\circ}_\tau\},&\text{(by definition of $\varsigma_\tau^i$ and $s^i_\tau$)}\\
&= \Pr\{b^{i+1}_\tau, o^{i+1}_\tau | s_\tau, o^i_\tau,  u^{:i,\circ}_\tau\}  \cdot \Pr\{u^{:i,\bullet}_\tau | s_\tau, o^i_\tau,  u^{:i,\circ}_\tau\},&\text{(by application of the Bayes rule)}\\
&= \Pr\{b^{i+1}_\tau, o^{i+1}_\tau | \varsigma_\tau^i \}  \cdot \Pr\{u^{:i,\bullet}_\tau |   u^{:i,\circ}_\tau\},&\text{(by checking constraint $u^{:i,\bullet}_\tau = u^{:i,\circ}_\tau$)}\\
&= b^i_\tau(b^{i+1}_\tau, o^{i+1}_\tau) \cdot \mathbb{1}\{ u^{:i,\circ}_\tau = u^{:i,\bullet}_\tau \},&\text{(by definition of $b^i_\tau$)}.
\end{align*}
It will prove useful to define transition rule $\tilde{T}\colon (s^{i+1}_\tau|b^i_\tau, u_\tau^{:i}) \mapsto \Pr\{s^{i+1}_\tau | b^i_\tau, u_\tau^{:i}\}$, describing the probability to transitionning into nested-occupancy state $s^{i+1}_\tau$ upon talking action $u^i_\tau$ in nested-occupancy state $s^i_\tau$. Notice that the transition does not depend on the current history $o^i_\tau$, or it does so only through $(b^i_\tau, u_\tau^{:i})$.
Next, we show the sufficiency of nested-occupancy states to predict immediate rewards. Since rewards occur only at player $n$, we shall only consider nested occupancy states at that player. The nested-occupancy state $s^n_\tau \doteq (b^n_\tau, o^n_\tau, u^{:n-1}_\tau)$ summarizing $\varsigma_\tau^n$ is sufficient to predict the immediate reward upon taking action $u^n_\tau$,  if and only if there exists a reward function $(s_\tau^n,u^n_\tau) \mapsto \tilde{R} (s_\tau^n,u^n_\tau)$ such that the following holds: $R(\varsigma_\tau^n,u^n_\tau) =  \tilde{R} (s_\tau^n,u^n_\tau)$.  To prove this statement, we start with the definition of $R(\varsigma_\tau^n,u^n_\tau)$, \ie
\begin{align*}
R(\varsigma_\tau^n,u^n_\tau) &\doteq \mathbb{E}_{x\sim \Pr\{\cdot|\varsigma_\tau^n,u_\tau^n\}}\{\beta_\tau(x,o,u)\},& \text{(by definition of $R(\varsigma_\tau^n,u^n_\tau)$)}\\
&= \mathbb{E}_{x\sim b^n_\tau(\cdot)}\{\beta_\tau(x,o,u)\},& \text{(by definition of $b^n_\tau$)}.
\end{align*}
If we let $ \tilde{R} (s_\tau^n,u^n_\tau) \doteq \mathbb{E}_{x\sim b^n_\tau(\cdot)}\{\beta_\tau(x,o,u)\}$, then the statement holds.

Nested-occupancy states describe a perfect-information extensive-form game $\tilde{G}_{s_\tau}^{\beta_\tau}\doteq \langle n, \tilde{\Sigma}, \tilde{\Psi}, \tilde{T},\tilde{R} \rangle$ where nodes $\tilde{\Sigma}$ are nested-occupancy states,  $\tilde{\Psi} \colon \tilde{\Sigma} \to 2^{(\cup_{i=1}^n U^i)}$ specifies the action set available to each nested-occupancy state, and $\tilde{T}$ and $\tilde{R}$ are already defined. Clearly, any optimal solution for perfect-information extensive-form game $\tilde{G}_{s_\tau}^{\beta_\tau}$ is also optimal for perfect-information extensive-form game $\bar{G}_{s_\tau}^{\beta_\tau}$. To prove this statement, we need to prove that the optimal action-value functions $\tilde{\beta}_\tau^{1:n,*}$ of $\tilde{G}_{s_\tau}^{\beta_\tau}$ such that at any player $i$ and node $\varsigma_\tau^i$ (resp. nested-occupancy state $s^i_\tau$) and action $u^i_\tau$ the following equality holds: $\beta_\tau^{i,*}(\varsigma_\tau^i,u_\tau^i) = \tilde{\beta}_\tau^{i,*}(s_\tau^i,u_\tau^i)$. We prove the statement by induction. The statement trivially holds at player $n$ since $R(\varsigma_\tau^n,u^n_\tau) =  \tilde{R} (s_\tau^n,u^n_\tau)$, then  $\beta_\tau^{n,*}(\varsigma_\tau^i,u_\tau^i) \doteq  R(\varsigma_\tau^n,u^n_\tau) = \tilde{R} (s_\tau^n,u^n_\tau) \doteq \tilde{\beta}_\tau^{i,*}(s_\tau^i,u_\tau^i)$. Suppose the statement hold at player $i+1$ onward, \ie $\beta_\tau^{i+1,*}(\varsigma_\tau^{i+1},u_\tau^{i+1}) = \tilde{\beta}_\tau^{{i+1},*}(s_\tau^{i+1},u_\tau^{i+1})$. We are now ready to show it also hold at player $i$. We start with the expression of the optimal action-value function $\beta_\tau^{i,*}$, \ie
\begin{align*}
\beta^{i,*}_\tau(\varsigma_\tau^i, u^i_\tau) &= \textstyle \mathbb{E}_{\varsigma_\tau^{i+1}\sim T(\cdot|\varsigma_\tau^i, u^i_\tau) }\{  \max_{u^{i+1}_\tau}\beta_\tau^{i+1,*}(\varsigma_\tau^{i+1}, u_\tau^{i+1})\}, & \text{(by Theorem \ref{thm:bellman:equations})}\\
&=\textstyle  \mathbb{E}_{s_\tau^{i+1}\sim \Pr\{\cdot|\varsigma_\tau^i, u^i_\tau\} }\{  \max_{u^{i+1}_\tau}\tilde{\beta}_\tau^{i+1,*}(s_\tau^{i+1}, u_\tau^{i+1})\},& \text{(by induction hypothesis)}\\
&=\textstyle  \mathbb{E}_{s_\tau^{i+1}\sim \tilde{T}(\cdot|b_\tau^i, u^{:i}_\tau) }\{  \max_{u^{i+1}_\tau}\tilde{\beta}_\tau^{i+1,*}(s_\tau^{i+1}, u_\tau^{i+1})\},& \text{(by definition of $\tilde{T}(\cdot|b_\tau^i, u^{:i}_\tau) $)}\\
&= \tilde{\beta}^{i,*}_\tau(s_\tau^i, u^i_\tau) ,& \text{(by definition of $ \tilde{\beta}^{i,*}_\tau$)}.
\end{align*}
The statement holds for player $i$, thus for any arbitrary player. Consequently, one can use nested-occupancy states instead of total data available to the planner without compromising optimality, which ends the proof.
\end{proof}

\section{Equivalence Relations}

This section presents crucial properties necessary for grouping private histories that convey the same information about the game. To cluster two private histories of a player and reason similarly for the entire cluster, it is imperative to ensure that making identical immediate and future decisions for all members in the cluster does not compromise optimality. To do so, we need to specify how the information we rely on to make decisions evolves over stages. In particular, we need to exhibit rules for calculating the next-stage nested-occupancy state given the current one and decisions made at the current stage.

\subsection{Predicting Next-Stage Observations}

This subsection shows that the observation at stage $\tau+1$ and player $i$ can be accurately predicted using only the nested occupancy states at stage $\tau$ and player $i$ along with decision rules at stage $\tau$ and players $i$ to $n$.
\begin{lemma}
\label{lem:observation}
Let $\varsigma_\tau^i \doteq \langle s_\tau, o^i_\tau, u^{:i-1}_\tau \rangle$ be total data available to the planner at stage $\tau$ and player $i$. Let $a^{i:}_\tau$ be the decision rules for player $i$ to player $n$ at stage $\tau$. Let $s_\tau^i \doteq \langle b^i_\tau, o^i_\tau, u^{:i-1}_\tau \rangle$ be the nested-occupancy state at stage $\tau$ and player $i$ summarizing total data $\varsigma_\tau^i $. The probability $\Pr\{z^i_{\tau+1}| \varsigma^i_\tau, a^{i:}_\tau\}$ that the planner receives observation $z^i_{\tau+1}$ on behalf of player $i$ upon acting according to $\langle u^i_\tau, a^{i+1:}_\tau\rangle$ starting in total data $\varsigma_\tau^i$ satisfies the following recursion:
\begin{align}
\Omega^i(z^i_{\tau+1}| b^i_\tau,  u^{:i}_\tau, a^{i+1:}_\tau) &=  \sum_{s^{i+1}_\tau} \tilde{T}(s^{i+1}_\tau|b^i_\tau, u^{:i}_\tau) \sum_{z^{i+1}_{\tau+1}} \mathbb{1}\{ z^i_{\tau+1}\sqsubseteq \zeta^{i+1}( z^{i+1}_{\tau+1})\}\cdot \Omega^{i+1}(z^{i+1}_{\tau+1}| b^{i+1}_\tau,  u_\tau^{:i}, a^{i+1}_\tau(o^{i+1}_\tau), a^{i+2:}_\tau),
\label{eqn:lem:observation}
\end{align}
with boundary condition $\Omega^n(z^n_{\tau+1}| b^n_\tau,  u_\tau^{:n}) \doteq \sum_{x}\sum_{y} b^n_\tau(x)\cdot p(y,z^n_{\tau+1}|x,u_\tau^{:n})$.
\end{lemma}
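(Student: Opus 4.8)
The plan is to establish the identity $\Pr\{z^i_{\tau+1}|\varsigma^i_\tau, a^{i:}_\tau\} = \Omega^i(z^i_{\tau+1}|b^i_\tau, u^{:i}_\tau, a^{i+1:}_\tau)$, where $u^i_\tau \doteq a^i_\tau(o^i_\tau)$, by backward induction on the player index, from player $n$ at the top of the hierarchy down to the player $i$ of interest. For the base case $i=n$, all actions $u^{:n}_\tau$ are fixed, and by definition of the nested-belief state $b^n_\tau(x) = \Pr\{x|\varsigma^n_\tau\}$. Conditioning on the hidden state and applying the transition model $p$ then gives
$$\Pr\{z^n_{\tau+1}|\varsigma^n_\tau, u^n_\tau\} = \sum_{x,y} b^n_\tau(x)\, p(y, z^n_{\tau+1}|x, u^{:n}_\tau) = \Omega^n(z^n_{\tau+1}|b^n_\tau, u^{:n}_\tau),$$
which matches the stated boundary condition.

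For the inductive step, assume the claim holds at player $i+1$. The crux is the HIS projection structure: the function $\zeta^{i+1}$ guarantees $\zeta^{i+1}(z^{i+1}_{\tau+1}) = (u^i_\tau, z^i_{\tau+1})$, so player $i$'s observation is a deterministic component of player $i+1$'s. I would first collapse player $i+1$'s observation distribution along the fibers of $\zeta^{i+1}$ to recover player $i$'s,
$$\Pr\{z^i_{\tau+1}|\varsigma^i_\tau, a^{i:}_\tau\} = \sum_{z^{i+1}_{\tau+1}} \mathbb{1}\{z^i_{\tau+1} \sqsubseteq \zeta^{i+1}(z^{i+1}_{\tau+1})\}\cdot \Pr\{z^{i+1}_{\tau+1}| \varsigma^i_\tau, a^{i:}_\tau\}.$$
Next I would condition on the level transition. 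Writing $\varsigma^{i+1}_\tau = (\varsigma^i_\tau, u^i_\tau, o^{i+1}_\tau)$ and inserting the nested-occupancy state $s^{i+1}_\tau$ that summarizes it, I invoke the sufficiency of nested-occupancy states (Theorem \ref{thm:sufficiency}) to replace $\Pr\{s^{i+1}_\tau|\varsigma^i_\tau, u^i_\tau\}$ with the transition $\tilde{T}(s^{i+1}_\tau|b^i_\tau, u^{:i}_\tau)$, obtaining
$$\Pr\{z^{i+1}_{\tau+1}|\varsigma^i_\tau, a^{i:}_\tau\} = \sum_{s^{i+1}_\tau} \tilde{T}(s^{i+1}_\tau|b^i_\tau, u^{:i}_\tau)\cdot \Pr\{z^{i+1}_{\tau+1}|\varsigma^{i+1}_\tau, a^{i+1:}_\tau\}.$$

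Finally I would apply the induction hypothesis to the rightmost factor, reading the player-$(i+1)$ action $a^{i+1}_\tau(o^{i+1}_\tau)$ off from the history contained in $s^{i+1}_\tau$, and then exchange the order of the two summations; this reproduces exactly the recursion in \eqref{eqn:lem:observation}. The main obstacle is justifying the first marginalization rigorously: I must argue from the HIS property that the subordinate's observation is entirely embedded in the superior's, so that summing the joint probability over all $z^{i+1}_{\tau+1}$ consistent with $z^i_{\tau+1}$ yields the marginal, and I must check that conditioning on the fixed action $u^i_\tau$ (the action component of $\zeta^{i+1}$) is automatically consistent with the transition already taken, so that the indicator need only match the observation component. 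A secondary point requiring care is the precise appeal to Theorem \ref{thm:sufficiency} when passing from conditioning on the total data $\varsigma^{i+1}_\tau$ to conditioning on the nested-occupancy state $s^{i+1}_\tau$ through $\tilde{T}$.
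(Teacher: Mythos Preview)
Your proposal is correct and follows essentially the same approach as the paper's proof: backward induction on the player index, expanding over the superior's observation $z^{i+1}_{\tau+1}$ and nested-occupancy state $s^{i+1}_\tau$, using the HIS projection $\zeta^{i+1}$ to collapse $z^{i+1}_{\tau+1}$ to $z^i_{\tau+1}$, and invoking the definition of $\tilde{T}$ (established in the proof of Theorem~\ref{thm:sufficiency}) to replace $\Pr\{s^{i+1}_\tau|\varsigma^i_\tau,u^i_\tau\}$. The only cosmetic difference is that the paper first expands jointly over $s^{i+1}_\tau$ and $z^{i+1}_{\tau+1}$ and then factors, whereas you marginalize over $z^{i+1}_{\tau+1}$ first and then condition on $s^{i+1}_\tau$; the substance is identical.
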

\begin{proof}
Starting from conditional probability distribution $\Pr\{z^i_{\tau+1}| \varsigma^i_\tau,  u^i_\tau, a^{i+1:}_\tau\} $ and expanding over nested-occupancy states $s^{i+1}_\tau$ and observations  $z^{i+1}_{\tau+1}$ of player $i+1$
\begin{align*}
\Pr\{z^i_{\tau+1}| \varsigma^i_\tau,  u^i_\tau, a^{i+1:}_\tau\} 
&= \sum_{s^{i+1}_\tau}\sum_{z^{i+1}_{\tau+1}} \Pr\{s^{i+1}_\tau, z^{i+1}_{\tau+1}, z^i_{\tau+1}| \varsigma^i_\tau,  u^i_\tau, a^{i+1:}_\tau\}.
\end{align*}
The expansion of the joint probability into the product of conditional probabilities yields the following expression:
\begin{align}
&= \sum_{s^{i+1}_\tau}\sum_{z^{i+1}_{\tau+1}} \Pr\{ z^i_{\tau+1}| s^{i+1}_\tau, z^{i+1}_{\tau+1},\varsigma^i_\tau,u^i_\tau, a^{i+1:}_\tau\}\cdot \Pr\{z^{i+1}_{\tau+1}| s^{i+1}_\tau, \varsigma^i_\tau, u^i_\tau, a^{i+1:}_\tau\}\cdot \Pr\{s^{i+1}_\tau| \varsigma^i_\tau, u^i_\tau, a^{i+1:}_\tau\}.
\label{eqn:lem:observation:1}
\end{align}
The first factor in (\ref{eqn:lem:observation:1}) depends solely upon $z^{i+1}_{\tau+1}$ and not on the tuple $(s^{i+1}_\tau, \varsigma^i_\tau, u^i_\tau, a^{i+1:}_\tau)$, \ie  
\begin{align}
\Pr\{z^i_{\tau+1}| \varsigma^i_\tau, u^i_\tau, a^{i+1:}_\tau\} 
&= \sum_{s^{i+1}_\tau}\sum_{z^{i+1}_{\tau+1}} \Pr\{ z^i_{\tau+1}| z^{i+1}_{\tau+1}\}\cdot \Pr\{z^{i+1}_{\tau+1}| s^{i+1}_\tau, \varsigma^i_\tau, u^i_\tau, a^{i+1:}_\tau\}\cdot \Pr\{s^{i+1}_\tau| \varsigma^i_\tau, u^i_\tau, a^{i+1:}_\tau\}.
\label{eqn:lem:observation:02}
\end{align}
The last factor in (\ref{eqn:lem:observation:02}) depends solely upon $\langle b^i_\tau, u^{:i}_\tau\rangle$ and not on tuple $\langle \varsigma^i_\tau, a^{i+1:}_\tau \rangle$, which becomes after re-arranging terms:
\begin{align}
\Pr\{z^i_{\tau+1}| \varsigma^i_\tau, u^i_\tau, a^{i+1:}_\tau\} &= \sum_{s^{i+1}_\tau} \tilde{T}(s^{i+1}_\tau| b^i_\tau, u^{:i}_\tau)\sum_{z^{i+1}_{\tau+1}} \Pr\{ z^i_{\tau+1}| z^{i+1}_{\tau+1}\}\cdot \Pr\{z^{i+1}_{\tau+1}| s^{i+1}_\tau, \varsigma^i_\tau, u^i_\tau, a^{i+1:}_\tau\}.
\label{eqn:lem:observation:2}
\end{align}
Equation (\ref{eqn:lem:observation:2}) makes it possible to prove the statement, Equation (\ref{eqn:lem:observation}), recursively. We start at player $n-1$, \ie
\begin{align}
\Pr\{z^{n-1}_{\tau+1}| \varsigma^{n-1}_\tau, u^{n-1}_\tau, a^n_\tau\} &= \sum_{s^n_\tau} \tilde{T}(s^n_\tau|b^{n-1}_\tau,  u^{:n-1}_\tau)\sum_{z^n_{\tau+1}} \Pr\{ z^{n-1}_{\tau+1}| z^n_{\tau+1}\}\cdot \Pr\{z^n_{\tau+1}| s^n_\tau, \varsigma^{n-1}_\tau, u^{n-1}_\tau, a^n_\tau\}. \label{eqn:lem:observation:3}
\end{align}
The boundary condition gives the last factor in (\ref{eqn:lem:observation:3}), \ie for nested-occupancy state $s^n_\tau \doteq (b^n_\tau, o^n_\tau, u^{:n-1}_\tau)$,
\begin{align}
\Pr\{z^{n-1}_{\tau+1}| \varsigma^{n-1}_\tau, u^{n-1}_\tau, a^n_\tau\} &= \sum_{s^n_\tau} \tilde{T}(s^n_\tau|b^{n-1}_\tau,  u^{:n-1}_\tau)\sum_{z^n_{\tau+1}} \mathbb{1}\{ z^{n-1}_{\tau+1}\sqsubseteq z^n_{\tau+1}\}\cdot  \Omega^n(z^n_{\tau+1}| b^n_\tau, \langle u_\tau^{:n-1},a^n_\tau(o^n_\tau)\rangle) . \label{eqn:lem:observation:4}
\end{align}
Let $\Omega^{n-1}\colon (z^{n-1}_{\tau+1}| b^{n-1}_\tau,   u^{:n-1}_\tau, a^n_\tau) \mapsto  \ \sum_{s^n_\tau} \tilde{T}(s^n_\tau|b^{n-1}_\tau,  u^{:n-1}_\tau)\sum_{z^n_{\tau+1}} \mathbb{1}\{ z^{n-1}_{\tau+1}\sqsubseteq \zeta^n(z^n_{\tau+1})\}\cdot  \Omega^n(z^n_{\tau+1}| b^n_\tau,  u_\tau^{:n-1},a^n_\tau(o^n_\tau))$ be the observation model for predicting next observation at stage $\tau$ and player $n-1$. Then, statement (\ref{eqn:lem:observation}) holds at stage $\tau$ and player $n-1$. Suppose the statement holds for any arbitrary stage $\tau$ and player $i+1$. We are ready to prove it also holds at stage $\tau$ and player $i$. Starting at (\ref{eqn:lem:observation:2}),  the application of the induction hypothesis yields:
\begin{align}
\Pr\{z^i_{\tau+1}| \varsigma^i_\tau, u^i_\tau, a^{i+1:}_\tau\} &= \sum_{s^{i+1}_\tau} \tilde{T}(s^{i+1}_\tau| b^i_\tau, u^{:i}_\tau)\sum_{z^{i+1}_{\tau+1}} \mathbb{1}\{ z^i_{\tau+1}\sqsubseteq \zeta^{i+1}(z^{i+1}_{\tau+1})\}\cdot \Omega^{i+1}(z^{i+1}_{\tau+1}| b^{i+1}_\tau,  u^{:i}_\tau, a^{i+1:}_\tau).
\label{eqn:lem:observation:5}
\end{align}
If we let $\Omega^i\colon (z^i_{\tau+1}| b^i_\tau,  u^{:i}_\tau, a^{i+1:}_\tau) \mapsto \sum_{s^{i+1}_\tau} \tilde{T}(s^{i+1}_\tau| b^i_\tau, u^{:i}_\tau)\sum_{z^{i+1}_{\tau+1}} \mathbb{1}\{ z^i_{\tau+1}\sqsubseteq \zeta^{i+1}(z^{i+1}_{\tau+1})\}\cdot \Omega^{i+1}(z^{i+1}_{\tau+1}| b^{i+1}_\tau,  u^{:i}_\tau, a^{i+1:}_\tau)$ be the observation model for predicting next observation at stage $\tau$ and player $i$, then statement (\ref{eqn:lem:observation}) holds at stage $\tau$ and player $i$. Thus, the statement holds for any stage and player, which ends the proof.
\end{proof}

\subsection{Predicting Next-Stage Nested-Occupancy States}

This subsection proves the nested-occupancy states describe a Markovian process, \ie the next-stage nested-occupancy state depends only upon the current one. Notice that nested-occupancy states have three components. Only the nested belief states are nonobservable and need to be estimated. If we know how to estimate the nested belief state, we can add the history and actions of subordinates, thereby constructing a nested-occupancy state.

\begin{lemma}
\label{lem:markov}
Let $\varsigma_\tau^i \doteq \langle s_\tau, o^i_\tau, u^{:i-1}_\tau \rangle$ be total data available to the planner at stage $\tau$ and player $i$. Let $a^{i:}_\tau$ be the decision rules for player $i$ to player $n$ at stage $\tau$. Let $s_\tau^i \doteq \langle b^i_\tau,o^i_\tau, u^{:i-1}_\tau \rangle$ be the nested-occupancy state at stage $\tau$ and player $i$ summarizing total data $\varsigma_\tau^i $. The next-stage  nested-belief state $b^i_{\tau+1} \doteq T^i(b^i_\tau,\langle u^i_\tau, a^{i+1:}_\tau\rangle,z^i_{\tau+1})$, upon acting according to $\langle u^i_\tau, a^{i+1:}_\tau\rangle$ in nested-occupancy state $s_\tau^i $ and receiving observation $z^i_{\tau+1}$, satisfies the following recursion: for any history and nested-belief tuple $( o^{i+1}_{\tau+1}, b^{i+1}_{\tau+1})$,
\begin{align*}
b^i_{\tau+1}(o^{i+1}_{\tau+1}, b^{i+1}_{\tau+1})   
&\propto \textstyle \sum_{s^{i+1}_\tau\doteq (b^{i+1}_\tau,o^{i+1}_\tau,u_\tau^{:i})} \tilde{T}(s^{i+1}_\tau | b_\tau^i,  u^{:i}_\tau )\sum_{z^{i+1}_{\tau+1}}  
\mathbb{1}\{  z^i_{\tau+1}\sqsubseteq \zeta^{i+1}(z^{i+1}_{\tau+1})\} \cdot 
\delta_{\langle o^{i+1}_\tau, a^{i+1}_\tau(o^{i+1}_\tau),z^{i+1}_{\tau+1}\rangle}^{o^{i+1}_{\tau+1}}
 \nonumber\\
&\quad 
\delta_{T^{i+1}(b^{i+1}_\tau,\langle u^{:i}_\tau,a^{i+1}_\tau(o^{i+1}_\tau), a^{i+2:}_\tau\rangle ,z^{i+1}_{\tau+1})}^{b^{i+1}_{\tau+1}}
 \cdot 
\Omega^{i+1}(z^{i+1}_{\tau+1}| b^{i+1}_\tau,\langle u^{:i}_\tau,a^{i+1}_\tau(o^{i+1}_\tau), a^{i+2:}_\tau\rangle ).   
\end{align*}
with boundary condition $b^n_{\tau+1}\doteq  T^n(b^n_\tau, u^{:n}_\tau, z^n_{\tau+1})   $ where  $ b^n_{\tau+1}(y) \propto \sum_{x} b^n_\tau(x)\cdot p(y,z^n_{\tau+1}|x,u^{:n}_\tau)$ for any hidden state $y$. 
\end{lemma}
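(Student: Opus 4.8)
The plan is to derive the update map $T^i$ as a Bayesian filter and to prove the recursion by downward induction on the player index, from player $n$ at the top of the hierarchy down to player $i$. Throughout I work from the defining identity $b^i_{\tau+1}(o^{i+1}_{\tau+1}, b^{i+1}_{\tau+1}) = \Pr\{o^{i+1}_{\tau+1}, b^{i+1}_{\tau+1} \mid \varsigma^i_{\tau+1}\}$, noting that $\varsigma^i_{\tau+1}$ is obtained from $\varsigma^i_\tau$ by committing to the prescribed rules $\langle u^i_\tau, a^{i+1:}_\tau\rangle$ and appending the freshly received observation $z^i_{\tau+1}$; hence the posterior rewrites as $\Pr\{o^{i+1}_{\tau+1}, b^{i+1}_{\tau+1} \mid \varsigma^i_\tau, u^i_\tau, a^{i+1:}_\tau, z^i_{\tau+1}\}$, with $z^i_{\tau+1}$ acting as the conditioning event whose normalization yields the stated proportionality. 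For the base case at player $n$, the nested-belief is a distribution over hidden states, so $T^n$ is the ordinary POMDP belief filter: Bayes' rule applied to the joint transition–observation model gives $b^n_{\tau+1}(y) \propto \sum_x b^n_\tau(x)\, p(y, z^n_{\tau+1} \mid x, u^{:n}_\tau)$, which is exactly the boundary condition.

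For the inductive step, assume the formula for $T^{i+1}$ holds. I marginalize the stage-$(\tau+1)$ posterior over the stage-$\tau$ latent variables of player $i+1$, namely the nested-occupancy state $s^{i+1}_\tau = (b^{i+1}_\tau, o^{i+1}_\tau, u^{:i}_\tau)$ and the observation $z^{i+1}_{\tau+1}$, and factor the resulting joint by the chain rule. Four factors emerge: (i) the within-stage transition $\tilde{T}(s^{i+1}_\tau \mid b^i_\tau, u^{:i}_\tau)$ from the proof of Theorem \ref{thm:sufficiency}, accounting for how the planner lands on player $i+1$'s node; (ii) the observation model $\Omega^{i+1}$ of Lemma \ref{lem:observation}, generating $z^{i+1}_{\tau+1}$ under the prescribed superior action $a^{i+1}_\tau(o^{i+1}_\tau)$ and tail rules $a^{i+2:}_\tau$; (iii) the HIS consistency indicator $\mathbb{1}\{z^i_{\tau+1} \sqsubseteq \zeta^{i+1}(z^{i+1}_{\tau+1})\}$, tying player $i$'s received observation to that of its superior; and (iv) the two Kronecker deltas pinning down the deterministic constructions $o^{i+1}_{\tau+1} = \langle o^{i+1}_\tau, a^{i+1}_\tau(o^{i+1}_\tau), z^{i+1}_{\tau+1}\rangle$ and $b^{i+1}_{\tau+1} = T^{i+1}(b^{i+1}_\tau, \langle u^{:i}_\tau, a^{i+1}_\tau(o^{i+1}_\tau), a^{i+2:}_\tau\rangle, z^{i+1}_{\tau+1})$, the latter invoking the induction hypothesis. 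Collecting these factors reproduces the claimed recursion for $b^i_{\tau+1}$.

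The main obstacle is justifying the conditional-independence structure underlying factors (i)--(iv), rather than the bookkeeping itself. Two points need care: first, that the transition onto $s^{i+1}_\tau$ depends on the conditioning data only through $(b^i_\tau, u^{:i}_\tau)$ and is insensitive to the superior rules $a^{i+1:}_\tau$, which is exactly the sufficiency established for $\tilde{T}$ in Theorem \ref{thm:sufficiency}; and second, that once $s^{i+1}_\tau$ is fixed the generation of $z^{i+1}_{\tau+1}$, and through $\zeta^{i+1}$ of $z^i_{\tau+1}$, is governed solely by $\Omega^{i+1}$ and the HIS prefix relation, so that $o^{i+1}_{\tau+1}$ and $b^{i+1}_{\tau+1}$ become deterministic functions of the marginalized variables. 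Verifying that these independences hold simultaneously for the across-player filter $\tilde{T}$ and the across-stage filter $T^{i+1}$ is the delicate step; everything else is a mechanical assembly of the factors.
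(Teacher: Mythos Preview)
Your proposal is correct and follows essentially the same route as the paper: both arguments are downward inductions on the player index that start from the definition $b^i_{\tau+1}(o^{i+1}_{\tau+1}, b^{i+1}_{\tau+1}) = \Pr\{o^{i+1}_{\tau+1}, b^{i+1}_{\tau+1} \mid \varsigma^i_\tau, u^i_\tau, a^{i+1:}_\tau, z^i_{\tau+1}\}$, marginalize over $s^{i+1}_\tau$ and $z^{i+1}_{\tau+1}$, apply Bayes' rule, and factor the joint into exactly the four pieces you list, invoking $\tilde{T}$ from Theorem~\ref{thm:sufficiency} and $\Omega^{i+1}$ from Lemma~\ref{lem:observation}. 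The only cosmetic difference is that the paper starts its induction at player $n-1$ (treating the POMDP filter at $n$ as a given boundary) whereas you treat player $n$ itself as the base case; your explicit discussion of the conditional-independence obligations is a useful addition that the paper leaves implicit.
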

\begin{proof}
The proof proceeds by induction. Starting with player $n-1$, we define the nested-belief state $b^{n-1}_{\tau+1}$ at stage $\tau+1$ and player $n-1$ upon acting according to $\langle u^{n-1}_\tau, a^n_\tau\rangle$ in total data available to the planner $\varsigma_\tau^{n-1} \doteq \langle s_\tau, o^{n-1}_\tau, u^{:n-2}_\tau \rangle$ and receiving observation $z^{n-1}_{\tau+1}$, as follows: for any history and nested-belief tuple $( o^n_{\tau+1}, b^n_{\tau+1})$,
\begin{align}
b^{n-1}_{\tau+1}( o^n_{\tau+1}, b^n_{\tau+1}) &\doteq \Pr\{ o^n_{\tau+1}, b^n_{\tau+1} |\varsigma_\tau^{n-1},  u^{n-1}_\tau, a^n_\tau, z^{n-1}_{\tau+1}\}.   
\label{eqn:lem:markov:1}
\end{align}
The expansion of (\ref{eqn:lem:markov:1}) over nested-occupancy states $s^n_\tau$ at stage $\tau$ and player $n$ and histories $z^n_{\tau+1}$ at stage $\tau+1$ and player $n$, result in the following expression:  
\begin{align}
b^{n-1}_{\tau+1}(o^n_{\tau+1}, b^n_{\tau+1})  &=
\sum_{s^n_\tau\doteq (b^n_\tau,o^n_\tau,u_\tau^{:n-1})}\sum_{z^n_{\tau+1}} 
\Pr\{s^n_\tau,z^n_{\tau+1},o^n_{\tau+1}, b^n_{\tau+1} |\varsigma_\tau^{n-1},  u^{n-1}_\tau, a^n_\tau, z^{n-1}_{\tau+1}\}.   
\label{eqn:lem:markov:2}
\end{align}
The application of Bayes' rule in expression (\ref{eqn:lem:markov:2}) yields the following expression:
\begin{align}
b^{n-1}_{\tau+1}(o^n_{\tau+1}, b^n_{\tau+1})   &=
\sum_{s^n_\tau\doteq (b^n_\tau,o^n_\tau,u_\tau^{:n-1})}\sum_{z^n_{\tau+1}} 
\frac{\Pr\{s^n_\tau,z^n_{\tau+1},o^n_{\tau+1}, b^n_{\tau+1} ,\varsigma_\tau^{n-1},  u^{n-1}_\tau, a^n_\tau, z^{n-1}_{\tau+1}\}
}{\Pr\{\varsigma_\tau^{n-1},  u^{n-1}_\tau, a^n_\tau, z^{n-1}_{\tau+1}\}}.   
\label{eqn:lem:markov:3}
\end{align}
The expansion of the joint probability into the product of conditional probabilities and the application of Lemma \ref{lem:observation} yield the following expression: 
\begin{align}
b^{n-1}_{\tau+1}(o^n_{\tau+1}, b^n_{\tau+1})   &=
\textstyle \sum_{s^n_\tau\doteq (b^n_\tau,o^n_\tau,u_\tau^{:n-1})}\sum_{z^n_{\tau+1}} 
\Pr\{z^{n-1}_{\tau+1}| z^n_{\tau+1}\} \cdot 
\Pr\{o^n_{\tau+1}| o^n_\tau, a^n_\tau(o^n_\tau),z^n_{\tau+1}\} \cdot 
\Pr\{b^n_{\tau+1}|b^n_\tau,\langle u^{:n-1}_\tau,a^n_\tau(o^n_\tau)\rangle ,z^n_{\tau+1}\} \cdot \nonumber\\
&\quad 
\Pr\{z^n_{\tau+1}| b^n_\tau,\langle u^{:n-1}_\tau,a^n_\tau(o^n_\tau)\rangle \} \cdot
\Pr\{s^n_\tau | \varsigma_\tau^{n-1},  u^{n-1}_\tau \}
/ \Pr\{z^{n-1}_{\tau+1}|\varsigma_\tau^{n-1},  u^{n-1}_\tau, a^n_\tau\}
\label{eqn:lem:markov:4}
\end{align}
Using the boundary condition, we obtain the following expression, \ie 
\begin{align}
b^{n-1}_{\tau+1}(o^n_{\tau+1}, b^n_{\tau+1})   &\propto
\textstyle \sum_{s^n_\tau\doteq (b^n_\tau,o^n_\tau,u_\tau^{:n-1})}\sum_{z^n_{\tau+1}} 
 \mathbb{1}\{  z^{n-1}_{\tau+1}\sqsubseteq \zeta^n(z^n_{\tau+1})\} \cdot 
\delta_{\langle o^n_\tau, a^n_\tau(o^n_\tau),z^n_{\tau+1}\rangle}^{o^n_{\tau+1}}
 \nonumber\\
&\quad 
\delta_{T^n(b^n_\tau,\langle u^{:n-1}_\tau,a^n_\tau(o^n_\tau)\rangle ,z^n_{\tau+1})}^{b^n_{\tau+1}}
 \cdot 
\Omega^n(z^n_{\tau+1}| b^n_\tau,\langle u^{:n-1}_\tau,a^n_\tau(o^n_\tau)\rangle )\cdot
\tilde{T}(s^n_\tau | b_\tau^{n-1},  u^{:n-1}_\tau )
\label{eqn:lem:markov:5}
\end{align}
Hence, the statement holds at stage $\tau$ and player $n-1$. Suppose it holds at stage $\tau$ and player $i+1$. We are now ready to show the statement also holds at stage $\tau$ and player $i$. We start with the definition the nested-belief state $b^i_{\tau+1}$ at stage $\tau+1$ and player $i$ upon acting according to $\langle u^i_\tau, a^{i+1:}_\tau\rangle$ in total data available to the planner $\varsigma_\tau^i \doteq \langle s_\tau, o^i_\tau, u^{:i-1}_\tau \rangle$ and receiving observation $z^i_{\tau+1}$. The proof proceeds similarly to that of player $n-1$, \ie
\begin{align}
b^i_{\tau+1}(o^{i+1}_{\tau+1}, b^{i+1}_{\tau+1}) &\doteq \Pr\{o^{i+1}_{\tau+1}, b^{i+1}_{\tau+1} |\varsigma_\tau^i,  u^i_\tau, a^{i+1:}_\tau, z^i_{\tau+1}\}.   
\label{eqn:lem:markov:6}
\end{align}
The expansion of (\ref{eqn:lem:markov:6}) over nested-occupancy states $s^{i+1}_\tau$ at stage $\tau$ and player $i+1$ and histories $z^{i+1}_{\tau+1}$ at stage $\tau+1$ and player $i+1$, result in the following expression: 
\begin{align}
b^i_{\tau+1}(o^{i+1}_{\tau+1}, b^{i+1}_{\tau+1}) &=\textstyle\sum_{s^{i+1}_\tau\doteq (b^{i+1}_\tau,o^{i+1}_\tau,u_\tau^{:i})}\sum_{z^{i+1}_{\tau+1}} \Pr\{s^{i+1}_\tau,z^{i+1}_{\tau+1},o^{i+1}_{\tau+1}, b^{i+1}_{\tau+1} |\varsigma_\tau^i,  u^i_\tau, a^{i+1:}_\tau, z^i_{\tau+1}\}.   
\label{eqn:lem:markov:7}
\end{align}
The application of Bayes' rule in expression (\ref{eqn:lem:markov:7}) yields the following expression: for any pairs $(o^{1:i+1}_{\tau+1}, b^{i+1}_{\tau+1})$, 
\begin{align}
b^i_{\tau+1}(o^{i+1}_{\tau+1}, b^{i+1}_{\tau+1})  &= \sum_{s^{i+1}_\tau\doteq (b^{i+1}_\tau,o^{i+1}_\tau,u_\tau^{:i})}\sum_{z^{i+1}_{\tau+1}}  \Pr\{s^{i+1}_\tau,z^{i+1}_{\tau+1},o^{i+1}_{\tau+1}, b^{i+1}_{\tau+1} ,\varsigma_\tau^i,  u^i_\tau, a^{i+1:}_\tau, z^i_{\tau+1}\} / \Pr\{\varsigma_\tau^i,  u^i_\tau, a^{i+1:}_\tau, z^i_{\tau+1}\}.   
\label{eqn:lem:markov:8}
\end{align}
The expansion of the joint probability into the product of conditional probabilities and the application of Lemma \ref{lem:observation} yield the following expression: 
\begin{align}
b^i_{\tau+1}(o^{i+1}_{\tau+1}, b^{i+1}_{\tau+1})   
&=
\textstyle \sum_{s^{i+1}_\tau\doteq (b^{i+1}_\tau,o^{i+1}_\tau,u_\tau^{:i})}\sum_{z^{i+1}_{\tau+1}} 
\Pr\{z^i_{\tau+1}| z^{i+1}_{\tau+1}\} \cdot 
\Pr\{o^{i+1}_{\tau+1}| o^{i+1}_\tau, a^{i+1}_\tau(o^{i+1}_\tau),z^{i+1}_{\tau+1}\} \cdot \nonumber\\
&\quad 
\Pr\{b^{i+1}_{\tau+1}|b^{i+1}_\tau,\langle u^{:i}_\tau,a^{i+1}_\tau(o^{i+1}_\tau), a^{i+2:}_\tau\rangle ,z^{i+1}_{\tau+1}\} \cdot 
\Pr\{z^{i+1}_{\tau+1}| b^{i+1}_\tau,\langle u^{:i}_\tau,a^{i+1}_\tau(o^{i+1}_\tau), a^{i+2:}_\tau\rangle \} \cdot \nonumber\\
&\quad 
\Pr\{s^{i+1}_\tau | \varsigma_\tau^i,  u^i_\tau \}
/ \Pr\{z^i_{\tau+1}|\varsigma_\tau^i,  u^i_\tau, a^{i+1:}_\tau\}
\label{eqn:lem:markov:4}
\end{align}
Using the boundary condition and the induction hypothesis, we obtain the following expression, \ie 
\begin{align}
b^i_{\tau+1}(o^{i+1}_{\tau+1}, b^{i+1}_{\tau+1})   
&\propto \textstyle \sum_{s^{i+1}_\tau\doteq (b^{i+1}_\tau,o^{i+1}_\tau,u_\tau^{:i})} \tilde{T}(s^{i+1}_\tau | b_\tau^i,  u^{:i}_\tau )\sum_{z^{i+1}_{\tau+1}}  
\mathbb{1}\{  z^i_{\tau+1}\sqsubseteq \zeta^{i+1}(z^{i+1}_{\tau+1})\} \cdot 
\delta_{\langle o^{i+1}_\tau, a^{i+1}_\tau(o^{i+1}_\tau),z^{i+1}_{\tau+1}\rangle}^{o^{i+1}_{\tau+1}}
 \nonumber\\
&\quad 
\delta_{T^{i+1}(b^{i+1}_\tau,\langle u^{:i}_\tau,a^{i+1}_\tau(o^{i+1}_\tau), a^{i+2:}_\tau\rangle ,z^{i+1}_{\tau+1})}^{b^{i+1}_{\tau+1}}
 \cdot 
\Omega^{i+1}(z^{i+1}_{\tau+1}| b^{i+1}_\tau,\langle u^{:i}_\tau,a^{i+1}_\tau(o^{i+1}_\tau), a^{i+2:}_\tau\rangle ).   
\label{eqn:lem:markov:9}
\end{align}
Hence, the statement holds at any stage $\tau$ and player $i$, which ends the proof.
\end{proof}

\subsection{Nested Belief States, Policies and Action-Value Functions At Player $n$}

This section establishes many important properties regarding player $n$. First, it establishes that for any given stage and player $n$, the planner can make decisions based on belief states instead of histories. To prove this statement, one must demonstrate that belief states are capable of predicting (1) the next observation for the subsequent stage and player $n$; (2) the next belief state for the subsequent stage and player $n$; and (3) the immediate reward. Next, it shows that belief-dependent policies are optimal at player $n$. Finally, it describes the action-value functions under a history-dependent policy of player $1$ to $n-1$ and a belief-dependent policy of player $n$.

\begin{lemma}
\label{lem:sufficiency:n}
Let $a^{:n-1}_\tau$ be a joint decision rule of player $1$ to $n-1$. Let $b^n_\tau$ be a nested belief state, $o^{:n-1}_\tau$ be a joint history  of player $1$ to $n-1$, and $a^{:n-1}_\tau(o^{:n-1}_\tau)$ be a joint action of player $1$ to $n-1$ when following joint policy $a^{:n-1}_\tau$. Let $s^n_\tau \doteq (b^n_\tau, o^n_\tau, a^{:n-1}_\tau(o^{:n-1}_\tau))$ be a nested-occupancy state at stage $\tau$ and player $n$.
Then, the following propositions hold for any nested-occupancy state $s^n_\tau$ at stage $\tau$ and player $n$.
\begin{enumerate}
\item For any action $u^n_\tau$ and observation $z^n_{\tau+1}$, we have $\Pr\{z^n_{\tau+1} | s^n_\tau, u^n_\tau\} = \Omega^n(z^n_{\tau+1}|b_\tau^n,\langle u_\tau^n, a^{:n-1}_\tau(o^{:n-1}_\tau)\rangle)$.
\item For any action $u^n_\tau$ and observation $z^n_{\tau+1}$, we have $\Pr\{b^n_{\tau+1} | s^n_\tau, u^n_\tau, z^n_{\tau+1}\} = \delta^{b^n_{\tau+1}}_{T^n(b_\tau^n,\langle u_\tau^n, a^{:n-1}_\tau(o^{:n-1}_\tau)\rangle, z^n_{\tau+1})}$.
\item For any action $u^n_\tau$, we have $\mathbb{E}_{(x_\tau,u_\tau)\sim \Pr\{\cdot | s^n_\tau, u^n_\tau\} }\{ r(x_\tau,u_\tau) \} = \mathbb{E}_{(x_\tau,u_\tau)\sim \Pr\{\cdot | b^n_\tau, u_\tau^n, a^{:n-1}_\tau(o^{:n-1}_\tau)\} }\{ r(x_\tau,u_\tau) \}$.
\end{enumerate}
\end{lemma}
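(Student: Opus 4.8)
The plan is to observe that all three claims reduce to a single fact: at the top of the hierarchy the nested belief state is exactly the hidden-state posterior, i.e.\ $\Pr\{x_\tau | s^n_\tau\} = b^n_\tau(x_\tau)$, together with the remark that, once $u^n_\tau$ is specified, conditioning on $s^n_\tau$ already fixes the full joint action $u_\tau = \langle a^{:n-1}_\tau(o^{:n-1}_\tau), u^n_\tau\rangle$. The first identity holds because, by construction, $s^n_\tau = (b^n_\tau, o^n_\tau, a^{:n-1}_\tau(o^{:n-1}_\tau))$ retains $b^n_\tau$, and the boundary definition of the nested belief state reads $b^n_\tau(x) \doteq \Pr\{x | \varsigma^n_\tau\}$; since $s^n_\tau$ summarises $\varsigma^n_\tau$ without loss of information about $x_\tau$ (the argument underlying Theorem~\ref{thm:sufficiency} specialised to player~$n$), conditioning on $s^n_\tau$ yields the same posterior. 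Once this is in place, each statement follows from the ordinary belief calculus applied to the hidden state.

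For the first claim I would marginalise the next observation over the hidden-state transition,
\begin{align*}
\Pr\{z^n_{\tau+1} | s^n_\tau, u^n_\tau\} &= \sum_{x,y} \Pr\{x | s^n_\tau\}\cdot p(y, z^n_{\tau+1} | x, u_\tau),
\end{align*}
then substitute $\Pr\{x | s^n_\tau\} = b^n_\tau(x)$ and recognise the right-hand side as precisely the boundary definition of $\Omega^n$ from Lemma~\ref{lem:observation}, evaluated at $u_\tau = \langle u^n_\tau, a^{:n-1}_\tau(o^{:n-1}_\tau)\rangle$. The third claim is proved identically: since $s^n_\tau$ and $u^n_\tau$ together determine $u_\tau$ deterministically, the only randomness in the expectation is over $x_\tau$, so $\mathbb{E}\{r(x_\tau,u_\tau) | s^n_\tau, u^n_\tau\} = \sum_x b^n_\tau(x)\, r(x, u_\tau)$, which is exactly the expectation taken under $\Pr\{\cdot | b^n_\tau, u^n_\tau, a^{:n-1}_\tau(o^{:n-1}_\tau)\}$.

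For the second claim I would apply Bayes' rule to the hidden-state posterior at stage $\tau+1$: writing $b^n_{\tau+1}(y) = \Pr\{y | s^n_\tau, u^n_\tau, z^n_{\tau+1}\}$ and expanding over the transition gives $b^n_{\tau+1}(y) \propto \sum_x b^n_\tau(x)\, p(y, z^n_{\tau+1} | x, u_\tau)$, which after normalisation is exactly the update $T^n(b^n_\tau, u_\tau, z^n_{\tau+1})$ from the boundary condition of Lemma~\ref{lem:markov}. Because $T^n$ returns a single, deterministically computed belief, the posterior over the next belief state collapses to a point mass, giving $\Pr\{b^n_{\tau+1} | s^n_\tau, u^n_\tau, z^n_{\tau+1}\} = \delta^{b^n_{\tau+1}}_{T^n(b^n_\tau, u_\tau, z^n_{\tau+1})}$.

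The main obstacle is the identity $\Pr\{x_\tau | s^n_\tau\} = b^n_\tau(x_\tau)$ asserted in the first paragraph: one must argue carefully that discarding everything in $\varsigma^n_\tau$ except $b^n_\tau$ and the fixed subordinate actions loses no information about the hidden state, so that the posterior computed from the summary $s^n_\tau$ coincides with the posterior computed from the full total data. Everything downstream is routine marginalisation and Bayesian updating; the only bookkeeping subtlety is to assemble the subordinate actions $a^{:n-1}_\tau(o^{:n-1}_\tau)$ and player $n$'s own action $u^n_\tau$ consistently into the joint action $u_\tau$ fed to $p$, $r$, $\Omega^n$, and $T^n$.
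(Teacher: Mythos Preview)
Your proposal is correct and follows essentially the same approach as the paper: the paper's proof simply states that the first two propositions hold directly from the boundary conditions of Lemmas~\ref{lem:observation} and~\ref{lem:markov}, and that the third holds because immediate rewards depend on player~$n$'s history only through the belief state. Your write-up spells out the marginalisation and Bayes update explicitly, but the underlying argument is identical; the ``obstacle'' you flag concerning $\Pr\{x_\tau \mid s^n_\tau\} = b^n_\tau(x_\tau)$ is exactly the content of the boundary definition $b^n_\tau(x) \doteq \Pr\{x \mid \varsigma^n_\tau\}$ together with Theorem~\ref{thm:sufficiency}, so no additional work is needed there.
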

\begin{proof}
The two first propositions hold directly from Lemmas \ref{lem:observation} and \ref{lem:markov}. The last proposition holds because the immediate rewards depend on the histories of player $n$ only through the corresponding belief states. Which ends the proof.
\end{proof}

\begin{lemma}
\label{lem:optimal:belief:dependent:policy}
The optimal policy of player $n$ depends only upon the belief state not on histories.
\end{lemma}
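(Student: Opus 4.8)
The plan is to observe that, once the decision rules $a^{:n-1}_\tau$ of players $1,\dots,n-1$ are fixed, the sequential problem faced by player $n$ over the stages $\tau=0,\dots,\ell-1$ is an ordinary Markov decision process whose state is the nested-belief state $b^n_\tau$, and then to invoke the classical fact that a finite-horizon MDP admits an optimal policy that is a function of the current state alone. The whole argument thus reduces to checking that the immediate reward, the next-observation law, and the belief update that drive player $n$ are all measurable with respect to $b^n_\tau$ (together with the observed subordinate actions $u^{:n-1}_\tau$) rather than the full history $o^n_\tau$.

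First I would read off the three data defining player $n$'s one-step problem at a nested-occupancy state $s^n_\tau \doteq (b^n_\tau, o^n_\tau, u^{:n-1}_\tau)$ directly from Lemma~\ref{lem:sufficiency:n}: by part~(1) the next-observation law is $\Omega^n(z^n_{\tau+1}\mid b^n_\tau, u^{:n}_\tau)$, by part~(2) the next belief is the deterministic image $T^n(b^n_\tau, u^{:n}_\tau, z^n_{\tau+1})$, and by part~(3) the expected immediate reward equals $\tilde R(s^n_\tau,u^n_\tau)=\mathbb{E}_{x\sim b^n_\tau}\{r(x,u)\}$. Each of these depends on $s^n_\tau$ only through $(b^n_\tau, u^{:n-1}_\tau)$. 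I would then run a backward induction on the optimal return $V^n_\tau$ of player $n$, whose greedy rule is $a^{n,*}_\tau(o^n_\tau)\in\argmax_{u^n_\tau}\beta^{n,*}_\tau(\varsigma^n_\tau,u^n_\tau)$ by Theorem~\ref{thm:bellman:equations}. The base case is immediate from part~(3). For the step I would write the Bellman recursion $V^n_\tau=\max_{u^n_\tau}\{\tilde R+\gamma\sum_{z^n_{\tau+1}}\Omega^n\,V^n_{\tau+1}\}$, insert the hypothesis that $V^n_{\tau+1}$ factors through the next belief, and use the update of part~(2) to conclude that the entire right-hand side, hence its maximizer, depends on $s^n_\tau$ only through $(b^n_\tau,u^{:n-1}_\tau)$. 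This is exactly the assertion that player $n$'s optimal policy is belief-dependent.

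The delicate point is the inductive step, and it is where Lemmas~\ref{lem:observation} and~\ref{lem:markov} (summarised in Lemma~\ref{lem:sufficiency:n}) do the real work. The continuation value $V^n_{\tau+1}$ is evaluated at the next nested-occupancy state, which also carries the subordinates' next actions $u^{:n-1}_{\tau+1}$ generated by $a^{:n-1}_{\tau+1}$; although player $n$ observes these through $z^n_{\tau+1}$ under the HIS assumption, they are not literally a function of $(b^n_\tau,u^{:n}_\tau,z^n_{\tau+1})$. The crux is therefore to verify that averaging the continuation value against $\Omega^n$ and propagating the belief through $T^n$ genuinely collapses the dependence on $o^n_\tau$—equivalently, that the joint law of $(z^n_{\tau+1},u^{:n-1}_{\tau+1})$ given $s^n_\tau$ is governed by $(b^n_\tau,u^{:n}_\tau)$—rather than merely concealing it inside $s^n_{\tau+1}$. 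Granting this, uniqueness of the value of the common-payoff game together with the standard existence of a Markov optimal policy for an MDP yields a player-$n$ policy depending only on the belief state, which completes the proof.
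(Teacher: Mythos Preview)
Your overall strategy---fix the subordinates' decision rules $a^{:n-1}_{0:}$, then run a backward induction on player~$n$'s value using the three sufficiency facts of Lemma~\ref{lem:sufficiency:n}---is exactly the paper's approach. The difference is how the gap you yourself flag in the last paragraph is handled, and that gap is real rather than something to be ``granted.''

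You try to set up player~$n$'s MDP with state $(b^n_\tau,u^{:n-1}_\tau)$, but this is not Markovian across stages: the next subordinate actions $u^{:n-1}_{\tau+1}=a^{:n-1}_{\tau+1}(o^{:n-1}_{\tau+1})$ depend on the full subordinate history $o^{:n-1}_{\tau+1}=(o^{:n-1}_\tau,u^{:n-1}_\tau,z^{:n-1}_{\tau+1})$, so knowing only $u^{:n-1}_\tau$ at stage $\tau$ is insufficient to determine the law of $u^{:n-1}_{\tau+1}$. The joint law of $(z^n_{\tau+1},u^{:n-1}_{\tau+1})$ given $s^n_\tau$ is therefore \emph{not} governed by $(b^n_\tau,u^{:n}_\tau)$ alone, and the collapse you hope for does not occur. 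The paper fixes this by enlarging the state to $(b^n_\tau,o^{:n-1}_\tau)$: it defines value functions $\bar\alpha^n_\tau(b^n_\tau,o^{:n-1}_\tau)$ and $\bar\beta^n_\tau(b^n_\tau,o^{:n-1}_\tau,u_\tau)$ that explicitly carry the subordinate history, and the induction then goes through cleanly because $o^{:n-1}_{\tau+1}$ is a deterministic function of $(o^{:n-1}_\tau,u^{:n-1}_\tau,z^n_{\tau+1})$ under HIS. The conclusion of the lemma is that player~$n$'s best response depends on $o^n_\tau$ only through $(b^n_\tau,o^{:n-1}_\tau)$---\ie the portion of $o^n_\tau$ private to player~$n$ matters only through the belief. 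To make your argument complete, you should augment the MDP state with $o^{:n-1}_\tau$ rather than assume the dependence away.
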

\begin{proof}
The proof proceeds by induction. Let $a^{:n-1}_{0:}$ be the joint policy of player $1$ to $n-1$. The best-response decision rule of player $n$ at stage $\ell-1$ is written as follows: for any history $o^n_{\ell-1}$,
\begin{align*}
a^n_{\ell-1}(o^n_{\ell-1}) &\textstyle \in \argmax_{u^n_{\ell-1}}~\mathbb{E}_{(x_{\ell-1},u_{\ell-1})\sim \Pr\{\cdot | b^n_{\ell-1}, o^n_{\ell-1}, u^n_{\ell-1}, a^{:n-1}_{\ell-1}(o^{:n-1}_{\ell-1})\} }\{ r(x_{\ell-1},u_{\ell-1}) \}&\text{(by Definition)}\\
&\textstyle \in \argmax_{u^n_{\ell-1}}~\mathbb{E}_{(x_{\ell-1},u_{\ell-1})\sim \Pr\{\cdot | b^n_{\ell-1}, u_{\ell-1}^n, a^{:n-1}_{\ell-1}(o^{:n-1}_{\ell-1})\} }\{ r(x_{\ell-1},u_{\ell-1}) \}.&\text{(by Lemma \ref{lem:sufficiency:n})}
\end{align*}
The statement holds at stage $\ell-1$. Define the value function $\tilde{\alpha}^n_{\ell-1}$ under the joint policy $a^{:n-1}_{0:}$ of player $1$ to $n-1$, $$\bar{\alpha}^n_{\ell-1} \colon (b^n_{\ell-1}, o^{:n-1}_{\ell-1}) \mapsto\textstyle \max_{u^n_{\ell-1}}~ \mathbb{E}_{(x_{\ell-1},u_{\ell-1})\sim \Pr\{\cdot | b^n_{\ell-1},u_{\ell-1}^n, a^{:n-1}_{\ell-1}(o^{:n-1}_{\ell-1})\} }\{ r(x_{\ell-1},u_{\ell-1}) \}.$$ Define  the value function $\bar{\beta}^n_{\ell-2}$ under the joint policy $a^{:n-1}_{0:}$ of player $1$ to $n-1$, $$\bar{\beta}^n_{\ell-2} \colon (b^n_{\ell-2}, o^{:n-1}_{\ell-2}, u_{\ell-2}) \mapsto\textstyle  \mathbb{E}_{(x_{\ell-2},b^n_{\ell-1},o^{:n-1}_{\ell-1})\sim \Pr\{\cdot | b^n_{\ell-2},u_{\ell-2},o^{:n-1}_{\ell-2}\} }\{ r(x_{\ell-2},u_{\ell-2}) + \gamma \bar{\alpha}^n_{\ell-1}(b^n_{\ell-1}, o^{:n-1}_{\ell-1})\}.$$
The best-response decision rule of player $n$ at stage $\ell-2$ is written as follows: for any history $o^n_{\ell-2}$,
\begin{align*}
a^n_{\ell-2}(o^n_{\ell-2}) &\textstyle \in \argmax_{u^n_{\ell-2}}~\bar{\beta}^n_{\ell-2}(b^n_{\ell-2}, o^{:n-1}_{\ell-2}, \langle u_{\ell-2}^n, a^{:n-1}_{\ell-2}(o^{:n-1}_{\ell-2})\rangle).
\end{align*}
Consequently, the statement holds for stages $\ell-1$ and $\ell-2$.  Suppose the statement holds for stage $\tau+1$, that is there exists an action-value function $\bar{\beta}^n_\tau$ under the joint policy $a^{:n-1}_{0:}$ of player $1$ to $n-1$, $$\bar{\beta}^n_\tau\colon (b^n_\tau, o^{:n-1}_\tau, u_\tau) \mapsto\textstyle  \mathbb{E}_{(x_\tau,b^n_{\tau+1},o^{:n-1}_{\tau+1})\sim \Pr\{\cdot | b^n_\tau,u_\tau,o^{:n-1}_\tau\} }\{ r(x_\tau,u_\tau) + \gamma \bar{\alpha}^n_{\tau+1}(b^n_{\tau+1}, o^{:n-1}_{\tau+1})\}.$$
We are now ready to show the statement also holds at stage $\tau$.
The best-response decision rule of player $n$ at stage $\tau$ is written as follows: for any history $o^n_\tau$,
\begin{align*}
a^n_\tau(o^n_\tau) &\textstyle \in \argmax_{u^n_\tau}~\bar{\beta}^n_\tau(b^n_\tau, o^{:n-1}_\tau, \langle u_\tau^n, a^{:n-1}_\tau(o^{:n-1}_\tau)\rangle).
\end{align*}
This proves the statement holds for stage $\tau$, ending the proof.
\end{proof}

Lemma \ref{lem:optimal:belief:dependent:policy} shows that the HIS assumption allows player $n$ to act based solely upon belief states instead of histories optimally. In other words, a belief-dependent policy exists as good or better than any history-dependent policy of player $n$. The subsequent lemma shows how the use of belief-dependent policies for player $n$ affects the description of the action-value function under a joint history-dependent policy of player $1$ to $n-1$ and a belief-dependent policy for player $n$.

\begin{lemma}
\label{lem:action:value:fct:n}
Let $a^{:n-1}_{0:}$ be the joint history-dependent policy of player $1$ to $n-1$, $\tilde{a}^n_{0:}$ be the belief-dependent policy of player $n$. The action-value function under joint policy $\langle a^{:n-1}_{0:}, \tilde{a}^n_{0:} \rangle$ is given as follows: 
$$\bar{\beta}^n_\tau\colon(b^n_\tau, o^{:n-1}_\tau, u^{:n}_\tau) \mapsto   \mathbb{E}_{(x_\tau,b^n_{\tau+1},o^{:n-1}_{\tau+1})\sim \Pr\{\cdot | b^n_\tau,u_\tau,o^{:n-1}_\tau\} }\{ r(x_\tau,u_\tau) + \gamma \bar{\beta}^n_{\tau+1}(b^n_{\tau+1}, o^{:n-1}_{\tau+1}, \langle a^{:n-1}_{\tau+1}(o^{:n-1}_{\tau+1}), \tilde{a}^n_{\tau+1}(b^n_{\tau+1}) \rangle)\}$$ with boundary condition $\bar{\beta}^n_\ell(\cdot,\cdot,\cdot) \doteq 0$.
\end{lemma}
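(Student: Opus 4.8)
The plan is to prove the displayed recursion by backward induction on the stage index $\tau$, from $\tau=\ell$ down to $\tau=0$, mirroring the construction already carried out in the proof of Lemma \ref{lem:optimal:belief:dependent:policy} but with player $n$'s greedy choice replaced by its fixed belief-dependent action $\tilde{a}^n_\tau(b^n_\tau)$. I would start from the generative definition of the action-value function, $\bar{\beta}^n_\tau(b^n_\tau, o^{:n-1}_\tau, u_\tau) \doteq \mathbb{E}\{\sum_{t=\tau}^{\ell-1}\gamma^{t-\tau} r(x_t,u_t)\}$, where the expectation ranges over trajectories started from belief state $b^n_\tau$ and subordinates' histories $o^{:n-1}_\tau$, with joint action $u_\tau$ played at stage $\tau$ and the policy $\langle a^{:n-1}_{0:},\tilde{a}^n_{0:}\rangle$ followed thereafter (players $1$ to $n-1$ acting on their histories, player $n$ on its belief, as licensed by Lemma \ref{lem:optimal:belief:dependent:policy}). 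The base case $\tau=\ell$ is immediate: the empty sum gives $\bar{\beta}^n_\ell(\cdot,\cdot,\cdot)=0$, matching the boundary condition.

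For the inductive step, assume the claim holds at stage $\tau+1$, so $\bar{\beta}^n_{\tau+1}$ is a well-defined function of $(b^n_{\tau+1}, o^{:n-1}_{\tau+1}, u_{\tau+1})$. I would split the cumulative reward at stage $\tau$ into the immediate term $r(x_\tau,u_\tau)$ and the discounted tail $\gamma\sum_{t=\tau+1}^{\ell-1}\gamma^{t-(\tau+1)}r(x_t,u_t)$. By Lemma \ref{lem:sufficiency:n}(3) the expected immediate reward depends on the past only through $(b^n_\tau, u_\tau)$, supplying the $r(x_\tau,u_\tau)$ contribution inside the stated expectation. For the tail, I would apply the tower property, conditioning on the triple $(x_\tau, b^n_{\tau+1}, o^{:n-1}_{\tau+1})$: parts (1)--(2) of Lemma \ref{lem:sufficiency:n} together with Lemmas \ref{lem:observation} and \ref{lem:markov} certify that the one-step kernel producing $(x_\tau, b^n_{\tau+1}, o^{:n-1}_{\tau+1})$ is a function of $(b^n_\tau, o^{:n-1}_\tau, u_\tau)$ alone, so the conditioning law $\Pr\{\cdot\,|\,b^n_\tau, u_\tau, o^{:n-1}_\tau\}$ is well-posed. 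Because the subordinates select $a^{:n-1}_{\tau+1}(o^{:n-1}_{\tau+1})$ from $o^{:n-1}_{\tau+1}$ and player $n$ selects $\tilde{a}^n_{\tau+1}(b^n_{\tau+1})$ from $b^n_{\tau+1}$, the inner conditional expectation of the tail equals $\bar{\beta}^n_{\tau+1}(b^n_{\tau+1}, o^{:n-1}_{\tau+1}, \langle a^{:n-1}_{\tau+1}(o^{:n-1}_{\tau+1}), \tilde{a}^n_{\tau+1}(b^n_{\tau+1})\rangle)$ by the induction hypothesis; this yields the claimed recursion and simultaneously shows $\bar{\beta}^n_\tau$ depends on the past only through $(b^n_\tau, o^{:n-1}_\tau, u_\tau)$.

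The step I expect to be the main obstacle is the Markov reduction of the tail, namely justifying that conditioning on the full total data $\varsigma^n_\tau$ collapses to conditioning on $(b^n_\tau, o^{:n-1}_\tau)$ and the action. This rests on chaining the belief-update $T^n$ and observation model $\Omega^n$ of Lemmas \ref{lem:markov} and \ref{lem:observation} to establish that $(b^n_\tau, o^{:n-1}_\tau)$ evolves as a controlled Markov process under the fixed subordinate policies, and on Lemma \ref{lem:optimal:belief:dependent:policy} to ensure player $n$'s continuation never reintroduces a dependence on its discarded history. One must also verify that $o^{:n-1}_{\tau+1}$ is recoverable inside this kernel; under HIS this holds because player $n$'s observation $z^n_{\tau+1}$ determines the subordinates' observations through $\zeta^n$, so each $o^i_{\tau+1}$ updates deterministically from $(o^i_\tau, u^i_\tau, z^n_{\tau+1})$.
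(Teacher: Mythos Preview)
Your proposal is correct and follows essentially the same route as the paper: the paper's proof is the one-liner ``follows directly from the proof of Lemma~\ref{lem:optimal:belief:dependent:policy},'' and you are doing precisely that---rerunning the backward induction from that proof with player~$n$'s $\max_{u^n}$ replaced by the fixed belief-dependent action $\tilde a^n_\tau(b^n_\tau)$, invoking Lemma~\ref{lem:sufficiency:n} and the update rules of Lemmas~\ref{lem:observation}--\ref{lem:markov} to justify that $(b^n_\tau,o^{:n-1}_\tau,u_\tau)$ is a sufficient conditioning variable. One small remark: you do not actually need Lemma~\ref{lem:optimal:belief:dependent:policy} to guarantee that player~$n$'s continuation stays history-free, since $\tilde a^n_{0:}$ is belief-dependent \emph{by hypothesis}; that lemma is about optimality, not about the well-posedness of the recursion here.
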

\begin{proof}
The proof follows directly from the proof of Lemma \ref{lem:optimal:belief:dependent:policy}.
\end{proof}

\subsection{Proof of Theorem \ref{thm:compression}}
\label{appendix:thm:compression}

\begin{proof}
We shall treat each proposition separately. 

\paragraph{Statement 1.} 
 We prove the first statement by induction. We begin the proof by demonstrating that the statement holds at player $n$. Let $s^{n,\circ}_\tau \doteq (b^{n,\circ}_\tau,o^{n,\circ}_\tau,u^{:n-1,\circ}_\tau)$ and $s^{n,\textcolor{gray}{\bullet}}_\tau \doteq (b^{n,\textcolor{gray}{\bullet}}_\tau,o^{n,\textcolor{gray}{\bullet}}_\tau,u^{:n-1,\textcolor{gray}{\bullet}}_\tau)$ be two nested-occupancy states. Suppose $s^{n,\circ}_\tau \sim_{\mathscr{B}_1} s^{n,\textcolor{gray}{\bullet}}_\tau $, that is $(b^{n,\circ}_\tau,u^{:n-1,\circ}_\tau) = (b^{n,\textcolor{gray}{\bullet}}_\tau,u^{:n-1,\textcolor{gray}{\bullet}}_\tau)$. Consider the action-value function $\tilde{\beta}^{n,*}_\tau$ at stage $\tau$, player $n$, nested-occupancy state $s^{n,\circ}_\tau$ and action $u^n_\tau$, \ie
\begin{align*}
\tilde{\beta}^{n,*}_\tau(s^{n,\circ}_\tau, u^n_\tau) &\doteq \mathbb{E}_{x\sim b^{n,\circ}_\tau(\cdot)}\{\beta_\tau(x,o^{n,\circ}_\tau,\langle u^{:n-1,\circ}_\tau, u^n_\tau\rangle)\}\\
&=\bar{\beta}^n_\tau(b^{n,\circ}_\tau, o^{:n-1,\circ}_\tau,\langle u^{:n-1,\circ}_\tau, u^n_\tau\rangle)&\text{(by Lemma \ref{lem:action:value:fct:n})}.
\end{align*}
Since the action-value function $\bar{\beta}^n_\tau$ depends on the nested-occupancy state only through the belief state $b^{n,\circ}_\tau$, joint history $o^{:n-1,\circ}_\tau$, and joint action $\langle u^{:n-1,\circ}_\tau, u^n_\tau\rangle$, not upon joint history $o^{n,\circ}_\tau$, thus does the action-value function $\tilde{\beta}^{n,*}_\tau$. Hence, the first statement holds at player $n$. Suppose the statement holds for player $i+1$. We are now ready to show it also holds at player $i$.  We start with the expression of  optimal action-value  $\tilde{\beta}_\tau^{i,*}(s^{i,\circ}_\tau, u^i_\tau)$ for nested-occupancy state $s^{i,\circ}_\tau$ and action $u^i_\tau$, \ie
\begin{align*}
\tilde{\beta}_\tau^{i,*}(s^{i,\circ}_\tau, u^i_\tau) &= \textstyle  \mathbb{E}_{s_\tau^{i+1,\circ}\sim \tilde{T}(\cdot|b_\tau^{i,\circ},\langle u^i_\tau, u^{:i-1,\circ}_\tau \rangle) }\{  \max_{u^{i+1}_\tau}\tilde{\beta}_\tau^{i+1,*}(s_\tau^{i+1,\circ}, u_\tau^{i+1})\}.
\end{align*}
An inspection of the transition function $ \tilde{T}(\cdot|b_\tau^{i,\circ}, u^{:i}_\tau)$ reveals that it depends on nested-occupancy state $s_\tau^{i,\circ}$ only though nested-belief state $b_\tau^{i,\circ}$ and joint action $u^{:i}_\tau$. Consequently, if we let $s^{i,\circ}_\tau \sim_{\mathscr{B}_1} s^{i,\textcolor{gray}{\bullet}}_\tau$ then we know $(b^{i,\circ}_\tau,u^{:i, \circ}_\tau) = (b^{i,\textcolor{gray}{\bullet}}_\tau,u^{:i, \textcolor{gray}{\bullet}}_\tau)$, which leads to the statement:
\begin{align*}
\tilde{\beta}_\tau^{i,*}(s^{i,\circ}_\tau, u^i_\tau) &= \textstyle  \mathbb{E}_{s_\tau^{i+1,\textcolor{gray}{\bullet}}\sim \tilde{T}(\cdot|b_\tau^{i,\textcolor{gray}{\bullet}}, \langle u^i_\tau, u^{:i-1,\textcolor{gray}{\bullet}}_\tau \rangle) }\{  \max_{u^{i+1}_\tau}\tilde{\beta}_\tau^{i+1,*}(s_\tau^{i+1,\textcolor{gray}{\bullet}}, u_\tau^{i+1})\}\\
&= \tilde{\beta}_\tau^{i,*}(s^{i,\textcolor{gray}{\bullet}}_\tau, u^i_\tau).
\end{align*}
This expression proves the first statement at any stage $\tau$ and player $i$.
\paragraph{Statement 2.} 
  To prove the second statement, we build upon the first statement. If we let $s^{i,\circ}_\tau \sim_{\mathscr{B}_2} s^{i,\textcolor{gray}{\bullet}}_\tau$, then  for any arbitrary joint action $u_\tau^{:i-1}$ we know that $(b^{i,\circ}_\tau, o^{i,\circ}_\tau, u_\tau^{:i-1}) \sim_{\mathscr{B}_1} (b^{i,\textcolor{gray}{\bullet}}_\tau, o^{i,\textcolor{gray}{\bullet}}_\tau, u_\tau^{:i-1})$. If $s^{i,\circ}_\tau \sim_{\mathscr{B}_2} s^{i,\textcolor{gray}{\bullet}}_\tau$, we know that histories of subordinates are identical $o^{:i-1,\circ}_\tau=o^{:i-1,\textcolor{gray}{\bullet}}_\tau$, then the following holds $u_\tau^{:i-1} = a_\tau^{:i-1,*}(o^{:i-1,\textcolor{gray}{\bullet}}_\tau) = a_\tau^{:i-1,*}(o^{:i-1,\circ}_\tau) $. Consequently, by the application of the first statement, we have for any arbitrary action $u^i_\tau$, 
$$\tilde{\beta}_\tau^{i,*}(\langle  b^{i,\circ}_\tau, o^{i,\circ}_\tau, a_\tau^{:i-1,*}(o^{:i-1,\circ}_\tau)\rangle, u^i_\tau) = \tilde{\beta}_\tau^{i,*}(\langle b^{i,\textcolor{gray}{\bullet}}_\tau, o^{i,\textcolor{gray}{\bullet}}_\tau,  a_\tau^{:i-1,*}(o^{:i-1,\textcolor{gray}{\bullet}}_\tau)\rangle, u^i_\tau).$$ 
Consequently, the sets of optimal actions $A^{i,*}_\tau(o^{i,\circ}_\tau) \doteq \argmax_{u^i_\tau} \tilde{\beta}_\tau^{i,*}(\langle b^{i,\circ}_\tau, o^{i,\circ}_\tau, a_\tau^{:i-1,*}(o^{:i-1,\circ}_\tau)\rangle, u^i_\tau)$ and $A^{i,*}_\tau(o^{i,\textcolor{gray}{\bullet}}_\tau) \doteq \argmax_{u^i_\tau} \tilde{\beta}_\tau^{i,*}(\langle b^{i,\textcolor{gray}{\bullet}}_\tau, o^{i,\textcolor{gray}{\bullet}}_\tau, a_\tau^{:i-1,*}(o^{:i-1,\textcolor{gray}{\bullet}}_\tau)\rangle, u^i_\tau)$ at histories $ o^{i,\circ}_\tau$ and $o^{i,\textcolor{gray}{\bullet}}_\tau$, respectively, are equivalent, \ie $A^{i,*}_\tau(o^{i,\circ}_\tau)  = A^{i,*}_\tau(o^{i,\textcolor{gray}{\bullet}}_\tau) $.
 Since $a^{i,*}_\tau(o^{i,\textcolor{gray}{\bullet}}_\tau)$ and $a^{i,*}_\tau(o^{i,\circ}_\tau)$ belong to the same set $A^{i,*}_\tau(o^{i,\circ}_\tau)  = A^{i,*}_\tau(o^{i,\textcolor{gray}{\bullet}}_\tau) $, they are  interchangeable.  In other words, the optimal action for history $o^{i,\circ}_\tau$ is also optimal for history $o^{i,\textcolor{gray}{\bullet}}_\tau$ and vice versa.  Interestingly, one can show that the expansions $\langle o^{i,\circ}_\tau, u^i_\tau, z^i_{\tau+1}\rangle$ and $\langle o^{i,\textcolor{gray}{\bullet}}_\tau, u^i_\tau, z^i_{\tau+1}\rangle$ of histories $o^{i,\circ}_\tau$ and $o^{i,\textcolor{gray}{\bullet}}_\tau$ upon taking the same action $u^i_\tau$ and receiving the same observation $z^i_{\tau+1}$, respectively, will also have equivalent optimal actions. Hence, essentially providing that the optimal policy for history $o^{i,\circ}_\tau$ is also optimal for history $o^{i,\textcolor{gray}{\bullet}}_\tau$ and vice versa. To show this statement, first notice that both histories $\langle o^{i,\circ}_\tau, u^i_\tau, z^i_{\tau+1}\rangle$ and $\langle o^{i,\textcolor{gray}{\bullet}}_\tau, u^i_\tau, z^i_{\tau+1}\rangle$ will have the same histories of subordinates because original histories $o^{i,\circ}_\tau$ and $o^{i,\textcolor{gray}{\bullet}}_\tau$ had the same histories of subordinates and original histories $o^{i,\circ}_\tau$ and $o^{i,\textcolor{gray}{\bullet}}_\tau$ were expanded using the same action and observation. Next, we need to show that the nested-belief states associated with the expanded histories $\langle o^{i,\circ}_\tau, u^i_\tau, z^i_{\tau+1}\rangle$ and $\langle o^{i,\textcolor{gray}{\bullet}}_\tau, u^i_\tau, z^i_{\tau+1}\rangle$  are also equivalent. The proof of this statement follows directly from the fact that the transition function from one stage to the next one depends on $\tilde{T}$, $T^{\cdot}$ and $\Omega^{\cdot}$. A careful inspection of these functions reveals that they depend on nested-occupancy states at player $i$ only through nested-belief states at player $i$, joint histories of superiors of player $i$, and actions of subordinates as demonstrated in Lemmas \ref{lem:markov} and \ref{lem:observation}. Histories of the current player are only used to select the action for that player. However, our histories of interest have the same optimal action set. So, assuming these histories take the same action does not hurt.  Consequently, if we let $s^{i,\circ}_\tau \sim_{\mathscr{B}_2} s^{i,\textcolor{gray}{\bullet}}_\tau$ then we know that $o^{i,\circ}_\tau \sim_{\mathscr{P}} o^{i,\textcolor{gray}{\bullet}}_\tau$, which ends the proof.
%
Which ends the proof for both propositions.
\end{proof}

\section{Proof of Theorem \ref{thm:error:bound}}
\label{appendix:thm:error:bound}

\begin{proof}
For simplicity, throughout the proof, we assume with no loss of generality that the central planner does not rely on public observations, so transition function $\pmb{T}$ is deterministic. Let $a^*_{0:}$ be an optimal joint policy with value functions $\upsilon^*_{0:}$. Let $s^*_\tau$ be the occupancy state generated under joint policy $a^*_{0:}$, with boundary condition $s^*_0 \doteq s_0$. Let $ \upsilon_{0:}$ be the value function that the PBVI algorithm produced over occupancy subsets $\tilde{S}_{0:}$. Then, it follows that:
\begin{align*}
\upsilon^*_0(s^*_0) -  \upsilon_0(s^*_0) &= \textstyle (\sum_{\tau=0}^{\ell-1} \gamma^\tau\cdot \pmb{R}(s^*_\tau,a^*_\tau)) -  \upsilon_0(s^*_0),\quad\text{(definition of $\upsilon^*_0(s^*_0)$)}\\
& = \textstyle (\sum_{\tau=0}^{\ell-1} \gamma^\tau\cdot \pmb{R}(s^*_\tau,a^*_\tau)) - \sum_{\tau=0}^{\ell-1} \gamma^\tau \cdot (\upsilon_\tau(s^*_\tau)-\upsilon_\tau(s^*_\tau)) -  \upsilon_0(s^*_0),\quad\text{(adding zero)}.
\end{align*} 
Next, we use the fact that $\upsilon_\ell(\cdot) \doteq 0$ to re-arrange terms:
\begin{align*}
&=\textstyle \sum_{\tau=0}^{\ell-1} \gamma^\tau \cdot\pmb{R}(s^*_\tau,a^*_\tau) + \left( \gamma^\ell\cdot \upsilon_\ell(s^*_\ell) + \sum_{\tau=1}^{\ell-1} \gamma^\tau \cdot \upsilon_\tau(s^*_\tau)\right) - \left(\gamma^0\cdot\upsilon_0(s^*_0)+\sum_{\tau=1}^{\ell-1}\gamma^\tau \cdot \upsilon_\tau(s^*_\tau)\right),\\
&=\textstyle \sum_{\tau=0}^{\ell-1} \gamma^\tau\cdot \pmb{R}(s^*_\tau,a^*_\tau) + \sum_{\tau=0}^{\ell-1} \gamma^{\tau+1} \cdot \upsilon_{\tau+1}(s^*_{\tau+1}) - \sum_{\tau=0}^{\ell-1}\gamma^\tau \cdot \upsilon_\tau(s^*_\tau),\\
&=\textstyle \sum_{\tau=0}^{\ell-1} \gamma^\tau\cdot  \left( \pmb{R}(s^*_\tau,a^*_\tau) +  \gamma \cdot \upsilon_{\tau+1}(s^*_{\tau+1}) -  \upsilon_\tau(s^*_\tau)\right).
\end{align*} 
Define $\upsilon^{\cdot}_\tau(s_\tau)\colon a_\tau \mapsto  \pmb{R}(s_\tau,a_\tau) +  \gamma \cdot \upsilon_{\tau+1}(\pmb{T}(s_\tau,a_\tau))$. It follows that 
\begin{align*}
\upsilon^*_0(s^*_0) -  \upsilon_0(s^*_0) &=\textstyle \sum_{\tau=0}^{\ell-1} \gamma^\tau\cdot  \left(  \upsilon^{a^*_\tau}_\tau(s^*_\tau) -  \upsilon_\tau(s^*_\tau)\right).
\end{align*} 
If we fix $s_\tau \doteq \argmin_{\tilde{s}_\tau\in \tilde{S}_\tau} \|s^*_\tau-\tilde{s}_\tau\|_1 $, then we know that $\|s^*_\tau-s_\tau\|_1 \leq \delta_{ \tilde{S}_{0:}}$ by definition of $\delta_{ \tilde{S}_{0:}}$. Using action-values $\upsilon^{a^*_\tau}_\tau(s_\tau)$ to add zero into the previous error bound results in:
\begin{align*}
\upsilon^*_0(s^*_0) -  \upsilon_0(s^*_0) &=\textstyle \sum_{\tau=0}^{\ell-1} \gamma^\tau\cdot  \left(   \upsilon^{a^*_\tau}_\tau(s^*_\tau)) -  \upsilon^{a^*_\tau}_\tau(s_\tau) +  \upsilon^{a^*_\tau}_\tau(s_\tau) -  \upsilon_\tau(s^*_\tau)\right).
\end{align*} 
Taking the best joint decision rule for $ \upsilon^{\cdot}_\tau(s_\tau)$ results in value $\upsilon_\tau(s_\tau)$ greater or equal to $ \upsilon^{a^*_\tau}_\tau(s_\tau)$, which leads to
\begin{align*}
\upsilon^*_0(s^*_0) -  \upsilon_0(s^*_0) &\leq\textstyle \sum_{\tau=0}^{\ell-1} \gamma^\tau\cdot  \left(   \upsilon^{a^*_\tau}_\tau(s^*_\tau) -  \upsilon^{a^*_\tau}_\tau(s_\tau) + \upsilon_\tau(s_\tau) -  \upsilon_\tau(s^*_\tau)\right).
\end{align*} 
Recall that under a fixed joint policy, value functions are linear functions of occupancy states, which allows us to re-arrange terms as follows:
\begin{align*}
\upsilon^*_0(s^*_0) -  \upsilon_0(s^*_0) &\leq\textstyle \sum_{\tau=0}^{\ell-1} \gamma^\tau\cdot  \left(   \upsilon^{a^*_\tau}_\tau(s^*_\tau)  -  \upsilon_\tau(s^*_\tau)+ \upsilon_\tau(s_\tau) -  \upsilon^{a^*_\tau}_\tau(s_\tau) \right)\\
&=\textstyle \sum_{\tau=0}^{\ell-1} \gamma^\tau\cdot  (   \upsilon^{a^*_\tau}_\tau  -  \upsilon_\tau ) \cdot  (  s^*_\tau  -  s_\tau ).
\end{align*} 
The application of the H\"{o}lder inegality,  the use of the definition of $\delta_{ \tilde{S}_{0:}}$, and the use of the bounded reward function $r(\cdot,\cdot)$, permit us to conclude:
\begin{align*}
\upsilon^*_0(s^*_0) -  \upsilon_0(s^*_0) &\leq\textstyle \sum_{\tau=0}^{\ell-1} \gamma^\tau\cdot  \|  \upsilon^{a^*_\tau}_\tau  -  \upsilon_\tau \|_\infty \cdot  \|  s^*_\tau  -  s_\tau \|_1\\
&=\textstyle \delta_{ \tilde{S}_{0:}}\sum_{\tau=0}^{\ell-1} \gamma^\tau\cdot  \|  \upsilon^{a^*_\tau}_\tau  -  \upsilon_\tau \|_\infty\\
&\leq\textstyle2c \delta_{ \tilde{S}_{0:}}\sum_{\tau=0}^{\ell-1} \gamma^\tau\sum_{t=\tau}^{\ell-1} \gamma^{t-\tau}\\
&=\textstyle 2c \delta_{ \tilde{S}_{0:}}\sum_{\tau=0}^{\ell-1} \sum_{t=\tau}^{\ell-1} \gamma^t\\
&=\textstyle 2c \delta_{ \tilde{S}_{0:}}\sum_{\tau=0}^{\ell-1} \frac{\gamma^\tau-\gamma^\ell}{1-\gamma}\\
&=\textstyle 2c \delta_{ \tilde{S}_{0:}}\frac{1}{1-\gamma}\sum_{\tau=0}^{\ell-1} (\gamma^\tau-\gamma^\ell)\\
&=\textstyle 2c \delta_{ \tilde{S}_{0:}}\frac{1}{1-\gamma}\sum_{\tau=0}^{\ell-1} \frac{1+\ell \gamma^{\ell+1}-(\ell+1) \gamma^\ell }{1-\gamma}\\
&=\textstyle 2c \delta_{ \tilde{S}_{0:}} \frac{1+\ell \gamma^{\ell+1}-(\ell+1) \gamma^\ell }{(1-\gamma)^2}.
\end{align*} 
Which ends the proof.
\end{proof}

\section{Multi-Player Benchmarks}
\label{sec:multi:player:benchmarks}

\paragraph{Multi-player Tiger.}
\new{
The $1$-player tiger problem was first introduced by \citet{kaelbling1998planning} and was later generalized to a $2$-player version by \citet{Nair-ijcai-03}. This game describes a scenario where players face two closed doors, one of which conceals a treasure while the other hides a dangerous tiger. Neither player knows which door leads to the treasure and which one to the tiger, but they can receive partial and noisy information about the tiger's location by listening. At any given time, each player can choose to open either the left or right door, which will either reveal the treasure or the tiger, and reset the game. To gain more information about the tiger's location, players can listen to hear the tiger on the left or right side, but with uncertain accuracy.  \\
We have extended this problem to an $n$-player version by incorporating hierarchical information-sharing and modifying the transition, observation, and reward models following \citet{Nair-ijcai-03}, while ensuring that the original $2$-player problem can still be recovered.
In this $n$-player version, only the reward function is not straightforwardly adapted.
Listening still costs $1$ per player, as in the original problem, while the penalty for opening the wrong door is now set to $-100 / n_w$ (with $n_w$ the number of players opening the bad door) and the reward for opening the good door is $10$ per player.
}

\paragraph{Multi-player Recycling Robot.}
The recycling robot task was first introduced by \citet{sutton2018reinforcement} as a single-player problem. Later on, \citet{amato2012optimizing} generalized it to a two-player version. The multi-player formulation requires robots to work together to recycle soda cans. In this problem, both robots have a battery level, which can be either high or low. They have to choose between collecting small or big cans and recharging their own battery level. Collecting small or big cans can decrease the robot's battery level, with a higher probability when collecting the big can. When a robot's battery is completely exhausted, it needs to be picked up and placed onto a recharging spot, which results in a negative reward. The coordination problem arises since robots cannot pick up a big can independently.
\old{
To solve this problem, an $n$-player Dec-POMDP was derived by allowing a big soda to be picked up only when all robots try to collect it simultaneously. The other transition and observation probabilities come from the $n$ independent single-player models introduced by \mbox{\citet{sutton2018reinforcement}.}
}
\new{
In our n-player version of the problem, picking up small cans still rewards $2$ per agent.
A reward of $5$ per agent is given if all agents synchronize to carry a big can, while a penalty of $10$ is given if some agents (but not all) try to carry a big can.
}

\paragraph{Multi-player Broadcast Channel.}
In 1996, \citet{ooi-CDC-96} introduced a scenario in which a unique channel is shared by $n$ players, who aim at transmitting packets. The time is discretized, and only one packet can be transmitted at each time step. If two or more players attempt to send a packet at the same time, the transmission fails due to a collision. In 2004, \citeauthor{HansenBZ04} extended this problem to a partially observable one, focusing on two players \cite{HansenBZ04}. We used similar adaptations to define a partially observable version of the original $n$-player broadcast channel.

\paragraph{Multi-player Grid3x3.}
This problem was first introduced by \citet{639686}. It involves two players who want to meet each other as soon as possible on a two-dimensional grid. Each player has five possible actions: moving north, south, west, east, or staying in place. To simulate an uncertain environment, each player's action has a fixed probability of being successful. Additionally, each player can only sense their own location and has no knowledge of the other player's location. To adapt the problem for multiple players, we placed M players on the grid, each with the same actions and perceptions as described above. The reward has been redefined as the largest number of players minus one present at one of the two meeting points. This way, the original problem can be retrieved for two players.

\section{Experiments}
\label{sec:experimental:results}

We conducted three sets of experiments to assess our findings:
\begin{enumerate}
\item To assess the exponential drop in time complexity of backups with respect to an increasing number of players, we maintain the average time required to perform a single backup, \cf Section \ref{subsec:abt:player} -- Average Backup Time for Increasing Players. 
\item To assess the exponential drop in time complexity of backups with respect to increasing horizons, we maintain the average time required to perform a single backup, \cf Section \ref{subsec:abt:horizon} -- Average Backup Time for Increasing Horizon. 
\item To assess the superiority of our findings with respect to the state-of-the-art approach to solve general decentralized partially Markov decision processes near-optimally, \cf Section \ref{subsect:sota} -- Against State-Of-The-Art Solvers.
\end{enumerate}

\subsection{Average Backup Time for Increasing Players}
\label{subsec:abt:player}

This section investigates the average computational time required to perform a single backup for increasing players, \cf Figures \ref{fig:backupTimeN:tiger},\ref{fig:backupTimeN:recycling},\ref{fig:backupTimeN:mabc}, and \ref{fig:backupTimeN:grid3x3}.  The experiments show that on all tested benchmarks,  hPBVI exhibits a reduction in time complexity compared to the other variants. Moreover, hPBVI can handle a larger number of agents (up to 9 for the Tiger, MABC, and Recycling) compared to the other variants, which are limited to a maximum of 5 agents. This time-complexity reduction in hPBVI is the result of our findings providing the ability to fully exploit the hierarchical information-sharing structure.  

\begin{figure}[!ht]
    \centering
	\label{fig:backupTimeN:tiger}
    \includegraphics[width=.8\textwidth]{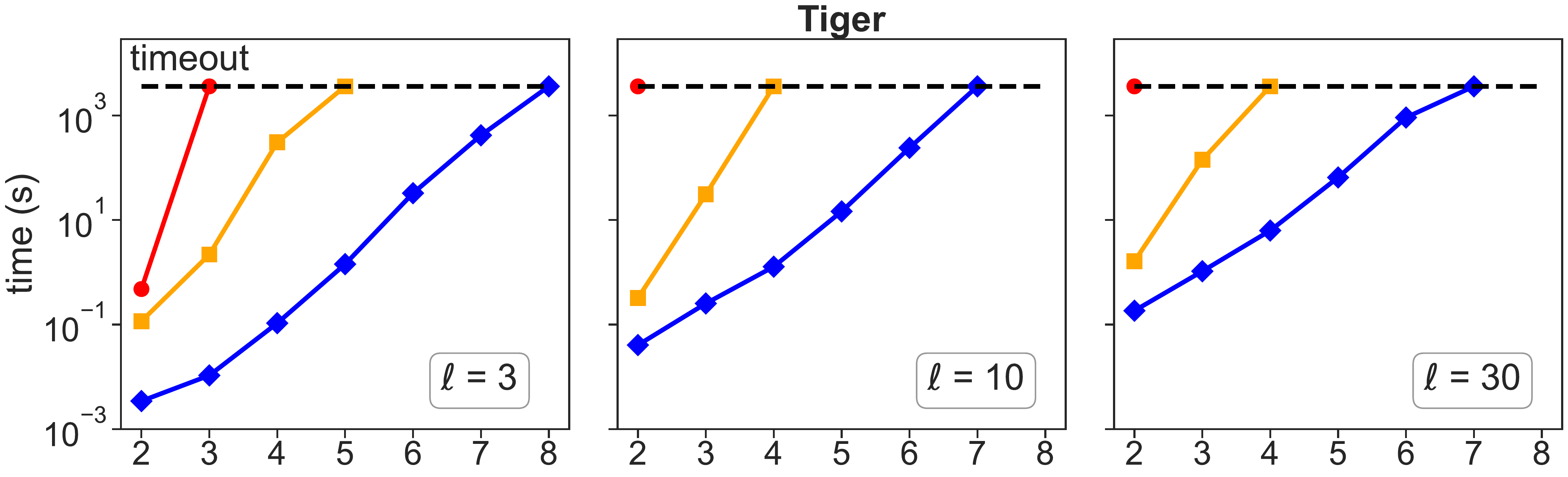}
    \caption{Average Backup Time for the tiger problem and different numbers of players.}
\end{figure}

\begin{figure}[!ht]	
    \centering
	\label{fig:backupTimeN:recycling}
    \includegraphics[width=.8\textwidth]{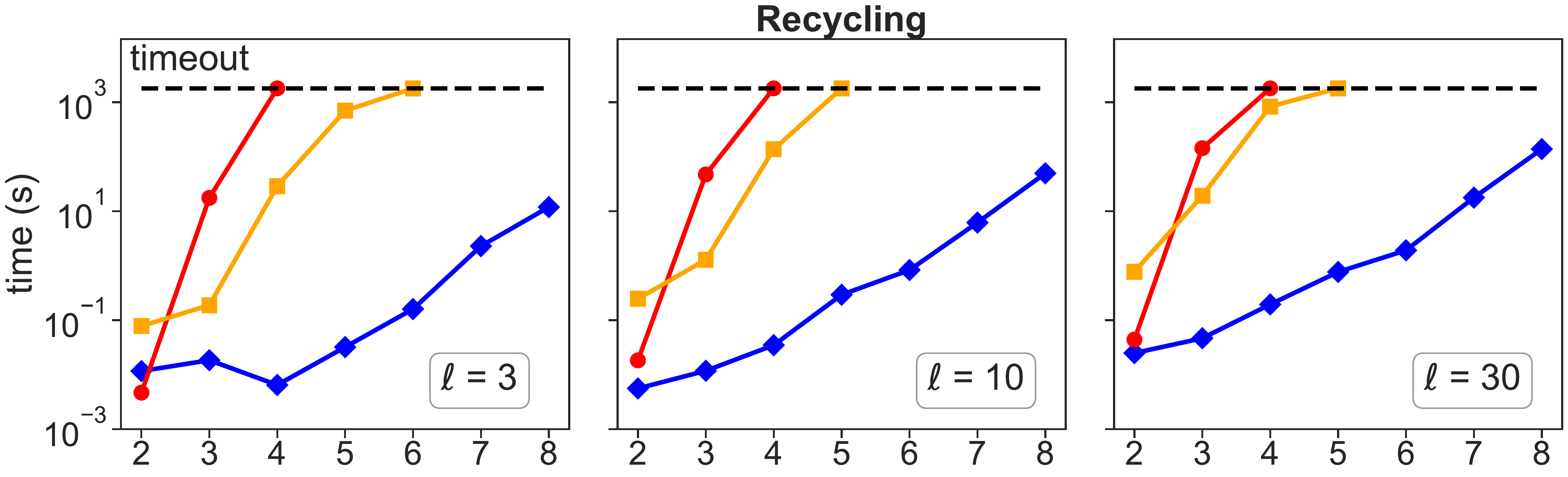}
    \caption{Average Backup Time for the recycling problem and different numbers of players.}
\end{figure}

\begin{figure}[!ht]	
    \centering
	\label{fig:backupTimeN:mabc}
    \includegraphics[width=.8\textwidth]{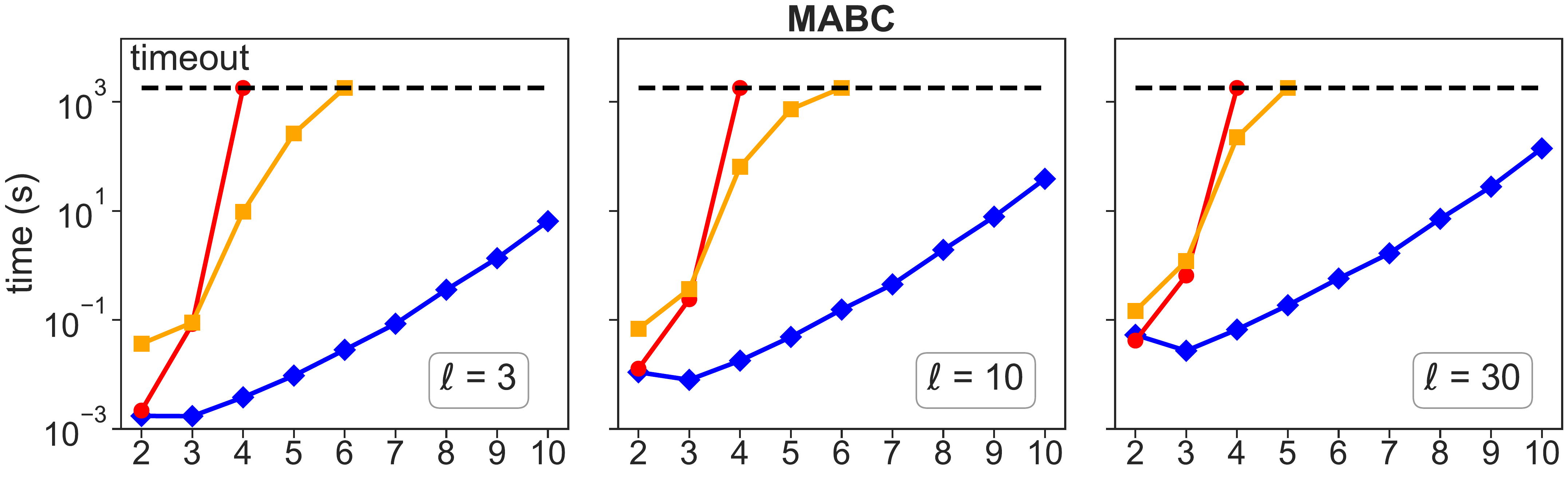}
    \caption{Average Backup Time for the mabc problem and different numbers of players.}
\end{figure}

\begin{figure}[!ht]
    \centering
	\label{fig:backupTimeN:grid3x3}
    \includegraphics[width=.8\textwidth]{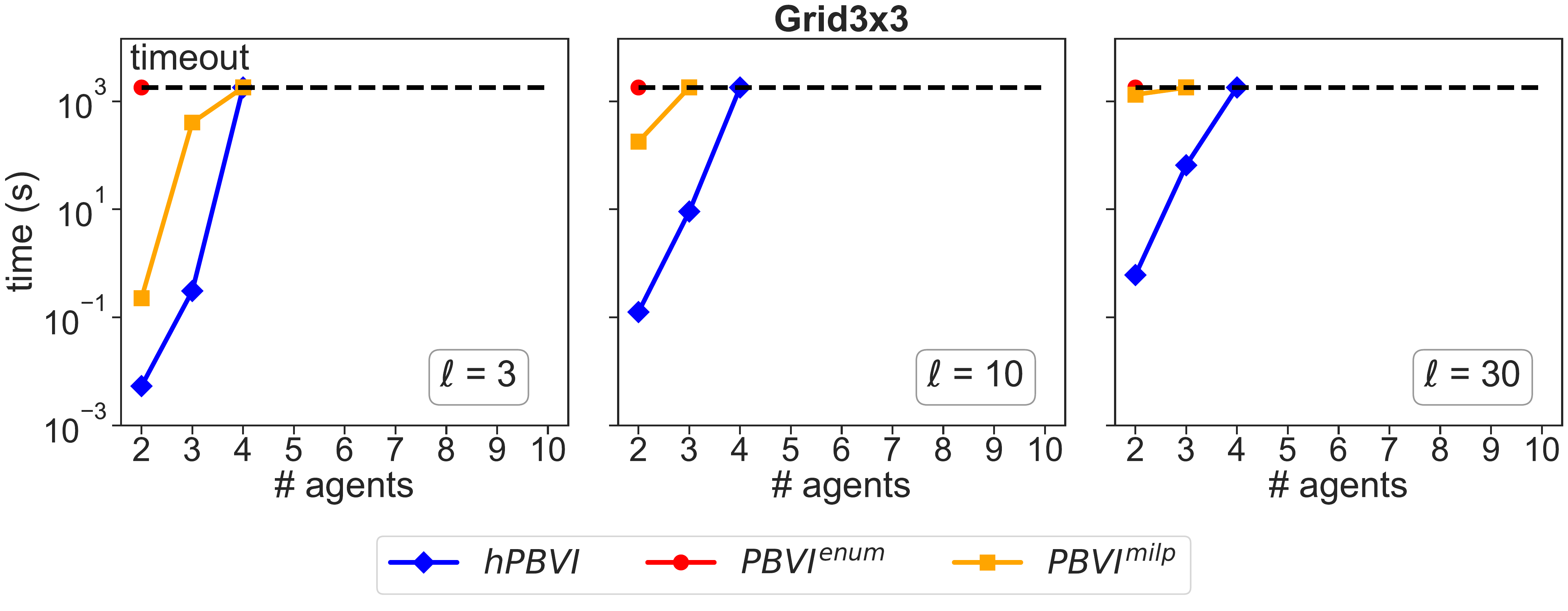}    
    \caption{Average Backup Time for the grid3x3 problem and different numbers of players.}
\end{figure}

\subsection{Average Backup Time for Increasing Horizons}
\label{subsec:abt:horizon}

This section investigates the average computational time required to perform a single backup for increasing horizons, \cf Figures \ref{fig:backupTimeH_tiger}, \ref{fig:backupTimeH_recycling}, \ref{fig:backupTimeH_mabc}, and \ref{fig:backupTimeH_grid3x3}. The experiments show once again that on all tested benchmarks,  hPBVI exhibits an exponential drop in time complexity compared to the other variants.  However, all three variants of the PBVI algorithm exhibit an increase in time complexity with respect to the planning horizon. This increase in time complexity is expected since, as time goes the number of backups also increases.

\begin{figure}[!ht]
    \includegraphics[width=1\textwidth]{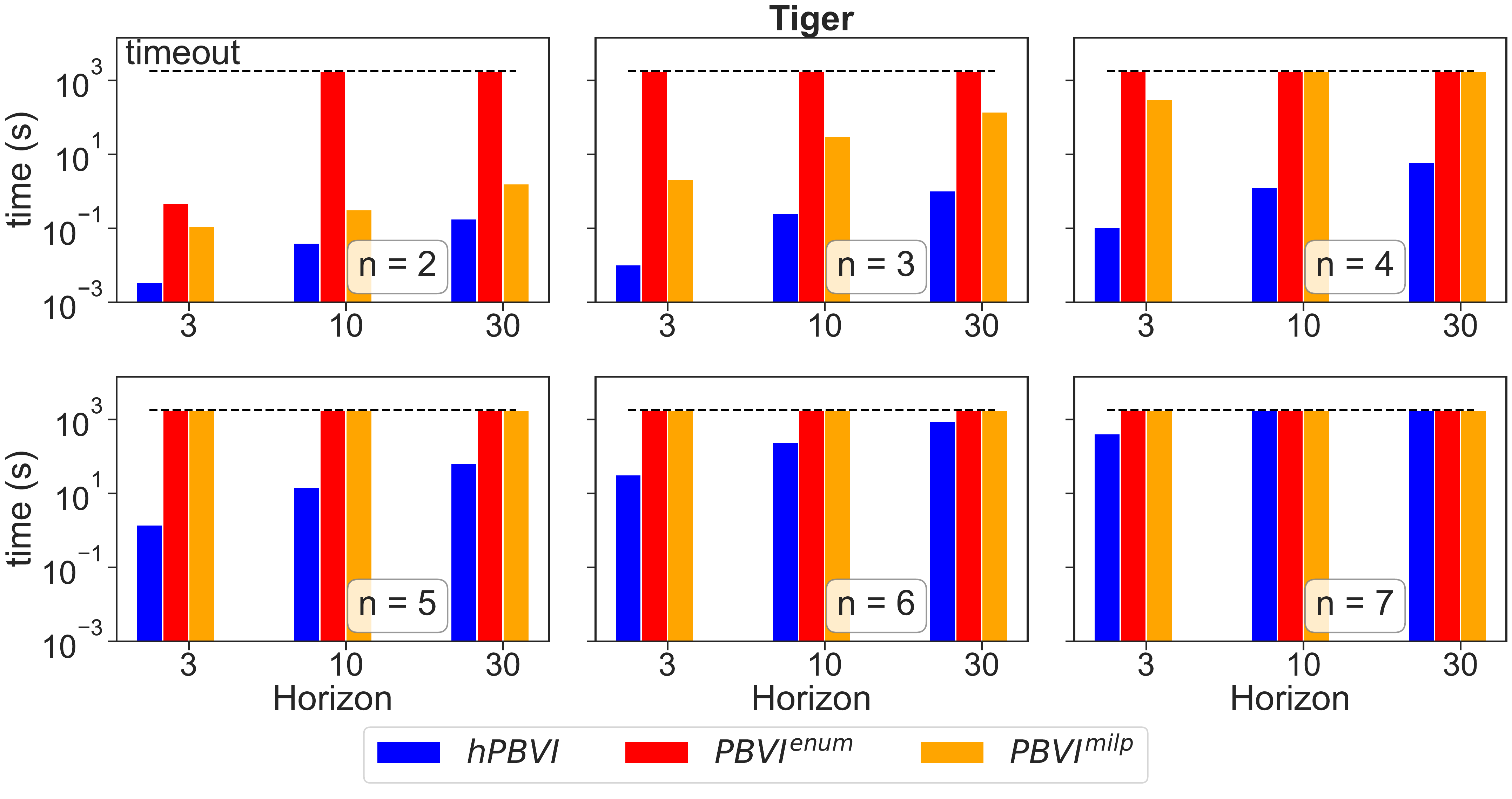}
    \label{fig:backupTimeH_tiger}
    \caption{Average backup time as a function of planning horizons for Tiger.}
\end{figure}

\begin{figure}[!ht]
    \includegraphics[width=1\textwidth]{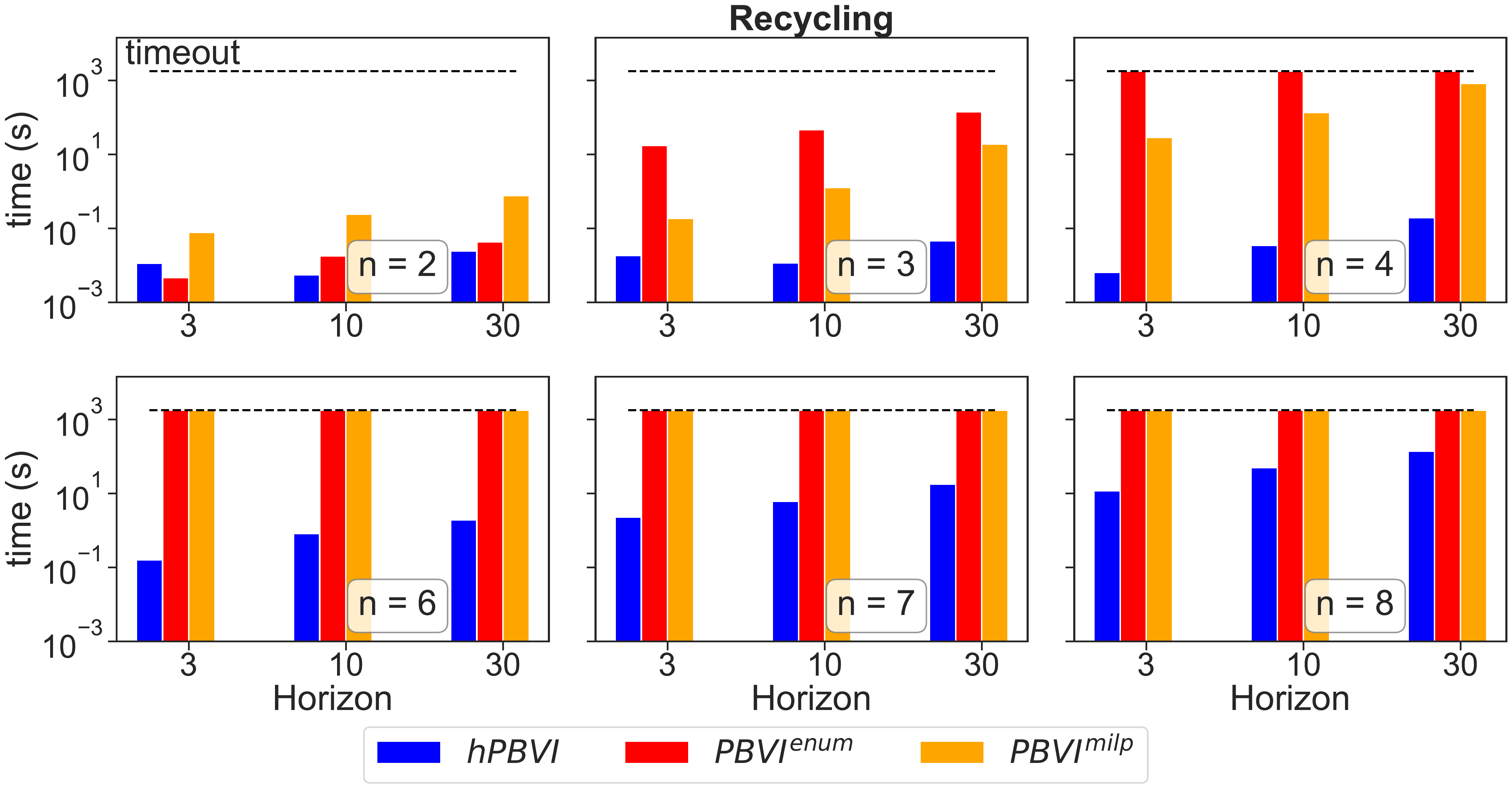}
    \label{fig:backupTimeH_recycling}
    \caption{Average backup time as a function of planning horizons for Recycling.}
\end{figure}

\begin{figure}[!ht]
    \includegraphics[width=1\textwidth]{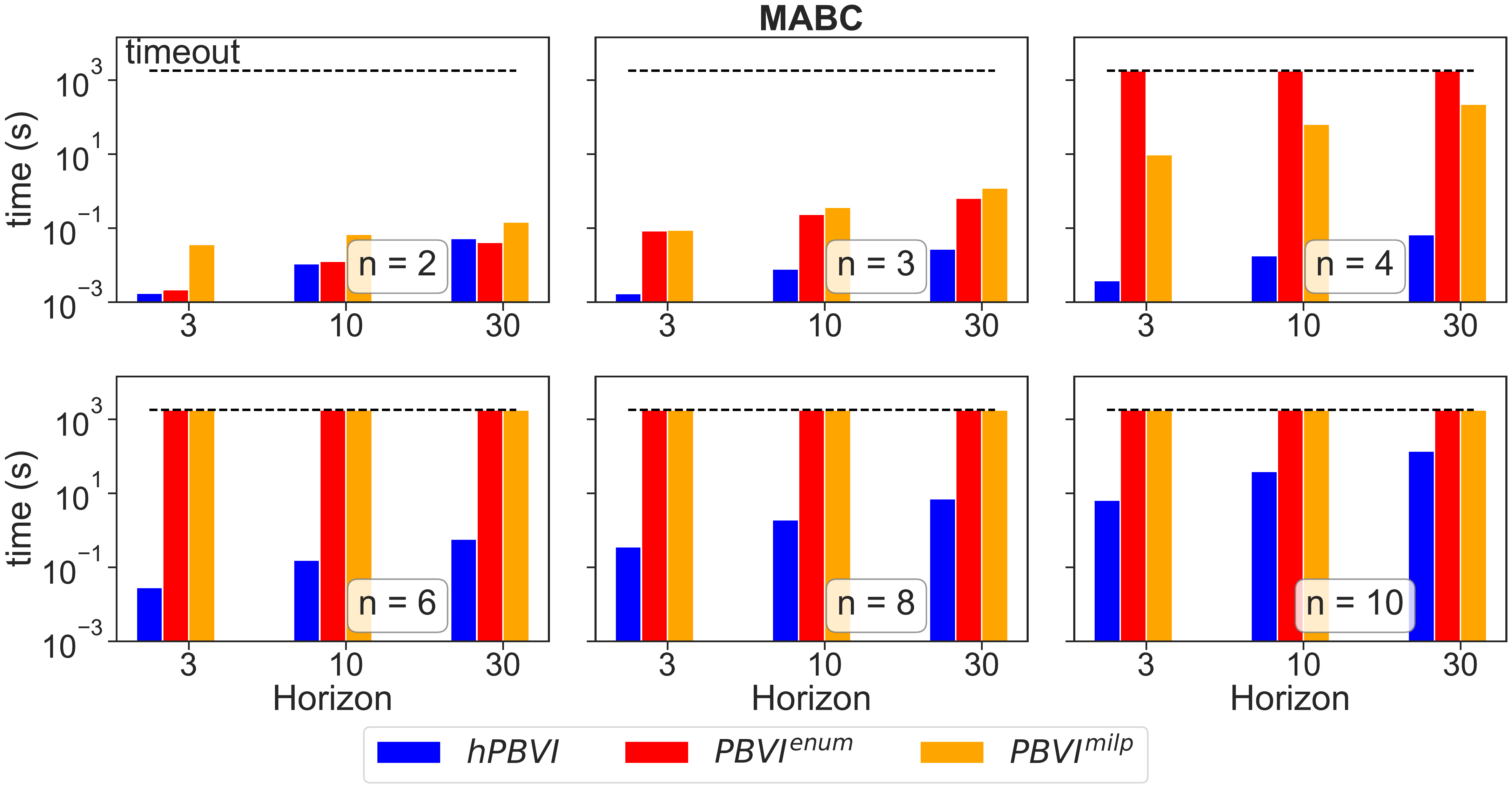}
    \caption{Average backup time as a function of planning horizons for MABC.}
    \label{fig:backupTimeH_mabc}
\end{figure}

\begin{figure}[!ht]
    \includegraphics[width=1\textwidth]{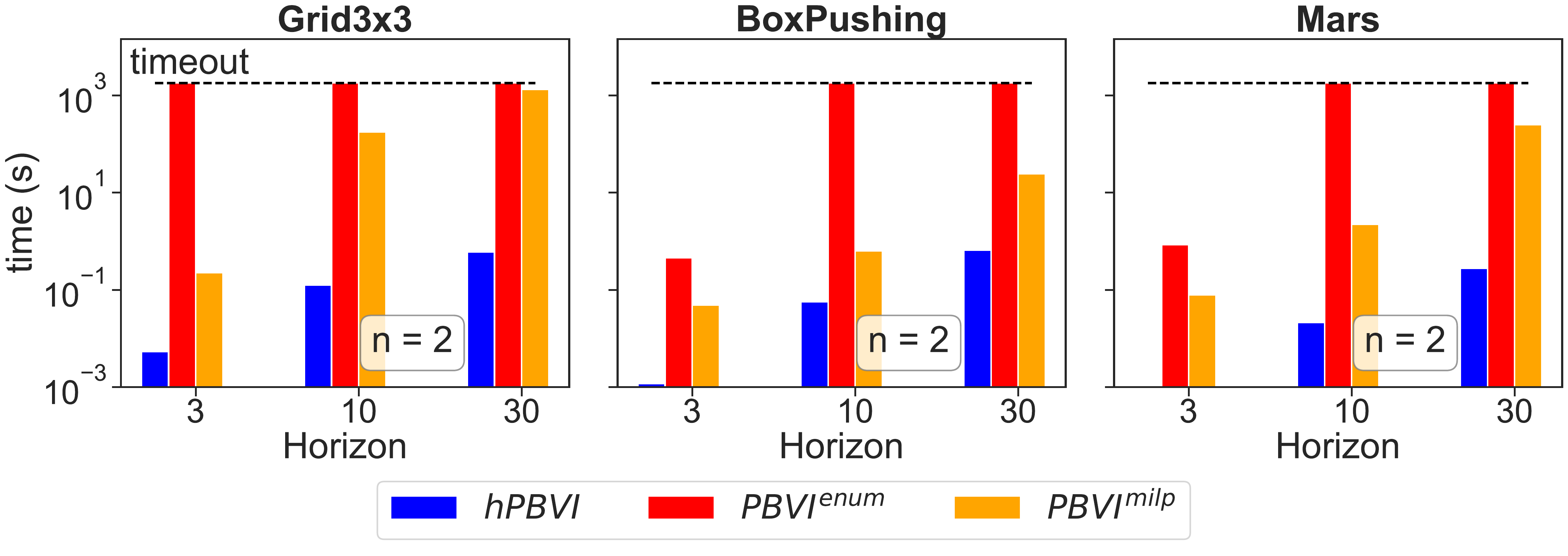}
    \caption{Average backup time as a function of planning horizons for Grid3x3, BoxPushing and Mars.}
    \label{fig:backupTimeH_grid3x3}
\end{figure}

\subsection{Against State-Of-The-Art Solvers}
\label{subsect:sota}

In this section, we compare our PBVI algorithm variants with two local algorithms, namely A2C and IQL, which are state-of-the-art and can handle a large number of players, as shown in Figures \ref{fig:anytimeCurves_tiger}, \ref{fig:anytimeCurves_recycling}, \ref{fig:anytimeCurves_mabc}, and \ref{fig:anytimeCurves_mabc}. However, these algorithms prioritize scalability over optimality and may get stuck in local optima. Our experiments demonstrate that hPBVI consistently outperforms all competitors in nearly all tested benchmarks in terms of convergence time and the value of the solution found within 30 minutes. In some weakly coupled domains, A2C and IQL find nearly optimal solutions close to those found by hPBVI. 

\begin{figure}[!ht]
	\includegraphics[width=1\textwidth]{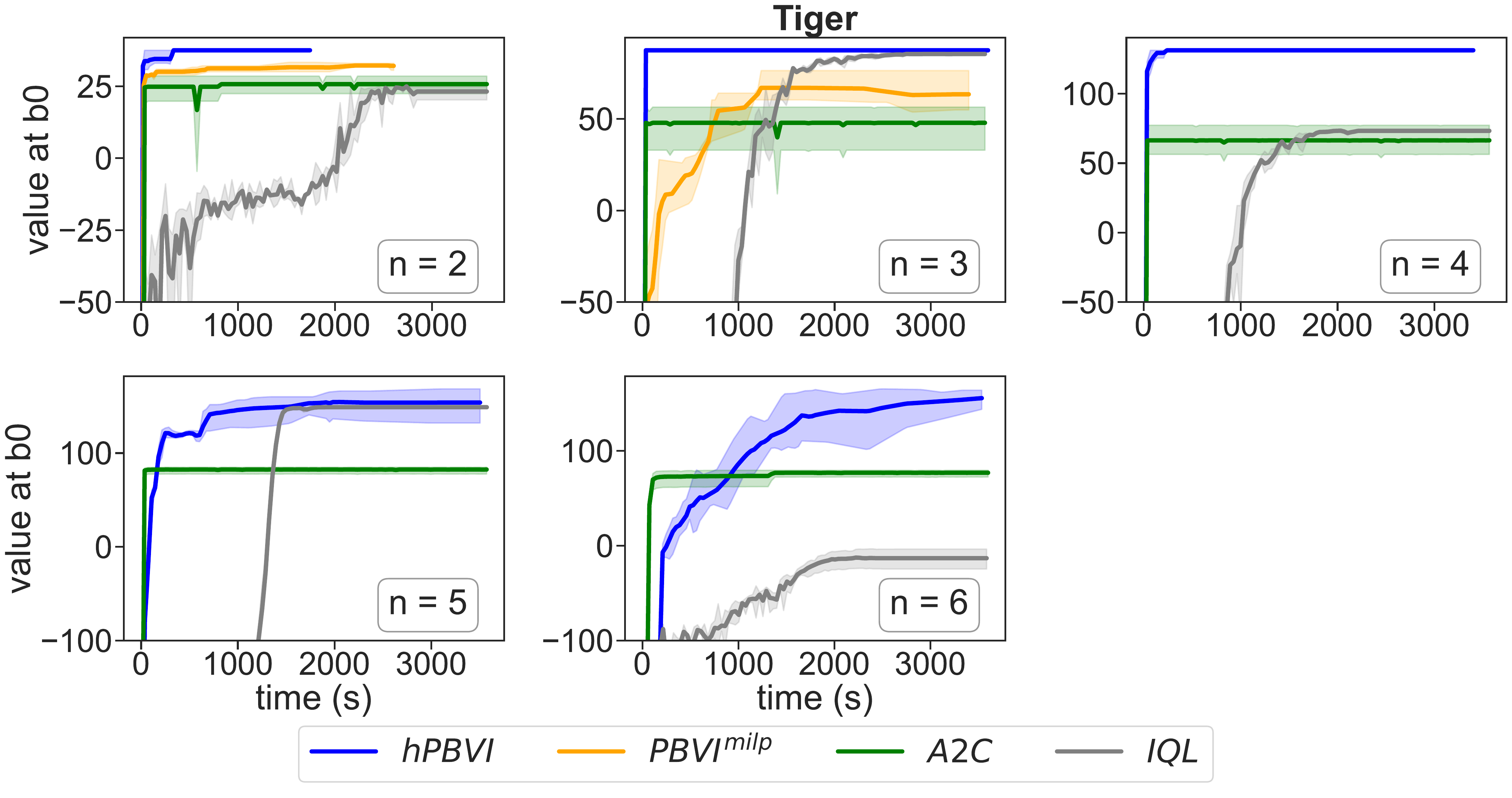}
	\caption{Anytime values for Tiger and $\ell = 30$.}
	\label{fig:anytimeCurves_tiger}
\end{figure}

\begin{figure}[!ht]
	\includegraphics[width=1\textwidth]{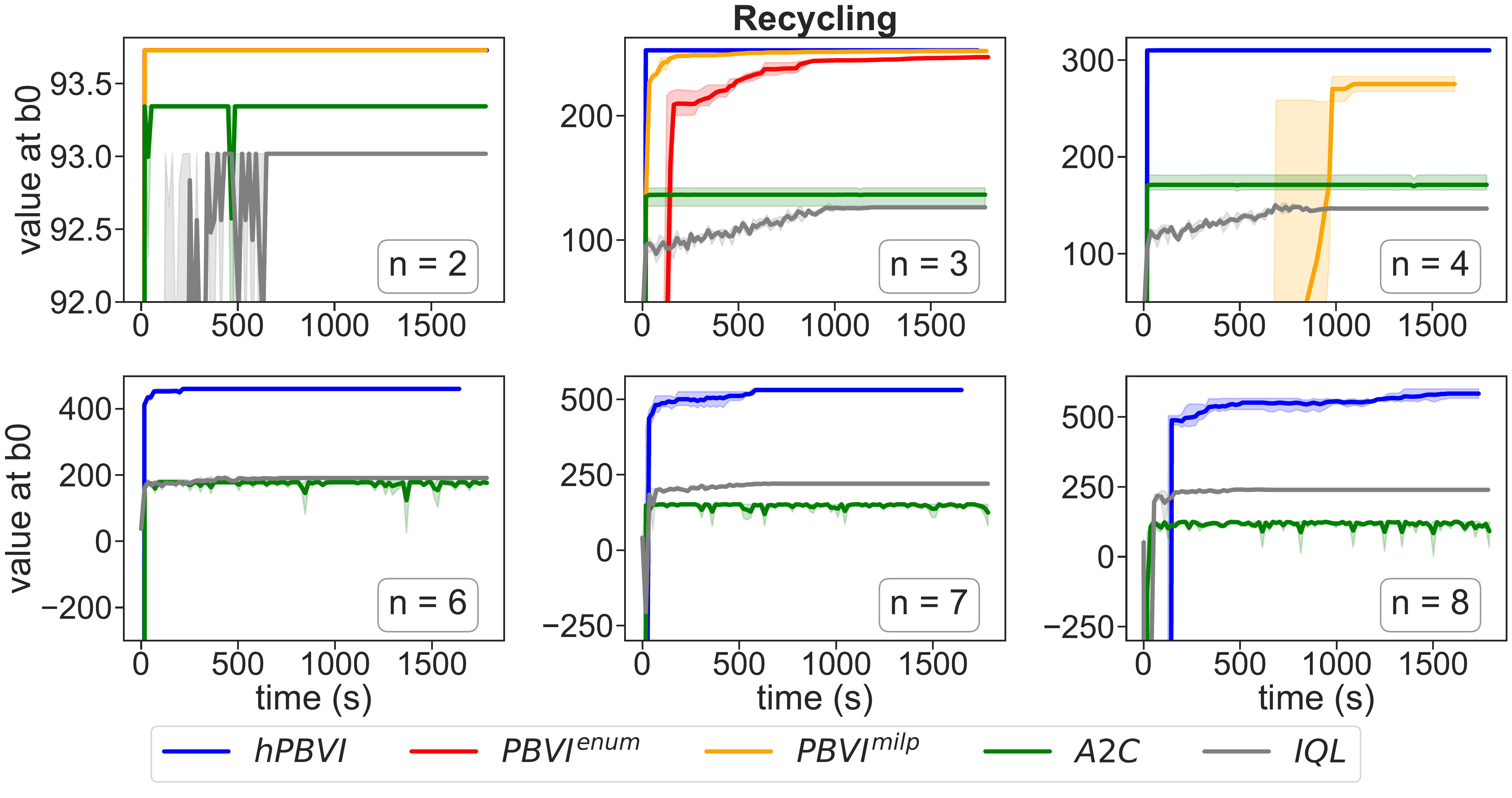}
	\caption{Anytime values for Recycling and $\ell = 30$.}
	\label{fig:anytimeCurves_recycling} 
\end{figure}

\begin{figure}[!ht]
	\includegraphics[width=1\textwidth]{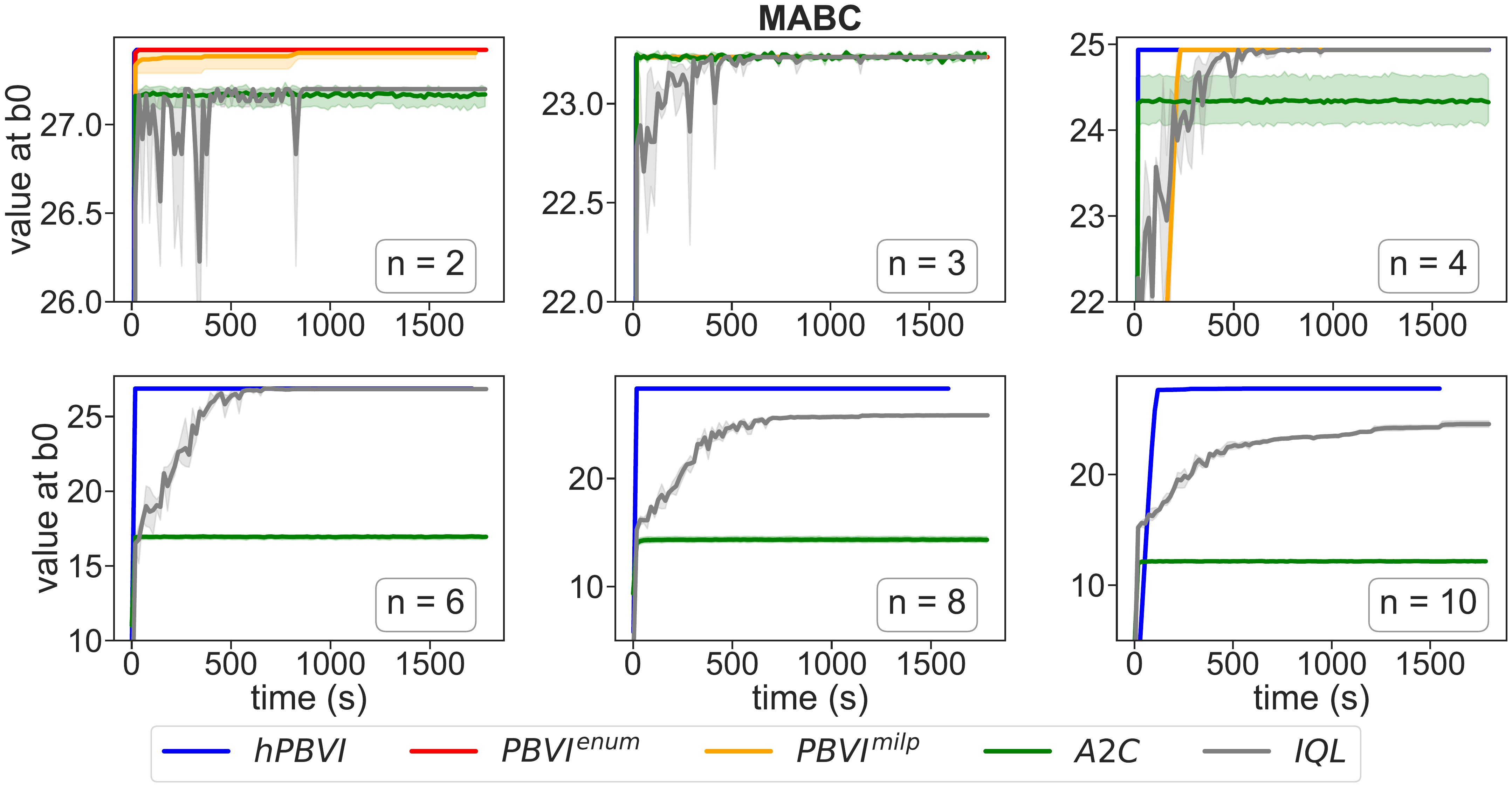}
	\caption{Anytime values for Multi-agent broadcast channel and $\ell = 30$.}
	\label{fig:anytimeCurves_mabc}
\end{figure}

\begin{figure}[!ht]
	\includegraphics[width=1\textwidth]{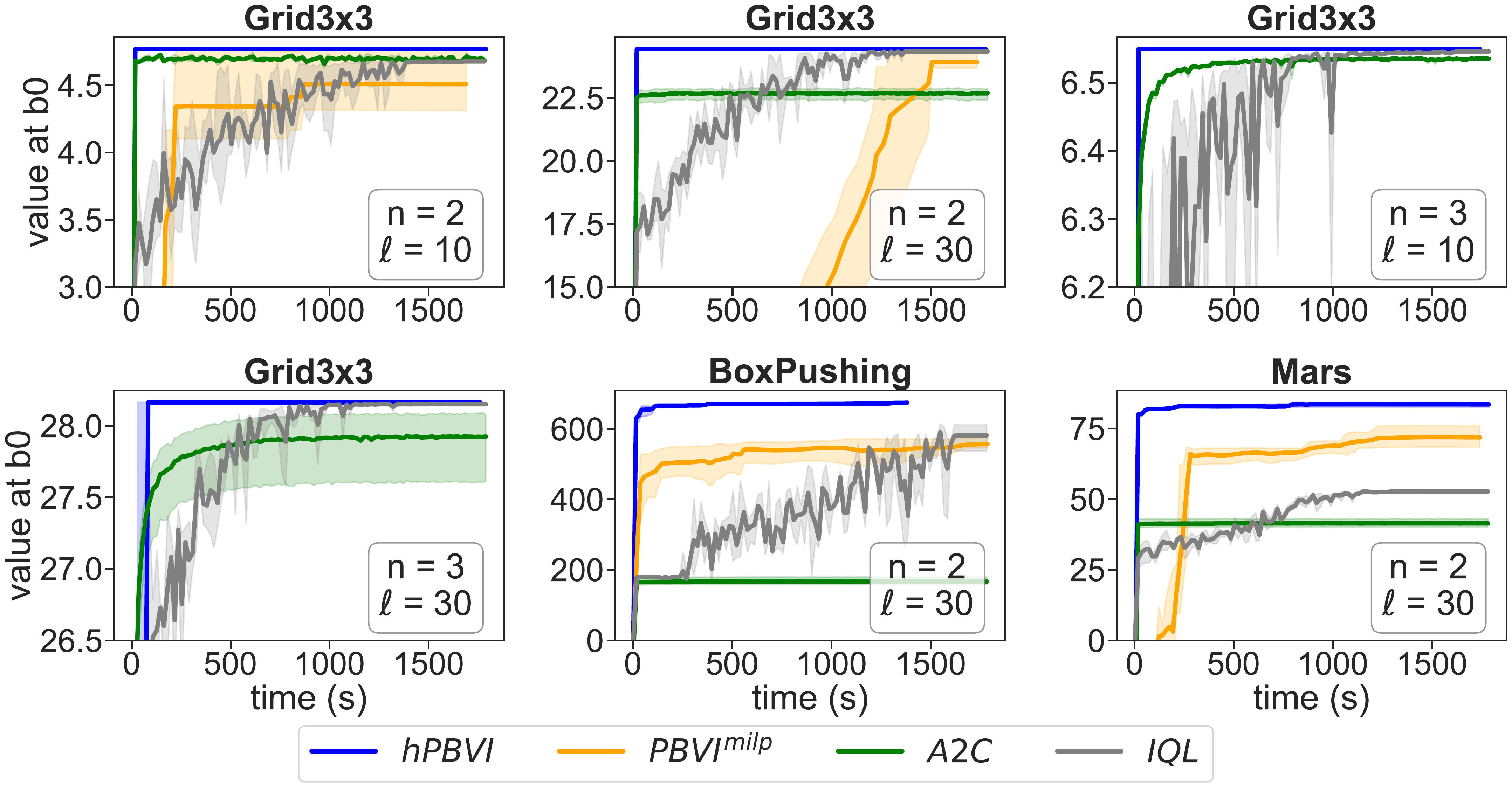}
	\caption{Anytime values for Grid3x3, BoxPushing and Mars.}
	\label{fig:anytimeCurves_grid3x3}
\end{figure}

%

\end{document}